% ToCL Revision
%
\documentclass[UKenglish,cleveref,numberwithinsect,thm-restate,a4paper,11pt]{article}
\usepackage[utf8]{inputenc}
\usepackage{a4wide}
\usepackage{hyperref}
\hypersetup{colorlinks=true,citecolor=blue,linkcolor=blue,urlcolor=blue}
\usepackage{amsmath}
\usepackage{amssymb}
\usepackage{amsthm}
\usepackage{thmtools, thm-restate}
\usepackage{enumerate}
\usepackage{pdfpages}
\usepackage[noadjust]{cite}

\usepackage[UKenglish]{babel}
\usepackage[UKenglish]{isodate}

\usepackage{comment}
\usepackage{graphicx}
\usepackage{blkarray}
\usepackage{tikz, tikz-cd}
\usetikzlibrary{matrix}
\usetikzlibrary{shapes}
\usetikzlibrary{arrows,decorations.markings, tikzmark}
\usetikzlibrary{decorations.pathreplacing,calligraphy,backgrounds}
\usetikzlibrary{arrows.meta,arrows}
\usepackage{xcolor}
\usepackage{mathtools}
\usepackage{stmaryrd}
\usepackage{tabularx}
\usepackage{booktabs}
\usepackage{multirow}
\usepackage{algorithm}
\usepackage{algpseudocode}
\usepackage{svg}

\theoremstyle{plain}
\newtheorem{theorem}{Theorem}
\newtheorem*{theorem*}{Theorem}
\newtheorem{corollary}[theorem]{Corollary}
\newtheorem{lemma}[theorem]{Lemma}
\newtheorem{observation}[theorem]{Observation}

\newtheorem{conj}[theorem]{Conjecture}

\theoremstyle{definition}
\newtheorem{definition}[theorem]{Definition}
\newtheorem{remark}[theorem]{Remark}
\newtheorem{example}[theorem]{Example}

\newcommand{\homsprob}{\textsc{CQ}}
\newcommand{\Homs}[2]{\mathsf{Hom}(#1 \to #2)}
\newcommand{\homs}[2]{\mathsf{hom}(#1 \to #2)}

\newcommand{\conjunctfull}[1]{\boldsymbol{\wedge}\left(#1\right)} 
\newcommand{\conjunctclass}[1]{\conjunctfull{#1}}
\newcommand{\conjuncts}[2]{\conjunctfull{#1\vert_{#2}}}
\newcommand{\cequiv}{\sim}

\newcommand{\contract}[1]{\mathsf{contract}(#1)}
\newcommand{\cequivalent}{\#\text{equivalent}}
\newcommand{\cminimal}{\#\text{minimal}}
\newcommand{\ccore}{\#\text{core}}
\newcommand{\ccores}{\#\text{cores}}

\newcommand{\tw}{\mathsf{tw}}

\newcommand{\ccStyle}[1]{\mathrm{#1}}

\newcommand{\ccW}[1]{\ccStyle{W[#1]}}
\newcommand{\ccSharpW}[1]{\#\ccStyle{W[#1]}}

\newcommand{\Ans}[2]{\mathsf{Ans}(#1 \to #2)}
\newcommand{\ans}[2]{\mathsf{ans}(#1 \to #2)}

\newcommand{\fptred}{\leq^{\mathsf{FPT}}}

\newcommand{\ucq}{\textsc{UCQ}}
\newcommand{\meta}{\textsc{Meta}}

\newcommand{\calA}{\mathcal{A}}
\newcommand{\calB}{\mathcal{B}}
\newcommand{\calD}{\mathcal{D}}

\newcommand{\abs}[1]{\left\vert #1 \right\vert}

\title{Counting Answers to Unions of Conjunctive Queries: Natural Tractability Criteria and Meta-Complexity\thanks{For the purpose of Open Access, the
authors have applied a CC BY public copyright licence to any Author Accepted Manuscript version arising
from this submission. All data is provided in full in the results section of this paper. 
A previous version of this paper appeared in the proceedings of the ACM on Management of Data (PODS 2024).
Stanislav \v{Z}ivn\'{y} was supported by UKRI EP/X024431/1.}
}

\author{
Jacob Focke \\ \small{CISPA Helmholtz Center for Information Security} \\ \small{Saarbrücken}\\ \small{Germany}
\and Leslie Ann Goldberg \\ \small{Department of Computer Science} \\ \small{University of Oxford}\\ \small{United Kingdom} \and Marc Roth \\ \small{School of Electronic Engineering and Computer Science}\\ \small{Queen Mary University of London}\\ \small{United Kingdom}  \and 
Stanislav {\v{Z}}ivn{\'y} \\ \small{Department of Computer Science} \\ \small{University of Oxford}\\ \small{United Kingdom}
}

\date{}

\begin{document}

\maketitle

\begin{abstract}
We study the problem of counting answers to unions of conjunctive queries (UCQs) under structural restrictions on the input query. Concretely, given a class $C$ of UCQs, the problem $\#\textsc{UCQ}(C)$ 
provides as input a UCQ $\Psi \in C$ and a database $\mathcal{D}$ and the problem is to compute the number of answers of $\Psi$ in $\mathcal{D}$.
    
Chen and Mengel [PODS'16] have shown that for any recursively enumerable class~$C$, the problem $\#\textsc{UCQ}(C)$ is either fixed-parameter tractable or hard for 
one of the parameterised complexity classes $\mathrm{W}[1]$ or $\#\mathrm{W}[1]$. However, their tractability criterion is unwieldy in the sense that, given any concrete class $C$ of UCQs, 
it is not easy to determine how hard it is to count answers to queries in $C$. Moreover, given a 
single specific UCQ $\Psi$,  
it is not easy to determine how hard it is to count answers to~$\Psi$.

In this work, we address the question of finding a \emph{natural} tractability criterion: The combined conjunctive query of a UCQ $\Psi=\varphi_1 \vee \dots \vee \varphi_\ell$ is the conjunctive query $\conjunctfull{\Psi}= \varphi_1 \wedge \dots \wedge \varphi_\ell$. We show that under natural closure properties of $C$, the problem $\#\textsc{UCQ}(C)$ is fixed-parameter tractable if and only if the combined conjunctive queries of UCQs in~$C$, and their contracts, have bounded treewidth. A contract of a conjunctive query is an augmented structure, taking into account how the quantified variables are connected to the free variables
--- if all variables are free, then a conjunctive query is equal to its contract; in this special case the criterion for fixed-parameter tractability of $\#\textsc{UCQ}(C)$ thus simplifies to the combined queries having bounded treewidth.

Finally, we give evidence that a closure property on $C$ is necessary for obtaining a natural tractability criterion: We show that even for a single UCQ $\Psi$, the meta problem of deciding whether $\#\textsc{UCQ}(\{\Psi\})$ can be solved in time $O(|\mathcal{D}|^d)$ is  $\mathrm{NP}$-hard for any fixed $d\geq 1$. Moreover, we prove that a known exponential-time algorithm for solving the meta problem is optimal under assumptions from fine-grained complexity theory. As a corollary of our reduction, we also establish that approximating the Weisfeiler-Leman-Dimension of a UCQ is $\mathrm{NP}$-hard.
\end{abstract}

\section{Introduction}\label{sec:intro}

Conjunctive queries are among the most fundamental and well-studied objects in database theory~\cite{ChandraM77,Yannakakis81,AbiteboulHV95,KolaitisV95,KolaitisV00,Vardi00}. A conjunctive query (CQ) $\varphi$ with free variables $X=\{x_1, \ldots, x_k\}$ and quantified variables $Y=\{y_1, \ldots, y_d\}$ is of the form
\[
\varphi(X) =\exists Y\, R_1(\mathbf{t}_1) \wedge \ldots \wedge R_n(\mathbf{t}_n),
\]
where $R_1, \ldots, R_n$ are relational symbols and each $\mathbf{t}_i$ is a tuple of variables from $X\cup Y$. A database $\calD$ consists of a set of elements $U(\calD)$, denoted the \emph{universe} of $\calD$, and a set of relations over this universe. The corresponding relation symbols are the \emph{signature} of $\calD$. If $R_1, \dots, R_n$ are in the signature of $\calD$ then an \emph{answer} of $\varphi$ in $\calD$ is an assignment $a\colon X\to U(\calD)$ that has an extension to the existentially quantified variables $Y$ that agrees with all the relations $R_1,\ldots, R_n$. Even more expressive is a union of conjunctive queries (UCQ). Such a union is of the form
\[
\Psi(X) = \varphi_1(X) \vee \ldots \vee \varphi_\ell(X),
\]
where each $\varphi_i(X)$ is a CQ with free variables $X$. An answer to $\Psi$ is then any assignment that is answer to at least one of the CQs in the union. 

Since evaluating a given CQ on a given database is NP-complete~\cite{ChandraM77} a lot of research focused on finding tractable classes of CQs. 
A fundamental result by Grohe, Schwentick, and Segoufin~\cite{GroheSS01} 
established that the tractability of evaluating all CQs of bounded arity whose Gaifman graph is in some class of graphs $C$ depends on whether or not the treewidth in~$C$ is bounded.

More generally, finding an answer to a conjunctive query can be cast as finding a (partial) homomorphism between relational structures, and therefore is closely related to the framework of constraint satisfaction problems. In this setting, Grohe~\cite{Grohe07} showed that treewidth modulo homomorphic equivalence is the right criterion for tractability. There is also an important line of work~\cite{Gottlob02:jcss-hypertree,Grohe14:talg,Marx10:talg} culminating in the fundamental work by Marx~\cite{Marx13:jacm} that investigates the parameterised complexity for classes of queries with unbounded arity. In general, tractability of conjunctive queries is closely related to how ``tree-like'' or close to acyclic they are.

Counting answers to CQs has also received significant attention in the past~\cite{DalmauJ04,PichlerS13,DurandM13,GrecoS14,ChenM16,DellRW19,ArenasNew}. Chen and Mengel~\cite{ChenM15} gave a complete classification for the counting problem on classes of CQs (with bounded arity) in terms of a natural criterion loosely based on treewidth. They present a trichotomy into fixed-parameter tractable, $\ccW{1}$-complete, and $\ccSharpW{1}$-hard cases. In subsequent work~\cite{ChenM16}, this classification was extended to unions of conjunctive queries (and to even more general queries in~\cite{DellRW19}). However, for UCQs, the established criteria for tractability and intractability are implicit (see~\cite[Theorems 3.1 and 3.2]{ChenM16}) in the sense that, given a specific UCQ $\Psi$, it is not at all clear how hard it is to count answers to $\Psi$ based on the criteria in~\cite{ChenM16}. To make this more precise: It is not even clear whether we can, in polynomial time in the size of $\Psi$, determine whether answers to $\Psi$ can be counted in linear time in the input database.

\subsection{Our contributions}
With the goal of establishing a more practical tractability criterion for counting answers to UCQs, we explore the following two main questions in this work:

\begin{quote}
Q1) Is there a natural criterion that captures the fixed-parameter tractability of counting answers to a class of UCQs, parameterised by the size of the query?
\end{quote}
\begin{quote}
Q2) Is there a natural criterion that captures whether counting answers to a single fixed UCQ is linear-time solvable (in the size of a given database)?
\end{quote}

\paragraph{Question Q1): Fixed-Parameter Tractability.}
For a class $C$ of UCQs, we consider the problem $\#\ucq(C)$ that takes as input a UCQ $\Psi$ from $C$ and a database $\calD$, and asks for the number $\ans{\Psi}{\calD}$ of answers of $\Psi$ in $\calD$.
We assume that the arity of the UCQs in $C$ is bounded, that is, there is constant $c$ such that each relation that appears in some query in $C$ has arity at most $c$. 
As explained earlier, due to a result of Chen and Mengel~\cite{ChenM16}, there is a known but rather unwieldy tractability criterion for $\#\ucq(C)$, when the problem is parameterised by the size of the query. On a high level, the number of answers of a UCQ $\Psi$ in a given database can be expressed as a finite linear combination of CQ answer counts, using the principle of inclusion-exclusion. This means that
\[
    \ans{\Psi}{\calD} = \sum_i c_i \cdot \ans{\varphi_i}{\calD},
\] 
where each $\varphi_i$ is simply a conjunctive query (and not a union thereof). We refer to this linear combination as the \emph{CQ expansion} of $\Psi$.
Chen and Mengel showed that the parameterised complexity of computing $\ans{\Psi}{\calD}$ is guided by the hardest term in the respective CQ expansion. The complexity of computing these terms is simply the complexity of counting the answers of a conjunctive query, and this is well understood~\cite{ChenM15}. Hence, the main challenge for this approach is to understand the linear combination, i.e., to understand for which CQs the corresponding coefficients are non-zero.
The problem is that the coefficients $c_i$ of these linear combinations are alternating sums, which in similar settings have been observed to encode algebraic and even topological invariants~\cite{RothS18}. This makes it highly non-trivial to determine which CQs actually contribute to the linear combination.
We introduce the concepts required to state this classification informally, the corresponding definitions are given in Section~\ref{sec:preliminaries}.
 
We first give more details about the result of~\cite{ChenM15}.
Let $\Gamma(C)$ be the class of those conjunctive queries that contribute to the CQ expansion of at least one UCQ in $C$, and that additionally are what we call $\cminimal$. Intuitively, a conjunctive query $\varphi$ is $\cminimal$ if there is no proper subquery $\varphi'$ of $\varphi$ that has the same number of answers as $\varphi$ in every given database. 
Then the tractability criterion depends on the treewidth of the CQs in $\Gamma(C)$. It also depends on the treewidth of the corresponding class $\contract{\Gamma(C)}$ of contracts (formally defined in Definition~\ref{def:contract}), which is an upper bound of what is called the ``star size'' in~\cite{DurandM13} and the ``dominating star size'' in~\cite{DellRW19}.
Here is the formal statement of the known dichotomy for $\#\ucq(C)$.

\begin{restatable}[\cite{ChenM16}]{theorem}{ChenMUCQ}
    \label{thm:implicit_UCQ_dicho}
Let $C$ be a recursively enumerable class of UCQs of bounded arity. If the treewidth of $\Gamma(C)$ and of $\contract{\Gamma(C)}$ is bounded, then $\#\ucq(C)$ is fixed-parameter tractable. Otherwise, $\#\ucq(C)$ is $\ccW{1}$-hard.
\end{restatable}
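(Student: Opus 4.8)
The plan is to follow the two‑step strategy underlying the Chen--Mengel classification~\cite{ChenM16}: first write $\ans{\Psi}{\calD}$ explicitly as a finite linear combination of answer counts of plain conjunctive queries (the CQ expansion), and then read off the complexity of this combination term by term from the known classification for counting answers to a single CQ~\cite{ChenM15}.

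For the algorithmic (fixed‑parameter tractable) direction, I would start from the inclusion--exclusion identity
\[
\ans{\Psi}{\calD} \;=\; \sum_{\emptyset\neq S\subseteq[\ell]} (-1)^{|S|+1}\,\ans{\bigwedge_{i\in S}\varphi_i}{\calD},
\]
observing that each conjunction $\bigwedge_{i\in S}\varphi_i$ is again a CQ with the same free variables $X$. I would then replace each such CQ by its $\cminimal$ representative, which has the same answer count in every database by definition, and collect the terms that share a representative. This yields $\ans{\Psi}{\calD}=\sum_\varphi c_\varphi\cdot\ans{\varphi}{\calD}$ over a finite set of pairwise non‑isomorphic $\cminimal$ CQs $\varphi$; by definition, every $\varphi$ with $c_\varphi\neq0$ lies in $\Gamma(C)$, and there are at most $2^{\ell}\le 2^{|\Psi|}$ terms. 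Since counting answers to a CQ whose Gaifman graph and whose contract both have bounded treewidth is fixed‑parameter tractable (the algorithmic half of the CQ classification, combining bounded‑treewidth CQ evaluation in the style of Dalmau--Jonsson~\cite{DalmauJ04} with a treatment of the quantified variables via the contract/star‑size bound of Durand--Mengel~\cite{DurandM13}), and since the expansion is computable from $\Psi$ alone, the whole sum is computable in fixed‑parameter tractable time under the stated treewidth hypotheses.

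For the hardness direction, suppose the treewidth of $\Gamma(C)$ or of $\contract{\Gamma(C)}$ is unbounded. Then, using recursive enumerability to locate the relevant queries, there is a sequence $(\Psi_n)\subseteq C$ together with, for each $n$, a $\cminimal$ CQ $\varphi_n$ with $c_{\varphi_n}\neq0$ in the expansion of $\Psi_n$ such that $\tw(\varphi_n)\to\infty$ (respectively the treewidth of $\contract{\varphi_n}$ tends to infinity). I would reduce from the problem of counting answers to the queries $(\varphi_n)$, which is $\ccW{1}$‑hard precisely under these width conditions by the hardness half of the CQ classification. The crux is to isolate the contribution of the single summand $\varphi_n$ from the linear combination computing $\ans{\Psi_n}{\cdot}$: evaluating $\Psi_n$ on a family of databases obtained from the given instance by standard blow‑up/tensor gadgets sets up a linear system whose coefficient matrix is invertible, so $\ans{\varphi_n}{\calD}$ can be recovered by interpolation from the values $\ans{\Psi_n}{\cdot}$, using crucially that $c_{\varphi_n}\neq0$ and that the remaining summands correspond to strictly simpler queries and so do not interfere.

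The main obstacle is controlling the coefficients $c_\varphi$: since they arise as alternating sums, heavy cancellation is possible, and one must guarantee both that the computationally hard CQ survives the collection step with a non‑zero coefficient --- this is exactly what the passage to $\cminimal$ representatives and the very definition of $\Gamma(C)$ are engineered to ensure --- and that the interpolation in the hardness reduction can separate this term from all the easier ones. Making the coefficient bookkeeping consistent requires the structural fact that every CQ is $\cequivalent$ to a $\cminimal$ query that is unique up to isomorphism, so that collecting terms by $\cminimal$ representative is well defined; verifying this, together with verifying that conjoining CQs and then reducing interacts correctly with treewidth, is the technical heart of the argument.
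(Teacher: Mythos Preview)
Your proposal is correct and matches the paper's approach: the paper derives this dichotomy as an immediate consequence of the interreducibility $\#\ucq(C)\equiv^{\mathsf{FPT}}\#\homsprob(\Gamma(C))$ (Corollary~\ref{cor:UCQ_monotone_dichotomy}, proved via the tensor-product interpolation you describe) combined with the CQ classification of Chen and Mengel (Theorem~\ref{thm:cq_classification}). Your identification of the two main ingredients---inclusion--exclusion to produce the CQ expansion, and tensor products to invert the linear system and isolate a single surviving $\cminimal$ term---is exactly what the paper does in Section~\ref{sec:complexitymon}.
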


We investigate under which conditions this dichotomy can be simplified. We show that for large classes of UCQs there is actually a much more natural tractability criterion that does not rely on $\Gamma(C)$, i.e., here the computation of the coefficients of the linear combinations as well as the concept of \#minimality do not play a role.
We first show a simpler classification for UCQs without existential quantifiers. 
To state the results we require some additional definitions: The \emph{combined query} $\conjunctclass{\Psi}$ of a UCQ $\Psi(X)=\varphi_1(X) \vee \dots \vee \varphi_\ell(X)$ is the conjunctive query obtained from $\Psi$ by replacing each disjunction by a conjunction, that is
\[\conjunctclass{\Psi} = \varphi_1(X) \wedge \dots \wedge \varphi_\ell(X)\,.\]
Given a class of UCQs $C$, we set $\conjunctclass{C}= \{\conjunctclass{\Psi}\mid \Psi \in C \}$. 

It will turn out that the structure of the class of combined queries $\conjunctclass{C}$ determines the complexity of counting answers to UCQs in $C$, given that $C$ has the following natural closure property: We say that $C$ is \emph{closed under deletions} if, for all $\Psi(X)=\varphi_1(X) \vee \dots \vee \varphi_\ell(X)$ and for every $J\subseteq [\ell]$, the subquery $\bigvee_{j\in J}\varphi_j(X)$ is also contained in $C$. For example, any class of UCQs defined solely by the conjunctive queries admissible in the unions (such as unions of acyclic conjunctive queries) is closed under deletions. The following classification resolves the complexity of counting answers to UCQs in classes that are closed under deletions; we will see later that the closedness condition is necessary. Moreover, the tractability criterion depends solely on the structure of the combined query, and not on the terms in the CQ expansion, thus yielding, as desired, a much more concise and natural characterisation. As mentioned earlier, we first state the classification for quantifier-free UCQs.

\begin{restatable}{theorem}{maindichosimple}\label{thm:main_dicho_quantifier_free}
Let $C$ be recursively enumerable class of quantifier-free UCQs of bounded arity. If $\conjunctclass{C}$ has bounded treewidth then $\#\ucq(C)$ is fixed-parameter tractable. 
If $\conjunctclass{C}$ has unbounded treewidth  and $C$ is closed under deletions then $\#\ucq(C)$ is $\ccW{1}$-hard.
\end{restatable}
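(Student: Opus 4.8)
The plan is to derive both directions from the implicit dichotomy of Theorem~\ref{thm:implicit_UCQ_dicho}, by pinning down, in the quantifier-free case, the treewidth of the class $\Gamma(C)$ of contributing $\cminimal$ conjunctive queries and of its class of contracts. Two features of quantifier-free CQs will drive this. First, a quantifier-free CQ has no existentially quantified variables, so it equals its own contract; hence $\contract{\Gamma(C)}=\Gamma(C)$ and only the treewidth of $\Gamma(C)$ matters. Second, associating with a quantifier-free CQ $\varphi(X)$ the relational structure $A_\varphi$ on universe $X$ whose relations are given by the atoms, an answer of $\varphi$ in $\calD$ is exactly a homomorphism $A_\varphi\to\calD$; since the map $G\mapsto\homs{A}{G}$ determines $A$ up to isomorphism, two quantifier-free CQs are counting-equivalent if and only if their structures are isomorphic after discarding duplicate atoms. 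In particular I will use that taking the $\cminimal$ representative of a quantifier-free CQ leaves its treewidth unchanged (it only removes duplicate atoms, which does not affect the Gaifman graph), and that a quantifier-free CQ of large treewidth is never counting-equivalent to one of small treewidth. Throughout I will use the inclusion--exclusion form $\ans{\Psi}{\calD}=\sum_{\emptyset\ne I\subseteq[\ell]}(-1)^{|I|+1}\ans{\bigwedge_{i\in I}\varphi_i}{\calD}$ of the CQ expansion of $\Psi=\varphi_1\vee\dots\vee\varphi_\ell$, noting that each term $\bigwedge_{i\in I}\varphi_i$ has atom set contained in that of $\conjunctclass{\Psi}$, hence Gaifman graph a subgraph of that of $\conjunctclass{\Psi}$.

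For the tractability direction I would argue as follows. If every $\conjunctclass{\Psi}$ with $\Psi\in C$ has treewidth at most $w$, then by the above every CQ occurring in any CQ expansion of a UCQ in $C$ has treewidth at most $w$, and passing to its $\cminimal$ representative and to its (trivial) contract does not increase this. Hence $\Gamma(C)$ and $\contract{\Gamma(C)}=\Gamma(C)$ have treewidth at most $w$, and Theorem~\ref{thm:implicit_UCQ_dicho} yields that $\#\ucq(C)$ is fixed-parameter tractable. (One can also see this directly: the expansion of $\Psi$ has at most $2^{|\Psi|}$ terms, each a quantifier-free CQ of treewidth at most $w$, and answers to such a CQ --- i.e.\ homomorphisms into $\calD$ --- can be counted in time $|\calD|^{O(w)}$ by dynamic programming along a width-$w$ tree decomposition, so $\ans{\Psi}{\calD}$ is computable in time $f(|\Psi|)\cdot|\calD|^{O(w)}$.)

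For the hardness direction, assume $\conjunctclass{C}$ has unbounded treewidth and $C$ is closed under deletions. Fix $w$ and pick $\Psi=\varphi_1\vee\dots\vee\varphi_\ell\in C$ with $\tw(\conjunctclass{\Psi})>w$. I would take $J\subseteq[\ell]$ inclusion-minimal subject to $\bigwedge_{j\in J}\varphi_j$ being counting-equivalent to $\conjunctclass{\Psi}$ (such a $J$ exists since $J=[\ell]$ works), and set $\Psi':=\bigvee_{j\in J}\varphi_j$, which lies in $C$ by closure under deletions. In the CQ expansion $\ans{\Psi'}{\calD}=\sum_{\emptyset\ne I\subseteq J}(-1)^{|I|+1}\ans{\bigwedge_{i\in I}\varphi_i}{\calD}$ I would collect the terms by counting-equivalence class: if $I\subseteq J$ and $\bigwedge_{i\in I}\varphi_i$ is counting-equivalent to $\conjunctclass{\Psi}$, then minimality of $J$ forces $I=J$, so the class of $\conjunctclass{\Psi}$ survives with coefficient $(-1)^{|J|+1}\ne 0$. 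Thus the $\cminimal$ representative of $\conjunctclass{\Psi}$ lies in $\Gamma(C)$ and has treewidth $>w$. Since $w$ was arbitrary, $\Gamma(C)$, and hence $\contract{\Gamma(C)}=\Gamma(C)$, has unbounded treewidth, and Theorem~\ref{thm:implicit_UCQ_dicho} gives that $\#\ucq(C)$ is $\ccW{1}$-hard.

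The main obstacle I anticipate is the ``no cancellation'' step in the hardness direction: for $\Psi$ itself the combined query $\conjunctclass{\Psi}$ may fail to contribute to the CQ expansion, because some term $\bigwedge_{i\in I}\varphi_i$ with $I\subsetneq[\ell]$ is already counting-equivalent to $\conjunctclass{\Psi}$ and causes cancellation. Closure under deletions is precisely what lets me retreat to the sub-UCQ $\Psi'$ built from an inclusion-minimal $J$, where no cancellation affecting the class of $\conjunctclass{\Psi}$ can occur; the quantified version of the theorem will require a more delicate form of this argument together with a genuine analysis of contracts. The remaining ingredients --- triviality of the contract for quantifier-free CQs, the identification of counting-equivalence with isomorphism up to duplicate atoms, and treewidth-monotonicity under taking sub-conjunctions --- should be routine, modulo dispatching degenerate cases such as repeated or trivially-true disjuncts.
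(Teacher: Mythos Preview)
Your proposal is correct and follows essentially the same route as the paper. For tractability you bound the treewidth of every sub-conjunction by that of $\conjunctfull{\Psi}$ (the paper's Lemma~\ref{lem:UCQ_algo}); for hardness you pick an inclusion-minimal $J$ with $\conjuncts{\Psi}{J}$ counting-equivalent to $\conjunctfull{\Psi}$, pass to $\Psi|_J\in C$ by deletion-closure, and read off the coefficient $(-1)^{|J|+1}$ (the paper's Lemma~\ref{lem:main_dichotomy_quantifierfree}). One cosmetic remark: in the paper's framework a CQ is already a relational structure, so quantifier-free CQs are automatically $\cminimal$ and $\cequivalent$ coincides with isomorphism on the nose---your ``discard duplicate atoms'' step is unnecessary here, though harmless.
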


We emphasise here that Theorem~\ref{thm:main_dicho_quantifier_free} is in terms of the simpler object $\conjunctclass{C}$ instead of the complicated object $\Gamma(C)$.

If we allow UCQs with quantified variables in the class $C$ then the situation becomes more intricate.
Looking for a simple tractability criterion that describes the complexity of $\#\ucq(C)$ solely in terms of $\conjunctclass{C}$ requires some additional effort.
First, for a UCQ $\Psi$ that has quantified variables, $\contract{\Psi}$ is not necessarily the same as $\Psi$,  and therefore the treewidth of the contracts also plays a role. Moreover, the matching lower bound requires some conditions in addition to being closed under deletions. 
Nevertheless, our result is in terms of the simpler objects $\conjunctclass{C}$ and $\contract{\conjunctclass{C}}$ rather than the more complicated $\Gamma(C)$ and $\contract{\Gamma(C)}$.
For Theorem~\ref{thm:main_dicho_quantifiers}, recall that a conjunctive query is \emph{self-join-free} if each relation symbol occurs in at most one atom of the query. 

\begin{restatable}{theorem}{maindichogeneral}\label{thm:main_dicho_quantifiers}
Let $C$ be a recursively enumerable class of UCQs of bounded arity. If $\conjunctclass{C}$ and $\contract{\conjunctclass{C}}$ have bounded treewidth then $\#\ucq(C)$ is fixed-parameter tractable.  Otherwise, if  (I)--(III) are satisfed, then
$\#\ucq(C)$ is $\ccW{1}$-hard. 
\begin{enumerate}
    \item[(I)] $C$ is closed under deletions.
    \item[(II)] The number of existentially quantified variables of queries in $C$ is bounded.
    \item[(III)] The UCQs in $C$ are unions of self-join-free conjunctive queries.
\end{enumerate}
\end{restatable}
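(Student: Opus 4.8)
Both directions are derived from the Chen--Mengel dichotomy (Theorem~\ref{thm:implicit_UCQ_dicho}) by comparing the treewidth of $\Gamma(C)$ and of $\contract{\Gamma(C)}$ with that of $\conjunctclass{C}$ and $\contract{\conjunctclass{C}}$. The link is the inclusion--exclusion form of the CQ expansion: for a UCQ $\Psi = \varphi_1 \vee \dots \vee \varphi_\ell$ and nonempty $J\subseteq[\ell]$, let $\varphi_J$ denote the conjunctive query $\bigwedge_{j\in J}\varphi_j$ obtained by keeping the common free variables $X$ and renaming the quantified variables of the $\varphi_j$ apart. Then $\ans{\Psi}{\calD}=\sum_{\emptyset\neq J\subseteq[\ell]}(-1)^{\abs{J}+1}\ans{\varphi_J}{\calD}$ for every database $\calD$, since $\ans{\varphi_J}{\calD}$ counts exactly the assignments $X\to U(\calD)$ that answer $\varphi_j$ for all $j\in J$. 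Every $\varphi_J$ is a subquery of the combined query $\conjunctclass{\Psi}=\varphi_{[\ell]}$. Collecting the terms of this identity according to the isomorphism type of the $\cminimal$ form $\widehat{\varphi_J}$ of $\varphi_J$ (and noting that $\widehat{\varphi_J}$ may be taken to be a $\cminimal$ subquery of $\varphi_J$ with the same free variables) yields the CQ expansion of $\Psi$; hence every query in $\Gamma(C)$ is isomorphic to some $\widehat{\varphi_J}$ with $\Psi\in C$, and in particular to a $\cminimal$ subquery of $\conjunctclass{\Psi}$ on the free variables $X$.

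\textbf{Tractability.} Passing to a subquery deletes atoms, hence removes Gaifman edges, and it can only split connected components of quantified variables and shrink their free-variable neighbourhoods; consequently the contract of a subquery is a subgraph of the contract of the query. Therefore $\tw(\Gamma(C))\le\tw(\conjunctclass{C})$ and $\tw(\contract{\Gamma(C)})\le\tw(\contract{\conjunctclass{C}})$, so if $\conjunctclass{C}$ and $\contract{\conjunctclass{C}}$ have bounded treewidth then Theorem~\ref{thm:implicit_UCQ_dicho} gives that $\#\ucq(C)$ is fixed-parameter tractable. (One can argue this directly as well: each $\varphi_J$ has bounded treewidth and bounded contract-treewidth, so its answers can be counted by the algorithm of~\cite{DurandM13} in time $f(\Psi)\cdot\abs{\calD}^{O(1)}$ with the exponent depending only on the two bounds, and there are at most $2^{\abs{\Psi}}$ terms.)

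\textbf{Hardness, Step 1: treewidth survives reduction to the $\cminimal$ form.} Assume (I)--(III); by Theorem~\ref{thm:implicit_UCQ_dicho} it suffices to show that $\Gamma(C)$ or $\contract{\Gamma(C)}$ has unbounded treewidth. Suppose first that $\conjunctclass{C}$ has unbounded treewidth, let $q$ be the bound from~(II), fix $\Psi\in C$ with $t:=\tw(\conjunctclass{\Psi})$ arbitrarily large, and let $\widehat\varphi$ be the $\cminimal$ form of $\conjunctclass{\Psi}$, viewed as a $\cminimal$ subquery on $X$. The key observation is that $\widehat\varphi$ retains every Gaifman edge of $\conjunctclass{\Psi}$ between two free variables: if some free--free atom $R(x,y)$ of $\conjunctclass{\Psi}$ occurred (as a pair of relation symbol and variable tuple) in no atom of $\widehat\varphi$, then on a database where $X$ is mapped injectively, $R$ holds exactly the tuples forced by the $R$-atoms of $\widehat\varphi$, and every other relation is full, the chosen assignment answers $\widehat\varphi$ but not $\conjunctclass{\Psi}$, contradicting count-equivalence. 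Hence the Gaifman graph of $\widehat\varphi$ induced on $X$ equals that of $\conjunctclass{\Psi}$, and since deleting the at most $q$ quantified variables of $\conjunctclass{\Psi}$ lowers treewidth by at most $q$, we get $\tw(\widehat\varphi)\ge t-q$. When instead $\contract{\conjunctclass{C}}$ has unbounded treewidth one needs the analogous statement that $\tw(\contract{\widehat\varphi})$ stays large; this is where self-join-freeness~(III) is used essentially: all atoms incident to a quantified variable $y$ of $\conjunctclass{\Psi}$ lie in a single disjunct, so they carry pairwise distinct relation symbols, and a count-preserving deletion of such an atom is possible only if its effect is reproduced by another disjunct, so that the cliques the quantified components contribute to the contract shrink by at most a function of $q$ (invoking~(II) again). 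I expect this contract-treewidth version of Step~1 to be the main technical obstacle.

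\textbf{Hardness, Step 2: avoiding cancellation.} The query $\widehat\varphi$ now has large treewidth (respectively contract-treewidth), but it need not contribute to the CQ expansion of $\Psi$ itself, since its coefficient could cancel against $\cminimal$ forms of smaller conjunctions $\varphi_J$ isomorphic to $\widehat\varphi$. To remedy this, pick $J_0\subseteq[\ell]$ \emph{minimal} with $\widehat{\varphi_{J_0}}\cong\widehat\varphi$ (the set of such $J$ contains $[\ell]$). By~(I) the UCQ $\Psi_0:=\bigvee_{j\in J_0}\varphi_j$ belongs to $C$, and its CQ expansion arises from $\sum_{\emptyset\neq J\subseteq J_0}(-1)^{\abs{J}+1}\ans{\varphi_J}{\calD}$ by replacing each $\varphi_J$ with $\widehat{\varphi_J}$ and collecting terms, so the coefficient of $\widehat\varphi$ in it equals $\sum_{J\subseteq J_0\,:\,\widehat{\varphi_J}\cong\widehat\varphi}(-1)^{\abs{J}+1}$. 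By minimality of $J_0$ the only such $J$ is $J_0$, so this coefficient is $(-1)^{\abs{J_0}+1}\neq 0$. Hence $\widehat\varphi\in\Gamma(C)$, with treewidth (respectively contract-treewidth) at least the bound from Step~1. Letting $\tw(\conjunctclass{\Psi})$ (respectively $\tw(\contract{\conjunctclass{\Psi}})$) grow shows that $\Gamma(C)$ (respectively $\contract{\Gamma(C)}$) has unbounded treewidth, and Theorem~\ref{thm:implicit_UCQ_dicho} then yields $\ccW{1}$-hardness of $\#\ucq(C)$.
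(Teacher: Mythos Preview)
Your tractability argument and your treatment of the case ``$\conjunctclass{C}$ has unbounded treewidth'' are essentially the paper's proof: the paper isolates the preservation of free--free Gaifman edges under $\cequiv$ as a separate lemma (Lemma~\ref{lem:ccore_subraph}), and then uses the minimal-$J$ trick exactly as in your Step~2 to place the $\ccore$ of $\conjunctfull{\Psi}$ in $\Gamma(C)$ with treewidth at least $\tw(\conjunctfull{\Psi})-q$.

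The genuine gap is the contract-treewidth case, which you yourself flag as ``the main technical obstacle'' and leave as a sketch. Your plan is to show directly that $\tw(\contract{\widehat\varphi})$ stays large, but the sketch does not close: while it is true that all atoms touching a fixed quantified variable $y$ of $\conjunctfull{\Psi}$ carry pairwise distinct relation symbols (since $y$ lives in a single self-join-free disjunct), those \emph{symbols} may well appear in other disjuncts, so the $\ccore$ can drop the atoms at $y$ while still containing every symbol. Your ``effect is reproduced by another disjunct'' step therefore does not obviously force the clique on the free neighbours of $y$ to survive in $\contract{\widehat\varphi}$, and I do not see how to make this route work.

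The paper sidesteps this difficulty entirely by a case split on individual disjuncts rather than on the combined query. Let $\hat C$ be the set of single CQs occurring in some UCQ of $C$; by~(I) each such CQ (as a singleton UCQ) lies in $C$, and by~(III) together with Lemma~\ref{lem:self-join-free-counting-cores} each is its own $\ccore$. Now branch on the \emph{degree of freedom} in $\hat C$, i.e.\ the maximum number of free variables adjacent to a quantified variable. If it is unbounded, some $\varphi_i\in\hat C$ has a quantified vertex with arbitrarily many free neighbours, so $\contract{\varphi_i}$ contains arbitrarily large cliques; since $\varphi_i$ is $\cminimal$ and $c_{\varphi_i}(\varphi_i)=1$, this already shows $\contract{\Gamma(C)}$ has unbounded treewidth (equivalently, the paper invokes the CQ classification directly on $\hat C\subseteq C$). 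If the degree of freedom is bounded by some $d'$, then in any $\conjunctfull{\Psi}$ at most $q\cdot d'$ free variables are adjacent to a quantified one (using~(II)), so passing from $G[X]$ to $\contract{\conjunctfull{\Psi}}$ adds at most $\binom{qd'}{2}$ edges; hence large $\tw(\contract{\conjunctfull{\Psi}})$ forces large $\tw(G[X])\le\tw(\conjunctfull{\Psi})$, and you are back in the first case, which you have already handled. This reduction of the contract case to the treewidth case via the degree-of-freedom split is the idea missing from your proposal.
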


In Appendix~\ref{sec:appendix} we show that Theorem~\ref{thm:main_dicho_quantifiers} is tight in the sense that, if any of these conditions is dropped, there are counterexamples to the claim that tractability is guided solely by $\conjunctclass{C}$ and $\contract{\conjunctclass{C}}$.

\paragraph{Question Q2): Linear-Time Solvability.}

Now we turn to the question of linear-time solvability for a single fixed UCQ.
The huge size of databases in modern applications motivates the question of which query problems are actually linear-time solvable. Along these lines, there is a lot of research for enumeration problems~\cite{Yannakakis81,BaganDG07,CarmeliZBCKS22,BeraGLSS22,BerkholzGS20,BerkholzS19}. 

The question whether counting answers  to a conjunctive query $\varphi$ can be achieved in time linear in the given database has been studied previously~\cite{Mengel21}. 
The corresponding dichotomy is well known and was discovered multiple times by different authors in different contexts.\footnote{We remark that~\cite[Theorem 7]{BeraGLSS22} focuses on the special case of graphs and \emph{near} linear time algorithms. However, in the word RAM model with $O(\log n)$ bits, a linear time algorithm is possible~\cite{CarmeliZBCKS22}.}
In these results, the tractability criterion is whether $\varphi$ is \emph{acyclic}, i.e., whether it has a join tree (see~\cite{GottlobGS14}). The corresponding lower bounds are conditioned on a widely used complexity assumption from fine-grained complexity, namely the Triangle Conjecture. We define all of the complexity assumptions that we use in this work in Section~\ref{sec:preliminaries}. There we also formally define the size of a database (as the sum of the size of its signature, its universe, and its relations).

It is well-known that, counting answers to quantifier-free conjunctive queries can be done in linear time if and only if the query is acyclic. The ``only if'' part relies on hardness assumptions from fine-grained complexity theory. Concretely, we have

\begin{theorem}[See Theorem 12 in~\cite{BraultBaron13}, and \cite{BeraGLSS22,Mengel25arxiv}]\label{thm:CQ_dichointro}
Let $\varphi$ be a quantifier-free conjunctive query.
If $\varphi$ is acyclic, then the number of answers of $\varphi$ in a given database $\calD$ can be computed in time linear in the size of $\calD$. Otherwise, if $\varphi$ has arity $2$, then the number of answers cannot be computed in linear time in the size of $\calD$, unless \emph{the Triangle Conjecture} fails.
\end{theorem}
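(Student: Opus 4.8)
The plan is to prove the two directions separately, since they are of quite different character. For the positive direction (acyclic $\varphi$ implies linear-time counting), I would use the classical Yannakakis-style dynamic programming over a join tree. Concretely: fix a join tree $T$ of $\varphi$, root it arbitrarily, and process its bags from the leaves to the root. Since $\varphi$ is quantifier-free, every variable is free, so we are genuinely counting assignments $a\colon X\to U(\calD)$ satisfying all atoms. For each node $t$ of $T$ with associated relation $R_t(\mathbf{t}_t)$, we maintain a table of the tuples of $R_t^{\calD}$ together with a count of how many consistent extensions exist in the already-processed subtree. The key semijoin/aggregation step is: when merging a child $t'$ into its parent $t$, group the child's table by the variables shared between $\mathbf{t}_{t}$ and $\mathbf{t}_{t'}$ (the running intersection property of the join tree guarantees these are exactly the variables that matter), sum the child counts within each group, and multiply into the parent table. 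Each such step costs time linear in the sizes of the relations involved (using hashing / radix sort on tuples of bounded arity, in the word-RAM model, as in the footnote), and there are $O(|\varphi|) = O(1)$ nodes, so the total is $O(|\calD|)$. At the root, summing the final column gives $\ans{\varphi}{\calD}$. One mild subtlety to address is variables of $\varphi$ that do not occur in any atom (or disconnected components): these multiply the answer by $|U(\calD)|^{(\text{number of such variables})}$, which is handled trivially; and repeated variables within a single atom just restrict $R_t^{\calD}$ to its diagonal, a linear-time preprocessing step.

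For the negative direction, suppose $\varphi$ has arity $2$ and is \emph{not} acyclic. I would reduce triangle detection to counting answers of $\varphi$. The cleanest route is to invoke the known structural fact (used in all of~\cite{BraultBaron13,BeraGLSS22,Mengel25arxiv}) that a non-acyclic query contains a ``hole'', i.e.\ a chordless cycle, in its underlying (hyper)graph — for arity $2$ the Gaifman graph contains an induced cycle $C_r$ with $r \geq 3$, or one reduces to the case $r = 3$ via a standard contraction argument. Then, given a graph $G$ on which we wish to detect a triangle, one builds a database $\calD$ of size $O(|G|)$ (here I use the stated convention that $|\calD|$ is the sum of signature, universe, and relation sizes) by placing appropriate copies of $G$'s edge relation (and its transpose, and equality relations where needed) on the atoms of $\varphi$ so that answers of $\varphi$ in $\calD$ are in bijection with homomorphic images of the cycle in $G$, i.e.\ with (closed) walks that collapse to triangles. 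By inclusion–exclusion over which vertices of the cycle coincide — a computation with $O(1)$ terms, each of which is either a triangle count or a quantity computable in linear time — one extracts the number of triangles in $G$ from $\ans{\varphi}{\calD}$. Hence a linear-time algorithm for counting answers of $\varphi$ yields a linear-time (or near-linear-time, absorbed into the conjecture's statement) triangle-counting, hence triangle-detection, algorithm, contradicting the Triangle Conjecture.

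The main obstacle, and the part requiring genuine care rather than routine bookkeeping, is the negative direction: specifically, controlling the inclusion–exclusion so that every \emph{correction term} is itself linear-time computable (otherwise the reduction proves nothing). This is why one wants the cycle to be \emph{chordless/induced}: identifying any two vertices of a chordless cycle $C_r$ produces, after simplification, a strictly smaller instance whose query is acyclic (a path or a shorter tree-like structure), so all lower-order terms in the sieve are benign. Making this rigorous amounts to the combinatorial lemma that $C_r$ is ``$\cminimal$''-like — every proper identification of its vertices yields an acyclic query — together with a careful choice of which of $G$'s relations to install on which atoms so that diagonal degeneracies of $\varphi$ (repeated variables, self-loops) do not secretly reintroduce cycles. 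Since the excerpt cites~\cite{BraultBaron13,BeraGLSS22,Mengel25arxiv} precisely for this statement, I would present the reduction at the level of these references and cite them for the fine-grained lower bound, rather than re-deriving the chordless-cycle extraction from scratch.
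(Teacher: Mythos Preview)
The paper does not give its own proof of this theorem: it is quoted as a known result with citations to~\cite{BraultBaron13,BeraGLSS22,Mengel25arxiv} and is used only as a black box (both in the introduction and, verbatim, as Theorem~\ref{thm:CQ_dicho} in Section~\ref{sec:meta}). So there is no ``paper's own proof'' to compare against.

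That said, your sketch is essentially the standard argument found in those references. The positive direction via Yannakakis-style bottom-up aggregation along a join tree is exactly right, including your remarks about isolated variables and the need for the word-RAM model to get genuine linear time. For the negative direction, your outline is also the correct high-level picture: for arity~$2$, non-acyclicity of the query means the Gaifman graph contains a cycle, hence a chordless cycle~$C_r$, and one reduces from triangle detection/counting. Your identification of the ``main obstacle'' --- ensuring every correction term in the sieve is itself an acyclic (hence linear-time countable) instance --- is the genuine technical point, and the chordless-cycle property is precisely what makes it work. One minor refinement worth noting: rather than inclusion--exclusion over vertex identifications of~$C_r$, the cleaner route in the cited references is to subdivide the edges of the input graph~$G$ so that triangles in~$G$ correspond bijectively to (oriented) copies of~$C_r$ in the subdivided graph; this avoids the sieve entirely and keeps the reduction a single linear-time construction. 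Either way, your plan is sound and aligns with the literature the paper defers to.
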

We point out that there is also a conditional lower bound for cyclic quantifier-free conjunctive queries of arity larger than $2$: assuming the Hyperclique Hypothesis, counting answers to such queries can not be done in linear time --- however, since the case of arity $2$ is sufficient for our lower bounds and since this work already includes various lower bound assumptions, we decided not to go into the details of the higher arity case and instead refer the reader to an excellent survey by Mengel~\cite{Mengel25} and~\cite[Theorem 3.8]{Mengel25arxiv}.

We note that the previous theorem is false if quantified variables were allowed as this would require the consideration of  semantic acyclicity\footnote{A conjunctive query is semantically acyclic if and only if its \#core (Definition~\ref{def:core}) is acyclic.} (see~\cite{BarceloRV16}).

Theorem~\ref{thm:CQ_dichointro} yields an efficient way to check whether counting answers to a quantifier-free conjunctive query $\varphi$ can be done in linear time: Just check whether $\varphi$ is acyclic (in polynomial time, see for instance~\cite{GottlobGS14}). 
We investigate the corresponding question for \emph{unions} of conjunctive queries. In stark contrast to Theorem~\ref{thm:CQ_dichointro}, we show that there is no efficiently computable criterion that determines the linear-time tractability of counting answers to \emph{unions} of conjunctive queries, unless some conjectures of fine-grained complexity theory fail.

We first observe that, as in the investigation of question Q1), one can obtain a criterion for linear-time solvability by expressing UCQ answer counts as linear combinations of CQ answer counts. Concretely, by a straightforward extension of previous results, we show that, assuming the Triangle Conjecture, a linear combination of CQ answer counts can be computed in linear time if and only if the answers to each $\cminimal$ CQ in the linear combination can be computed in linear time, that is, if each such CQ is acyclic. However, this criterion is again unwieldy in the sense that, for all we know, it may take time exponential in the size of the respective UCQ to determine whether this criterion holds.

In view of our results for question Q1) about fixed-parameter tractability, one might suspect that a more natural and simpler tractability criterion exists. However, it turns out that even under strong restrictions on the UCQs that we consider, an efficiently computable criterion is unlikely.
We make this formal by studying the following meta problem.\footnote{For the question Q1), considering a similar meta problem is not feasible as, in this case, the meta problem takes as input a class of graphs. If such a class were encoded as a Turing machine, the meta problem would be undecidable by Rice's theorem.}

\vbox{
\begin{description}\setlength{\itemsep}{0pt}
\setlength{\parskip}{0pt}
\setlength{\parsep}{0pt}   			
\item[\bf Name:] $\meta$ 
\item[\bf Input:]   A union $\Psi$ of quantifier-free conjunctive queries. 
\item[\bf Output:]  Is it possible to count answers to $\Psi$ in time linear in the size of any given database~$\calD$. 
\end{description}
}

Restricting the input of $\meta$ to quantifier-free queries is sensible as, without this restriction, the meta problem is known to be $\mathrm{NP}$-hard even for conjunctive queries: If all variables are existentially quantified, then evaluating a conjunctive query can be done in linear time if and only if the query is semantically acyclic~\cite{Yannakakis81} (the ``only if'' relies on standard hardness assumptions). However, verifying whether a conjunctive query is semantically acyclic is already $\mathrm{NP}$-hard~\cite{BarceloRV16}.
In contrast, when restricted to quantifier-free conjunctive queries, the problem $\meta$ is polynomially-time solvable according to Theorem~\ref{thm:CQ_dichointro}.
 
We can now state our main result about the complexity of $\meta$. The hardness results hold under substantial additional input restrictions, which make these results stronger.

\begin{restatable}{theorem}{mainmeta}  
\label{thm:main_meta}
Assume that the Triangle Conjecture is true.
When restricted to input queries of arity at most $2$, $\meta$ can be solved in time $2^{O(\ell)} \cdot |\Psi|^{\mathsf{poly}(\log|\Psi|)}$, where~$\ell$ is the number of conjunctive queries in the union. Moreover, $\meta$ is $\mathrm{NP}$-hard and the following conditional lower bounds apply:
    \begin{itemize}
        \item If ETH is true, then $\meta$ cannot be solved in time $2^{o(\ell)}$.
        \item If the non-uniform ETH is true then
    $\meta \notin \bigcap_{\varepsilon>0} \mathrm{DTime}(2^{\varepsilon \cdot \ell})$.
    \end{itemize}
The lower bounds remain true even if $\Psi$ is a union of self-join-free and acyclic conjunctive queries over a binary signature (that is, of arity~$2$).
\end{restatable}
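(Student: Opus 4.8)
The claim combines an algorithm with a family of matching conditional lower bounds, and I would handle the two directions separately, both built on the CQ expansion of the input UCQ. For the upper bound I would start from the criterion recalled just before the theorem: under the Triangle Conjecture, a linear combination $\sum_i c_i\cdot\ans{\varphi_i}{\calD}$ of conjunctive-query answer counts is computable in time linear in $\calD$ if and only if every $\cminimal$ query $\varphi_i$ with $c_i\neq 0$ is acyclic. The ``if'' direction is immediate --- for a fixed $\Psi$ the expansion has only constantly many terms, and an acyclic $\cminimal$ term is linear-time computable by Theorem~\ref{thm:CQ_dichointro} --- while the ``only if'' direction is the promised extension of earlier work: a complexity-monotonicity argument (interpolation over modified databases) makes the combination at least as hard as its hardest surviving term, and a cyclic $\cminimal$ binary query is not linear-time countable under the Triangle Conjecture. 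The algorithm for $\meta$ then just makes this effective: enumerate the $2^\ell-1$ non-empty $S\subseteq[\ell]$; for each, form $\bigwedge_{i\in S}\varphi_i$, whose atom set is the union of the respective atom sets and which, being quantifier-free, is automatically $\cminimal$; canonicalise it (polynomially for arity at most two, or, conservatively, via quasi-polynomial-time graph canonisation) and add $(-1)^{|S|+1}$ to the coefficient of the corresponding canonical query held in a dictionary; finally, check in polynomial time whether every canonical query with non-zero accumulated coefficient is acyclic, and answer ``yes'' exactly in that case (correctness of the ``no'' answers, as for the criterion, invokes the Triangle Conjecture). The work outside the $2^{O(\ell)}$ enumeration is polynomial, comfortably inside the stated $2^{O(\ell)}\cdot|\Psi|^{\mathsf{poly}(\log|\Psi|)}$ bound.

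For the lower bounds, the key preliminary observation is that, after merging terms with a common $\cminimal$ form, the coefficient of a combined query in the CQ expansion of $\Psi$ equals, up to sign, the reduced Euler characteristic of an explicit simplicial complex built from its atom set --- one of the alternating sums that the introduction identifies as hard to control. I would then give a polynomial-time reduction from an $\mathrm{NP}$-hard problem --- concretely $3$-SAT, whose sparsified instances have linearly many clauses, which is what powers the fine-grained bounds --- that produces, from an instance $I$, a UCQ $\Psi_I = \varphi_1 \vee \dots \vee \varphi_\ell$ with $\ell = O(|I|)$ in which each $\varphi_i$ is a small self-join-free acyclic binary query (hence a forest of pairwise distinctly-coloured edges on a common free-variable set) and in which relation symbols are reused across different $\varphi_i$ so that suitable unions become cyclic. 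The construction is arranged so that (i) every combined query of $\Psi_I$ other than a single designated query $\psi^\star$ --- a triangle, hence cyclic --- is acyclic, and (ii) the coefficient of $\psi^\star$ in the CQ expansion is non-zero if and only if $I$ is satisfiable. By the criterion above, $\Psi_I$ is linear-time countable precisely when $I$ is unsatisfiable, which yields $\mathrm{NP}$-hardness of $\meta$; and since $\ell = O(|I|)$, a $2^{o(\ell)}$-time (respectively, $\bigcap_{\varepsilon>0}\mathrm{DTime}(2^{\varepsilon\ell})$) algorithm for $\meta$ would refute ETH (respectively, non-uniform ETH), so the $2^{O(\ell)}$ upper bound is tight in the exponent. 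An analogous construction with larger designated cliques in place of a triangle also gives the $\mathrm{NP}$-hardness of approximating the Weisfeiler--Leman dimension of a UCQ.

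The crux, and the step I expect to be most delicate, is the gadget design behind (i)--(ii): one must simultaneously keep every $\varphi_i$ acyclic, self-join-free, and of arity two (so each gadget is forced to be a forest whose edges carry pairwise distinct relation symbols), reuse relation symbols across gadgets so that exactly the subsets $S$ that encode a satisfying assignment produce the cyclic query $\psi^\star$ while every other combined query stays acyclic, and control the sign cancellation in the Euler-characteristic sum so that the surviving coefficient of $\psi^\star$ vanishes precisely on the unsatisfiable instances --- a ``no unexpected cancellation'' statement of exactly the kind flagged in the introduction as the central difficulty for counting answers to UCQs. Verifying that the reduction runs in polynomial time with only linear blow-up in $\ell$, and reading off the fine-grained consequences from the standard statements of ETH and its non-uniform variant, is then routine.
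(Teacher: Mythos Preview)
Your upper-bound argument is essentially the paper's: enumerate all $2^\ell$ subsets, form the combined queries, group by isomorphism (the paper invokes Babai's quasi-polynomial isomorphism test, which is precisely where the $|\Psi|^{\mathsf{poly}(\log|\Psi|)}$ factor comes from), and check acyclicity of every surviving term; correctness is complexity monotonicity (Corollary~\ref{cor:UCQ_monotone}) together with Theorem~\ref{thm:CQ_dichointro}.

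For the lower bound you have identified the right mechanism---the coefficient of the unique cyclic term is an alternating sum of Euler-characteristic type whose (non-)vanishing is hard to decide---but your sketched construction has two concrete gaps. First, $\psi^\star$ cannot be a fixed triangle: to encode an arbitrary instance the cyclic target must grow with it. The paper takes $\psi^\star=\mathcal{K}_t^k$, the $k$-stretch of a $t$-clique (so for $t=3$ this is a $3k$-cycle, not a triangle), where $k$ equals the number of facets of the input complex. Each ``layer'' $\mathcal{E}_i$ of the stretch is a feedback edge set of $\mathcal{K}_t^k$, so any combined query missing a layer is automatically acyclic; this is what delivers your property~(i) for free. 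A bare triangle has only three edges and simply cannot absorb the required degrees of freedom. Second, the paper does not reduce 3-SAT to the UCQ directly. It first uses the polynomial-time reduction of Roune and S\'aenz-de-Cabez\'on from 3-SAT to deciding whether the reduced Euler characteristic $\hat\chi(\Delta)$ of an abstract simplicial complex vanishes, then converts $\Delta$ into an isomorphic \emph{power complex} $\Delta_{\hat\Omega,\mathcal{U}}$, and finally encodes that as a UCQ whose full combined query is $\mathcal{K}_t^k$ with top coefficient $-\hat\chi(\Delta)$. Thus the subsets $J$ that produce $\psi^\star$ are not ``the satisfying assignments'' but the subsets of $\hat\Omega$ that cover $\mathcal{U}$; the link to satisfiability is imported from~\cite{RouneS13} rather than engineered inside the UCQ. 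The fine-grained bounds then follow after sparsifying the 3-CNF so that the complex has ground set of size $O(n)$ and hence $\ell=O(n)$.
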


We make some remarks about Theorem~\ref{thm:main_meta}.
First, it may seem counterintuitive that the algorithmic part of this result
relies on some lower bound conjectures. This is explained by the fact that an algorithmic result for $\meta$ is actually a classification result for the underlying counting problem. The lower bound conjectures are the reason that the algorithm for $\meta$ can answer that a linear-time algorithm is \emph{not} possible for certain UCQs. We also point out that the upper bound remains true for arbitrary arities if the Hyperclique Hypothesis is assumed (see e.g.\ the full version of Mengel's survey~\cite[Hypothesis 3]{Mengel25arxiv}); as mentioned earlier, we omit the details in order to avoid formally introducing yet another lower bound assumption from fine-grained complexity theory.

Second, while for counting 
the answers to a CQ
in linear time the property of being acyclic is the right criterion, note that for unions of CQs, acyclicity is not even sufficient for tractability. Even when restricted to unions of acyclic conjunctive queries, the meta problem is $\mathrm{NP}$-hard.

Third, we elaborate on the idea that we use to prove Theorem~\ref{thm:main_meta}.
As mentioned before, the algorithmic part of Theorem~\ref{thm:main_meta} comes from the well-known technique of expressing UCQ answer counts in terms of linear combinations of CQ answer counts, and establishing a corresponding complexity monotonicity property, see Section~\ref{sec:complexitymon}.
The more interesting result is the hardness part. Here we discover a connection between the meta question stated in $\meta$, and a topological invariant, namely, the question whether the reduced Euler characteristic of a simplicial complex is non-zero. It is known that simplicial complexes with non-vanishing reduced Euler characteristic are evasive, and as such this property is also related to Karp's Evasiveness Conjecture (see e.g.\ the excellent survey of Miller~\cite{Miller13}). We use the known fact that deciding whether the reduced Euler characteristic is vanishing is $\mathrm{NP}$-hard~\cite{RouneS13}. 
Roughly, the reduction works as follows. Given some simplicial complex $\Delta$, we carefully define a UCQ $\Psi_\Delta$ in such a way that only one particular term in the CQ expansion of $\Psi_\Delta$ determines the linear-time tractability of counting answers to $\Psi_\Delta$. However, the coefficient of this term is zero precisely if the reduced Euler characteristic of $\Delta$ is vanishing. 

Simplicial complexes also appeared in a related context in a work by Roth and Schmitt~\cite{RothS18}. They show a connection between the complexity of counting induced subgraphs that fulfil some graph property and the question whether a simplicial complex associated with this graph property is non-zero. To solve their problem, it suffices to consider simplicial \emph{graph} complexes, which are special simplicial complexes whose elements are subsets of the edges of a complete graph, and to encode these as induced subgraph counting problems.
In contrast, 
to get our result we must encode arbitrary abstract simplicial  complexes as 
UCQs  and to show how to transfer the question about their Euler characteristic  to a question about linear-time solvability of UCQs.

It turns out that, as additional consequences of our reduction in the proof of Theorem~\ref{thm:main_meta}, we also obtain lower bounds for (approximately) computing the so-called Weisfeiler-Leman-dimension of a UCQ.

\paragraph*{Consequences for the Weisfeiler-Leman-dimension of quantifier-free UCQs} 
During the last decade we have witnessed a resurge in the study of the \emph{Weisfeiler-Leman-dimension} of graph classes and graph parameters~\cite{Arvind16,Furer17,DellGR18,KieferPS19,Morrisetal19,BarceloGMO22}. The Weisfeiler-Leman algorithm (WL-algorithm) and its higher-dimensional generalisations are  important heuristics for graph isomorphism; for example, the $1$-dimensional WL-algorithm is equivalent to the method of colour-refinement. We refer the reader to e.g.\ the EATCS Bulletin article of Arvind~\cite{Arvind16} for a concise and self-contained introduction; however, in this work we will use the WL-algorithm only in a black-box manner.

For each positive integer $k$, we say that two graphs $G_1$ and $G_2$ are $k$-\emph{WL equivalent}, denoted by $G_1 \cong_k G_2$, if they cannot be distinguished by the $k$-dimensional WL-algorithm. A graph parameter $\pi$ is called $k$\emph{-WL invariant} if 
$G_1 \cong_k G_2$ implies
$\pi(G_1)=\pi(G_2)$. 
Moreover, the \emph{WL-dimension} of $\pi$ is the minimum $k$ for which $\pi$ is $k$-WL invariant, if such a $k$ exists, and $\infty$ otherwise (see e.g.\cite{ArvindFKV22}). The WL-dimension of a graph parameter $\pi$ provides important information about the descriptive complexity of $\pi$~\cite{CaiFI92}. Moreover, recent work of Morris et al.\ \cite{Morrisetal19} shows that the WL-dimension of a graph parameter lower bounds the minimum dimension of a higher-order Graph Neural Network that computes the parameter.

The definitions of the WL-algorithm and the WL-dimension extend from graphs to labelled graphs, that is, directed multi-graphs with edge- and vertex-labels (see e.g.~\cite{LanzingerB23}). Formally, we say that a database is a \emph{labelled graph} if its signature has arity at most $2$, and if it contains no self-loops, that is, tuples of the form $(v,v)$. Similarly, \emph{(U)CQs on labelled graphs} have signatures of arity at most $2$ and contain no atom of the form $R(v,v)$.   

\begin{definition}[WL-dimension]\label{def:WLdim}
Let $\Psi$ be a UCQ on labelled graphs. The \emph{WL-dimension} of~$\Psi$, denoted by $\mathsf{dim}_{\mathrm{WL}}(\Psi)$, is the minimum $k$ such that, for any pair of labelled graphs $\calD_1$ and $\calD_2$ with $\calD_1 \cong_k \calD_2$, it holds that the number of answers to $\Psi$ in $\calD_1$ is the same as in $\calD_2$.
If no such $k$ exists, then the WL-dimension is $\infty$.
\end{definition}
\noindent Note that a CQ is a special case of a UCQ, so Definition~\ref{def:WLdim} also applies when $\Psi$ is a CQ~$\varphi$. 

It was shown very recently that the WL-dimension of a \emph{quantifier-free conjunctive query} $\varphi$ on labelled graphs is equal to the treewidth of the Gaifman graph of $\varphi$~\cite{Neuen23,LanzingerB23}. Using known algorithms for computing the treewidth~\cite{Bodlaender96,FeigeHL08} it follows that, for every fixed positive integer $d$, the problem of deciding whether the WL-dimension of $\varphi$ is at most $d$ can be solved in polynomial time (in the size of $\varphi$). Moreover, the WL-dimension of $\varphi$ can be efficiently approximated in polynomial time.

In stark contrast, we show that the computation of the WL-dimension of a \emph{UCQ} is much harder; in what follows, we say that $S$ is an $f$-\emph{approximation} of $k$ if $k \leq S \leq f(k)\cdot k$.

\begin{restatable}{theorem}{WLone}  
\label{thm:WL1}
There is an algorithm that computes a $O(\sqrt{\log k})$-approximation of the WL-dimension~$k$ of a quantifier-free UCQ on labelled graphs $\Psi=\varphi_1\vee\dots\vee\varphi_\ell$ in time $|\Psi|^{O(1)}\cdot O(2^\ell)$.

Moreover, let $f:\mathbb{Z}_{>0}\to\mathbb{Z}_{>0}$ be any computable function. 
The problem of computing an $f$-approximation of $\mathsf{dim}_{\mathrm{WL}}(\Psi)$
given an  input UCQ $\Psi=\varphi_1\vee\dots\vee\varphi_\ell$  is NP-hard, and, assuming ETH, an $f$-approximation of $\mathsf{dim}_{\mathrm{WL}}(\Psi)$ cannot be computed in time~$2^{o(\ell)}$.
\end{restatable}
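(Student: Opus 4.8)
\textbf{Proof plan for Theorem~\ref{thm:WL1}.}

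\emph{The algorithm.} I will combine three ingredients: (i) the recent characterisation $\mathsf{dim}_{\mathrm{WL}}(\varphi)=\tw(\text{Gaifman graph of }\varphi)$ for a single quantifier-free CQ~$\varphi$ from~\cite{Neuen23,LanzingerB23}; (ii) the CQ expansion $\ans{\Psi}{\calD}=\sum_{\emptyset\neq J\subseteq[\ell]}(-1)^{|J|+1}\ans{\bigwedge_{j\in J}\varphi_j}{\calD}$ obtained by inclusion--exclusion, which for quantifier-free CQs we simplify by grouping the $2^\ell-1$ raw terms according to their atom set (for quantifier-free CQs no atom of a query is redundant, so a query's atom set is already in \#minimal form, and the simplified expansion $\ans{\Psi}{\calD}=\sum_{\psi}c_\psi\cdot\ans{\psi}{\calD}$ over pairwise non-\#equivalent \#minimal CQs~$\psi$ is computable in $|\Psi|^{O(1)}\cdot O(2^\ell)$ time); and (iii) the complexity-monotonicity principle developed in this paper, which gives $\mathsf{dim}_{\mathrm{WL}}(\Psi)=\max\{\,\tw(\text{Gaifman graph of }\psi)\;:\;c_\psi\neq 0\,\}$. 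The ``$\le$'' direction of (iii) is immediate: if $\calD_1\cong_k\calD_2$ with $k\ge\tw(\psi)$ for every $\psi$ with $c_\psi\neq 0$, then $\ans{\psi}{\calD_1}=\ans{\psi}{\calD_2}$ for each such~$\psi$ by~\cite{Neuen23,LanzingerB23}, hence $\ans{\Psi}{\calD_1}=\ans{\Psi}{\calD_2}$. For ``$\ge$'' one uses that the answer-count functions of distinct \#minimal quantifier-free CQs are $\mathbb{Q}$-linearly independent even when restricted to pairs of labelled graphs that are $(k-1)$-WL-equivalent, where $k$ is the largest treewidth among the non-vanishing terms. Given all this, the algorithm computes the simplified expansion, runs the polynomial-time $O(\sqrt{\log \tw})$-approximation of treewidth of Feige, Hajiaghayi and Lee~\cite{FeigeHL08} on the Gaifman graph of each~$\psi$ with $c_\psi\neq 0$, and returns the maximum $S$ of the values obtained. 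Since the true maximum~$k$ is attained by some term and $t\mapsto t\cdot O(\sqrt{\log t})$ is monotone, $k\le S\le k\cdot O(\sqrt{\log k})$, and the total running time is $|\Psi|^{O(1)}\cdot O(2^\ell)$.

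\emph{Hardness.} I will reuse the reduction constructed in the proof of Theorem~\ref{thm:main_meta}. Starting from an abstract simplicial complex~$\Delta$ --- for which deciding whether the reduced Euler characteristic $\tilde{\chi}(\Delta)$ is non-zero is $\mathrm{NP}$-hard~\cite{RouneS13} --- that reduction produces in polynomial time a UCQ $\Psi_\Delta=\varphi_1\vee\dots\vee\varphi_\ell$ which is a union of self-join-free acyclic binary CQs, having a distinguished \#minimal term $\varphi^\star$ in its simplified CQ expansion whose coefficient is, up to sign, $\tilde{\chi}(\Delta)$, whose Gaifman graph is cyclic with treewidth growing unboundedly in $|\Delta|$ (equivalently in~$\ell$), while every other \#minimal term with non-zero coefficient is acyclic and hence has a forest Gaifman graph of treewidth at most~$1$ (over a binary signature, acyclicity of a CQ is equivalent to its Gaifman graph being a forest). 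Substituting into the monotonicity identity from the first part gives $\mathsf{dim}_{\mathrm{WL}}(\Psi_\Delta)=\tw(\varphi^\star)$ when $\tilde{\chi}(\Delta)\neq 0$, and $\mathsf{dim}_{\mathrm{WL}}(\Psi_\Delta)\le 1$ when $\tilde{\chi}(\Delta)=0$.

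\emph{The two lower bounds.} Fix a computable function~$f$. Since $\tw(\varphi^\star)$ grows with $|\Delta|$, there is a constant $N_f$ such that $\tw(\varphi^\star)>f(1)$ whenever $|\Delta|\ge N_f$; for such inputs any $f$-approximation $S$ of $\mathsf{dim}_{\mathrm{WL}}(\Psi_\Delta)$ satisfies $S\le f(1)<\tw(\varphi^\star)$ if $\tilde{\chi}(\Delta)=0$ and $S\ge\tw(\varphi^\star)>f(1)$ if $\tilde{\chi}(\Delta)\neq 0$, so the value of~$S$ decides whether $\tilde{\chi}(\Delta)=0$, while inputs of size below $N_f$ are decided by brute force in constant time. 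This is a polynomial-time Turing reduction from an $\mathrm{NP}$-hard problem, so computing an $f$-approximation of $\mathsf{dim}_{\mathrm{WL}}$ is $\mathrm{NP}$-hard. For the ETH bound, Theorem~\ref{thm:main_meta} (and its proof) already ensures that, under ETH, the family $\{\Psi_\Delta\}$ admits no algorithm deciding linear-time solvability --- equivalently here, deciding whether $\tilde{\chi}(\Delta)=0$ --- in time $2^{o(\ell)}$; since for large inputs an $f$-approximation of $\mathsf{dim}_{\mathrm{WL}}(\Psi_\Delta)$ in time $2^{o(\ell)}$ would yield exactly this decision in time $2^{o(\ell)}+|\Psi_\Delta|^{O(1)}$, no such approximation algorithm exists under ETH.

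\emph{Main obstacle.} The crux is the ``$\ge$'' half of the monotonicity identity: showing that a non-trivial rational linear combination of quantifier-free CQ answer counts has WL-dimension \emph{equal} to, not merely at most, the maximum WL-dimension of its \#minimal constituents. This needs a linear-independence argument that survives restriction to graphs that are $(k-1)$-WL-indistinguishable, where $k$ is the maximal treewidth occurring; everything else (inclusion--exclusion, the treewidth characterisation of~\cite{Neuen23,LanzingerB23}, the off-the-shelf treewidth approximation, and recycling the reduction of Theorem~\ref{thm:main_meta}) is routine once this is in hand. A secondary point requiring care in the hardness part is checking that the cyclic term~$\varphi^\star$ delivered by that reduction genuinely has \emph{unbounded} treewidth rather than merely treewidth at least~$2$, as a bounded-treewidth gadget would only rule out $f$-approximation for those~$f$ with $f(1)$ below a fixed constant.
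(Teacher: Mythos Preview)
Your upper bound is essentially the paper's proof. One remark: the ``$\ge$'' direction you flag as the ``main obstacle'' is not something you need to prove yourself. The paper simply invokes the results of~\cite{Neuen23,LanzingerB23} as a black box: those papers already establish that the WL-dimension of a finite linear combination of homomorphism counts equals its hereditary treewidth. So the identity $\mathsf{dim}_{\mathrm{WL}}(\Psi)=\max\{\tw(\psi):c_\psi\neq 0\}$ is a citation, not an obstacle.

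The genuine gap is exactly the secondary point you flag but then mis-resolve. In the construction from Lemma~\ref{lem:main_reduct_meta} (the engine behind Theorem~\ref{thm:main_meta}), the distinguished term is $\conjunctfull{\Psi}\cong\mathcal{K}_t^k$, a $k$-stretch of the $t$-clique. Here $k$ grows with the number of facets of~$\Delta$, but $t$ is a \emph{fixed parameter of the algorithm}~$\hat{\mathbb{A}}_t$; and stretching edges does not change treewidth, so $\tw(\varphi^\star)=t-1$ is constant, not ``growing unboundedly in~$|\Delta|$''. In the proof of Theorem~\ref{thm:main_meta} one even takes $t=3$, giving treewidth~$2$. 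Your argument via ``for large enough~$|\Delta|$ we have $\tw(\varphi^\star)>f(1)$'' therefore does not go through.

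The paper's fix is simple and clean: since $f$ is fixed, choose the constant $t>f(1)+1$ \emph{before} running the reduction, and then use $\hat{\mathbb{A}}_t$ for that~$t$. Now for \emph{every} input complex, either the algorithm directly outputs $\hat{\chi}(\Delta)$, or it outputs~$\Psi$ with $\hdtw(\Psi)=t-1>f(1)$ when $\hat{\chi}(\Delta)\neq 0$ and $\hdtw(\Psi)=1$ otherwise. An $f$-approximation~$S$ then satisfies $S\le f(1)$ in the latter case and $S\ge t-1>f(1)$ in the former, deciding satisfiability of the original 3-CNF. Combined with sparsification this gives both the $\mathrm{NP}$-hardness and the ETH bound, with no need for a threshold~$N_f$ or brute-forcing small instances.
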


Finally, the computation of the WL-dimension of UCQs stays intractable even if we fix $k$.

\begin{restatable}{theorem}{WLtwo}  
\label{thm:WL2}
Let $k$ be any \emph{fixed} positive integer. 
The problem of deciding whether the WL-dimension of a quantifier-free UCQ on labelled graphs $\Psi=\varphi_1\vee\dots\vee\varphi_\ell$ is at most $k$ can be solved in time $|\Psi|^{O(1)}\cdot O(2^\ell)$.

Moreover, the problem is $\mathrm{NP}$-hard and, assuming ETH, cannot be solved in time $2^{o(\ell)}$.
\end{restatable}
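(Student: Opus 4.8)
The plan is to obtain the upper bound by rerunning the algorithm behind Theorem~\ref{thm:WL1} with an exact treewidth test, and to obtain the lower bounds by adapting the simplicial-complex reduction behind Theorem~\ref{thm:main_meta}.

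For the algorithm I would start from the characterisation of the WL-dimension of a quantifier-free UCQ that already underlies Theorem~\ref{thm:WL1}. Writing the CQ-expansion of $\Psi$ in reduced form as $\ans{\Psi}{\calD}=\sum_i c_i\cdot\ans{\hat\varphi_i}{\calD}$, where the $\hat\varphi_i$ are pairwise non-isomorphic $\cminimal$ quantifier-free CQs and each $c_i\neq 0$, one has $\mathsf{dim}_{\mathrm{WL}}(\Psi)=\max_i \tw(\hat\varphi_i)$, where $\tw(\hat\varphi_i)$ is the treewidth of the Gaifman graph of $\hat\varphi_i$. This follows because $\mathsf{dim}_{\mathrm{WL}}(\varphi)=\tw(\varphi)$ for quantifier-free CQs~\cite{Neuen23,LanzingerB23} and the $k$-WL-invariant functions on labelled graphs are exactly the linear combinations of homomorphism counts from labelled graphs of treewidth at most $k$ --- in particular these counts are linearly independent, so a term of treewidth $>k$ cannot cancel out. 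The algorithm of Theorem~\ref{thm:WL1} already produces this reduced expansion --- at most $2^\ell$ terms, each of size $O(|\Psi|)$ --- in time $|\Psi|^{O(1)}\cdot O(2^\ell)$; to decide $\mathsf{dim}_{\mathrm{WL}}(\Psi)\le k$ for a \emph{fixed} $k$ I would replace its approximate treewidth computation by Bodlaender's algorithm~\cite{Bodlaender96}, which decides ``$\tw(\hat\varphi_i)\le k$?'' exactly in linear time, and accept if and only if every surviving $\hat\varphi_i$ passes. This runs in $|\Psi|^{O(1)}\cdot O(2^\ell)$ and is unconditional, since the WL-dimension is a purely combinatorial quantity.

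For the lower bounds I would reuse the reduction behind the hardness part of Theorem~\ref{thm:main_meta}: from a simplicial complex $\Delta$, given by its facets, it builds in polynomial time a union $\Psi_\Delta=\varphi_1\vee\cdots\vee\varphi_\ell$ of self-join-free acyclic CQs over a binary signature, with $\ell=O(|\Delta|)$, together with a distinguished CQ $\varphi^{\star}$ such that $\varphi^{\star}$ is cyclic, every other CQ occurring in the CQ-expansion of $\Psi_\Delta$ is acyclic, and the coefficient of $\varphi^{\star}$ in the reduced expansion is non-zero if and only if the reduced Euler characteristic of $\Delta$ is non-zero. Every acyclic CQ over a binary signature has a forest as Gaifman graph and hence treewidth at most $1$, so every surviving term other than $\varphi^{\star}$ has treewidth $\le 1\le k$. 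It then remains to ensure $\tw(\varphi^{\star})>k$: either $\varphi^{\star}$ already has treewidth growing with $|\Delta|$ --- and then, as $k$ is fixed, $\tw(\varphi^{\star})>k$ on all sufficiently large instances --- or I would apply a fixed-size modification that realises $\varphi^{\star}$ by a cyclic gadget of treewidth $k+1$ (for instance a clique on $k+2$ vertices carrying pairwise distinct binary relation symbols on its edges, whose edge set I decompose into $O(k)$ forests so that the $\varphi_j$ stay self-join-free and acyclic), without changing the coefficient of $\varphi^{\star}$ or the acyclicity of the remaining terms. Either way $\mathsf{dim}_{\mathrm{WL}}(\Psi_\Delta)\le k$ if and only if the reduced Euler characteristic of $\Delta$ vanishes. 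Since deciding the latter is NP-hard~\cite{RouneS13} and, under ETH, not solvable in time $2^{o(|\Delta|)}$ (via the same chain of reductions used for Theorem~\ref{thm:main_meta}), and since $\ell=O(|\Delta|)$, this yields NP-hardness of deciding $\mathsf{dim}_{\mathrm{WL}}(\Psi)\le k$ and the ETH-conditional $2^{o(\ell)}$ lower bound, with the stated input restrictions inherited from $\Psi_\Delta$.

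The main obstacle is the bookkeeping in the lower bound: all structural guarantees of the Theorem~\ref{thm:main_meta} reduction must be transported through the reduced CQ-expansion, so one has to verify both that, after collecting coefficients, $\varphi^{\star}$ is genuinely the \emph{only} surviving term of treewidth $>k$ (no acyclic term is promoted, but one must check that no other large-treewidth CQ slips in), and that the fixed modification raising $\tw(\varphi^{\star})$ to $k+1$ keeps its coefficient equal to $\pm$ the reduced Euler characteristic. By contrast the upper bound is routine once the reduced expansion is available, since it only swaps an approximate treewidth subroutine for an exact one.
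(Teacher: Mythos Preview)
Your approach is correct and essentially the same as the paper's: swap the Feige--Hajiaghayi--Lee approximation for Bodlaender's exact test in the upper bound, and reuse the simplicial-complex reduction for the lower bound.

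One point where you make life harder than necessary is the ``either/or'' in the lower bound. In the paper's reduction (Lemma~\ref{lem:main_reduct_meta}), the distinguished cyclic term is $\varphi^\star\cong\mathcal{K}_t^k$, whose treewidth equals $t-1$ regardless of $|\Delta|$ --- so your first alternative (treewidth growing with $|\Delta|$) does not apply. But no ``fixed-size modification'' is needed either: the reduction is already parametrised by $t$, so one simply invokes $\mathbb{A}_t$ with $t=k+2$ from the outset, and then $\tw(\varphi^\star)=k+1>k$ while every other surviving term stays acyclic. Your proposed alternative of replacing $\varphi^\star$ by a $(k{+}2)$-clique decomposed into forests is awkward because $\varphi^\star=\conjunctfull{\Psi}$ is not freely adjustable --- it is the union of the $\mathcal{B}_j$, and the link between the coefficient and $\hat\chi(\Delta)$ runs through the specific face/feedback-edge-set structure of the $\mathcal{E}_i$, which an arbitrary forest decomposition need not replicate. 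Using the built-in parameter $t$ avoids all of this bookkeeping.
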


\subsection{Further Related Work}

For exact counting it makes a substantial difference whether one wants to count answers to a conjunctive query or a union of conjunctive queries~\cite{ChenM16,DellRW19}. However, for approximate counting, unions can generally be handled using a standard trick of Karp and Luby~\cite{KarpL83:focs}, and therefore fixed-parameter tractability results for approximately counting the answers to a conjunctive query also extend to unions of conjunctive queries~\cite{ArenasNew,FockeGRZ22}.

Counting and enumerating the answers to a union of conjunctive queries has also been studied in the context of dynamic databases~\cite{BerkholzKS17,BerkholzKS18}. This line of research investigates the question whether linear-time dynamic algorithms are possible. Concretely, the question is whether, after a preprocessing step that builds a data structure in time linear in the size of the initial database, the number of answers to a fixed union of conjunctive queries can be returned in constant time with a constant-time update to the data structure, whenever there is a change to the database.
Berkholz et al.\ show that for a conjunctive query such a linear-time algorithm is possible if and only if the CQ is \emph{$q$-hierarchical}~\cite[Theorem 1.3]{BerkholzKS17}. There are acyclic CQs that are not $q$-hierarchical, for instance the query $\varphi(\{a,b,c,d\})= E(a,b)\wedge E(b,c) \wedge E(c,d)$ is clearly acyclic --- however, the sets of atoms that contain $b$ and $c$, respectively, are neither comparable nor disjoint, and therefore $\varphi$ is not $q$-hierarchical. So, there are queries for which counting in the static setting is easy, whereas it is hard in the dynamic setting.
Berkholz et al.~extend their result from CQs to UCQs~\cite[Theorem 4.5]{BerkholzKS18}, where the criterion is whether the UCQ is \emph{exhaustively $q$-hierarchical}. This property essentially means that, for every subset of the CQs in the union, if instead of taking the disjunction of these CQs we take the conjunction, then the resulting CQ should be $q$-hierarchical.
Moreover, checking whether a CQ $\phi$ is $q$-hierarchical can be done in time polynomial in the size of $\phi$. However, the straightforward approach of checking whether a UCQ is exhaustively $q$-hierarchical takes exponential time, and it is stated as an open problem in \cite{BerkholzKS18} whether this can be improved. In the dynamic setting this question remains open --- however, in the static setting we show that, while for counting answers to CQs the criterion for linear-time tractability can be verified in polynomial time, this is not true for unions of conjunctive queries, subject to some complexity assumptions, as we have seen in Theorem~\ref{thm:main_meta}.

\section{Preliminaries}\label{sec:preliminaries} 
Due to the fine-grained nature of the questions we ask in this work (e.g.\ linear time counting vs non-linear time counting), it is important to specify the machine model. We use the word RAM model with $O(\log n)$ bits as one of the lower bound conjectures we rely on in this work is stated for this model (see~\cite{AbboudW14}). The exact model makes a difference. For example, it is possible to count answers to quantifier-free acyclic conjunctive queries in linear time in the word RAM model~\cite{CarmeliZBCKS22}, while Turing machines only achieve near linear time (or expected linear time)~\cite{BeraGLSS22}.

\subsection{Parameterised and Fine-grained Complexity Theory}
A \emph{parameterised counting problem} is a pair consisting of a function $P:\{0,1\}^\ast \to \mathbb{N}$ and a computable\footnote{Some authors require the parameterisation to be polynomial-time computable; see the discussion in the standard textbook of Flum and Grohe~\cite{FlumG06}.  In this work, the parameter will always be the size of the input query, which can clearly be computed in polynomial time.} parameterisation $\kappa:\{0,1\}^\ast \to \mathbb{N}$. 
For example, in the problem $\#\textsc{Clique}$ 
the function maps an input (a graph $G$ and a positive integer $k$, encoded as a string in $\{0,1\}^\ast$ to the number of $k$-cliques in $G$.
The parameter is $K$ so   $\kappa(G,k)=k$.

A parameterised counting problem $(P,\kappa)$ is called \emph{fixed-parameter tractable} (FPT) if there is a computable function $f$ and an algorithm $\mathbb{A}$ that, given input $x$, computes $P(x)$ in time $f(\kappa(x))\cdot |x|^{O(1)}$. We call $\mathbb{A}$ an \emph{FPT-algorithm} for $(P,\kappa)$.

A \emph{parameterised Turing-reduction} from $(P,\kappa)$ to $(P',\kappa')$ is an algorithm $\mathbb{A}$ equipped with oracle access to $P'$ that satisfies the following two constraints: (I) $\mathbb{A}$ is an FPT-algorithm for $(P,\kappa)$, and (II) there is a computable function $g$ such that, 
when the algorithm $\mathbb{A}$ is run with
input $x$, every oracle query $y$ to $(P',\kappa')$ has the property that 
the parameter $\kappa'(y)$   is bounded by $g(\kappa(x))$. 
We write $(P,\kappa)\fptred (P'\kappa')$ if a parameterised Turing-reduction exists.

Evidence for the non-existence of FPT algorithms is usually given by hardness for the parameterised classes $\#\ccW{1}$ and $\ccW{1}$, which can be considered to be the parameterised versions of $\#\mathrm{P}$ and $\mathrm{NP}$. The definition of those classes uses bounded-weft circuits, and we refer the interested reader e.g.\ to the standard textbook of Flum and Grohe~\cite{FlumG06}  for a comprehensive introduction. For this work, it suffices to rely on the clique problem to establish hardness for those classes:
A parameterised counting problem $(P,\kappa)$ is \emph{$\#\ccW{1}$-hard} if $\#\textsc{Clique}\fptred (P,\kappa)$, and it is \emph{$\ccW{1}$-hard} if $\textsc{Clique}\fptred (P,\kappa)$, where $\textsc{Clique}$ is the decision version of $\#\textsc{Clique}$, that is, given $G$ and $k$, the task is to decide whether there is at least one $k$-clique in $G$. 
As observed in previous works~\cite{ChenM15}, if all variables are existentially quantified, the problem of counting answers to a conjunctive query actually encodes a decision problem. So it comes to no surprise that both complexity classes $\ccW{1}$ and $\#\ccW{1}$ are relevant for its classification.
It is well known (see e.g.~\cite{Chenetal05,Chenetal06,CyganFKLMPPS15} that $\ccW{1}$-hard and $\#\ccW{1}$-hard problems are not fixed-parameter tractable, unless the Exponential Time Hypothesis fails. This hypothesis is stated as follows.

\begin{conj}[ETH~\cite{ImpagliazzoP01}]\label{conj:ETH}
$3$-SAT cannot be solved in time $\exp(o(n))$, where $n$ denotes the number of variables of the input formula.
\end{conj}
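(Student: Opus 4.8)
The final displayed statement is not a theorem of the paper but the \emph{Exponential Time Hypothesis}, which is adopted here as a working assumption; consequently there is no proof to sketch in the customary sense, and the only honest ``plan'' is to explain why none is available. A proof of ETH would in particular imply $\mathrm{P}\neq\mathrm{NP}$ --- indeed it would show that $3$-SAT does not even admit a quasi-polynomial-time algorithm --- and no unconditional separation of this strength is known. Thus the principal obstacle is, quite literally, that establishing ETH is a central open problem, strictly stronger than $\mathrm{P}\neq\mathrm{NP}$, and far beyond the scope of this work.

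What one \emph{can} do here, and what I would do, is record the provenance and robustness of the hypothesis rather than attempt to prove it. First I would cite the original formulation of Impagliazzo and Paturi~\cite{ImpagliazzoP01}, from which Conjecture~\ref{conj:ETH} is taken essentially verbatim. Next I would invoke the Sparsification Lemma of Impagliazzo, Paturi, and Zane, which shows that the hypothesis is equivalent whether one parameterises $3$-SAT by the number of variables $n$ or by the number of clauses $m$; this is the form in which ETH is most frequently applied. Finally I would spell out the consequence actually used in this paper: under ETH, the problem $\textsc{Clique}$ has no $f(k)\cdot n^{o(k)}$-time algorithm, and hence no $\ccW{1}$-hard problem --- such as the hard cases of $\#\ucq(C)$ isolated in Theorems~\ref{thm:main_dicho_quantifier_free} and~\ref{thm:main_dicho_quantifiers} --- is fixed-parameter tractable.

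To summarise, every lower bound in this work is stated \emph{conditionally} --- on ETH, on the stronger non-uniform ETH, or on fine-grained assumptions such as the Triangle Conjecture --- precisely because an unconditional version would resolve one of the central open questions of complexity theory. For this reason the statement is presented as Conjecture~\ref{conj:ETH} and is used throughout only as an assumption, never proved.
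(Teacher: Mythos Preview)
Your assessment is correct and matches the paper exactly: the statement is a \emph{conjecture}, not a theorem, and the paper offers no proof --- it merely states ETH as a hypothesis with a citation to Impagliazzo and Paturi and uses it as a standing assumption for the conditional lower bounds. Your additional remarks about the Sparsification Lemma and the consequences for $\textsc{Clique}$ are accurate and relevant context, but go beyond what the paper itself provides at this point.
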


We also rely on a non-uniform version of the Exponential Time Hypothesis, which is defined below.

\begin{conj}[Non-uniform ETH~\cite{ChenEF12}]\label{conj:NUETH}
$3$-$\textsc{SAT}\notin \bigcap_{\varepsilon>0} \mathrm{DTime}(\exp(\varepsilon n))$, where $n$ denotes the number of variables of the input formula.
\end{conj}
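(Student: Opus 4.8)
The displayed statement is the \emph{Non-uniform Exponential Time Hypothesis}; it is a \emph{conjecture}, not a theorem, and it is introduced here to be used as a hardness axiom, in exactly the same role as ETH (Conjecture~\ref{conj:ETH}), the Triangle Conjecture, or the Hyperclique Hypothesis elsewhere in the paper. There is therefore no proof to propose, and there is a concrete reason for this: any proof of the Non-uniform ETH would in particular entail $\mathrm{P}\ne\mathrm{NP}$, since if $3$-\textsc{SAT} were in $\mathrm{P}$ it would lie in $\mathrm{DTime}(n^{c})$ for some constant $c$, hence in $\mathrm{DTime}(\exp(\varepsilon n))$ for every $\varepsilon>0$ once $n$ is large, contradicting the hypothesis. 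An argument establishing this would have to evade the relativisation and algebrisation barriers (and, more generally, all presently known lower-bound techniques), so it is well outside the reach of current methods. What one can sensibly do is (i) relate the statement to neighbouring hypotheses and (ii) explain why this paper needs precisely this form of it.

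On point (i): the word ``non-uniform'' here does not refer to circuit families but to the fact that the algorithm is allowed to depend on $\varepsilon$ --- one deterministic $\exp(\varepsilon n)$-time algorithm per $\varepsilon>0$ is permitted, rather than a single algorithm of running time $\exp(o(n))$. Since a single $\exp(o(n))$-time algorithm is, for each fixed $\varepsilon>0$ and all sufficiently large $n$, also an $\exp(\varepsilon n)$-time algorithm, the Non-uniform ETH immediately \emph{implies} the uniform ETH; the converse is open, so the non-uniform form is strictly the stronger assumption. The usual plausibility argument, due to Chen, Eickmeyer, and Flum~\cite{ChenEF12}, is simply that no known SAT algorithm --- derandomisation-based, algebraic, polynomial-method, or otherwise --- achieves such an $\varepsilon$-indexed family of improvements, and that the existence of one would be a surprising structural phenomenon.

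On point (ii): the reason this paper needs the non-uniform variant rather than the uniform ETH is that one of the conditional lower bounds it derives --- namely ``$\meta\notin\bigcap_{\varepsilon>0}\mathrm{DTime}(2^{\varepsilon\cdot\ell})$'' in Theorem~\ref{thm:main_meta} --- is itself a statement of exactly this $\varepsilon$-parametrised shape, so a reduction establishing it must transport a lower bound of the same shape. The component of the paper that actually carries mathematical content is thus not a ``proof'' of the conjecture above but the reduction from $3$-\textsc{SAT} to $\meta$: from an $n$-variable formula it must build a UCQ $\Psi$ with only $\ell=O(n)$ disjuncts, so that a hypothetical decision procedure for $\meta$ running in time $2^{\varepsilon\cdot\ell}$, available for every $\varepsilon>0$, would yield a $2^{\varepsilon'\cdot n}$-time algorithm for $3$-\textsc{SAT} for every $\varepsilon'>0$, contradicting the Non-uniform ETH. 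If one were to carry this out, the one genuinely delicate point --- the natural ``main obstacle'' --- is keeping the parameter blow-up \emph{linear}, i.e.\ $\ell=O(n)$ and not $O(n\log n)$ or worse, since any super-linear blow-up would weaken $2^{\varepsilon\cdot\ell}$ to a bound that no longer contradicts the hypothesis; securing this linearity is precisely what the construction behind Theorem~\ref{thm:main_meta} is designed to achieve.
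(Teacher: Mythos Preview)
Your assessment is correct and matches the paper: the statement is a conjecture used as a hardness assumption, the paper offers no proof of it, and your observation that non-uniform ETH implies ETH is exactly the one remark the paper makes after stating it. Your additional commentary on why the non-uniform form is needed for the $\bigcap_{\varepsilon>0}\mathrm{DTime}(2^{\varepsilon\cdot\ell})$-shaped lower bound in Theorem~\ref{thm:main_meta}, and on the necessity of a linear parameter blow-up in the reduction, is accurate and goes somewhat beyond what the paper spells out at this point.
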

Note that, clearly, non-uniform ETH implies ETH.

Finally, for ruling out linear-time algorithms, we will rely on the Triangle Conjecture:
\begin{conj}[Triangle Conjecture~\cite{AbboudW14}]\label{conj:Triangle}
There exists $\gamma>0$ such that any (randomised) algorithm in the word RAM model with $O(\log n)$ bits which decides whether a graph with $n$ vertices and $m$ edges contains a triangle takes time at least $\Omega(m^{1+\gamma})$ in expectation.
\end{conj}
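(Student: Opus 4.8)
The final displayed statement is the \emph{Triangle Conjecture} of Abboud and Williams~\cite{AbboudW14}: this is an unproven hardness hypothesis, not a theorem. There is therefore no known proof, and a genuine unconditional proof would be a breakthrough far beyond current techniques, since the statement asserts a superlinear lower bound $\Omega(m^{1+\gamma})$ in the word RAM model and no superlinear lower bound of this kind is known for any natural problem. What one can realistically do in place of a proof is to \emph{derive} the conjecture from a more established hypothesis and to spell out the obstruction. My plan would be to deduce it from the combinatorial Boolean matrix multiplication (BMM) conjecture, namely that no combinatorial algorithm multiplies two $n \times n$ Boolean matrices in time $O(n^{3-\delta})$ for any $\delta > 0$.

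The key step I would use is the classical high-degree/low-degree splitting argument. In an $m$-edge graph, call a vertex heavy if its degree exceeds a threshold $t$ and light otherwise. Every triangle either contains a light vertex --- in which case it is found by enumerating, for each light vertex, all pairs of its neighbours, at total cost $O(mt)$ --- or consists of three heavy vertices, of which there are $O(m/t)$, so such triangles are detected by a single Boolean matrix product on an index set of size $O(m/t)$. Balancing the threshold, a sufficiently fast combinatorial triangle algorithm would hence yield a truly subcubic combinatorial BMM algorithm, contradicting the BMM conjecture and giving the Triangle Conjecture with some explicit $\gamma > 0$. Crucially, this reduction is purely combinatorial, dimension-preserving, and insensitive to randomisation, so it applies verbatim in the randomised word RAM model with $O(\log n)$ bits as stated; the $O(\log n)$-bit bound appears in the conjecture only to make ``linear time'' a meaningful notion (e.g.\ to preclude exotic bit-packing speedups), as discussed in~\cite{AbboudW14}.

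The hard part is the fundamental one: there is no proof of the BMM conjecture, nor of any unconditional superlinear word-RAM lower bound, and producing one would entail complexity separations that no known method can deliver. Consequently, the honest status of the statement is that of an axiom, which is precisely how it is used in the remainder of the paper (for the lower bounds in Theorems~\ref{thm:CQ_dichointro} and~\ref{thm:main_meta}); the evidence for it is indirect --- triangle detection is subcubic-equivalent to a range of other core problems and has resisted all attempts at a near-linear-time algorithm --- rather than a proof. The main obstacle to a real proof is thus the well-known barrier of establishing unconditional lower bounds.
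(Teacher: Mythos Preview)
Your assessment is correct: the statement is a \emph{conjecture}, not a theorem, and the paper offers no proof of it --- it is introduced purely as a hardness assumption and used conditionally in the lower bounds of Theorems~\ref{thm:CQ_dichointro} and~\ref{thm:main_meta}. Your additional discussion of the BMM reduction and the barriers to an unconditional proof is reasonable background, but goes beyond what the paper does (which is simply to state the conjecture and cite~\cite{AbboudW14}); no proof attempt is expected or possible here.
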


\subsection{Structures, Homomorphisms, and Conjunctive Queries}

A \emph{signature} is a finite tuple $\tau=(R_1,\dots,R_s)$ where each $R_i$ is a relation symbol and comes with an arity $a_i$. The arity of a signature is the maximum arity of its relation symbols. A \emph{structure} $\mathcal{A}$ over $\tau$ consists of a finite universe $U(\mathcal{A})$ and a relation $R^\mathcal{A}_i$ of arity $a_i$ for each relation symbol $R_i$ of $\tau$. As usual in relational algebra, we view \emph{databases} as relational structures. We encode a structure by listing its signature, its universe and its relations. Therefore, given a structure $\mathcal{A}$ over $\tau$, we set $|\mathcal{A}|=|\tau| + |U(\mathcal{A})| + \sum_{R \in \tau} |R^\mathcal{A}|\cdot a_R$, where $a_R$ is the arity of $R$. 

For example, a \emph{graph} $G$ is a structure over the signature $(E)$ where $E$ has arity $2$.
The \emph{Gaifman graph} of a structure $\mathcal{A}$ has as vertices the universe $U(\mathcal{A})$ of $\mathcal{A}$, and for each pair of vertices $u,v$, there is  an edge $\{u,v\}$ in~$E$ if and only if at least one of the relations of $\mathcal{A}$ contains a tuple containing both $u$ and $v$.
Note that the edge set~$E$ of a Gaifman graph is symmetric and irreflexive.

Let $\mathcal{A}$ and $\mathcal{B}$ be structures over the same signature $\tau$.
Then $\mathcal{A}$ is a \emph{substructure} of $\mathcal{B}$ if $U(\mathcal{A})\subseteq U(\mathcal{B})$ and, for each relation symbol $R$ in $\tau$, it holds that $R^{\mathcal{A}}\subseteq R^{\mathcal{B}}\cap U(\mathcal{A})^a$, where $a$ is the arity of $R$.
A substructure is \emph{induced} if, for each relation symbol $R$ in $\tau$, we have $R^{\mathcal{A}}= R^{\mathcal{B}}\cap U(\mathcal{A})^a$.
A substructure $\mathcal{A}$ of $\mathcal{B}$ with
$\mathcal{A}\neq \mathcal{B}$   is a \emph{proper substructure} of $\mathcal B$.
We also define the \emph{union} $\mathcal{A}\cup \mathcal{B}$ of two structures $\mathcal{A}$ and $\mathcal{B}$ as the structure over $\tau$ with universe $U(\mathcal{A})\cup U(\mathcal{B})$ and $R^{\mathcal{A}\cup \mathcal{B}}=R^{\mathcal{A}}\cup R^{\mathcal{B}}$. Note that the union is well-defined even if the universes are not disjoint.

\paragraph{Homomorphisms as Answers to CQs}
Let $\mathcal{A}$ and $\mathcal{B}$ be structures over signatures $\tau_\mathcal{A}\subseteq \tau_\mathcal{B}$.
A \emph{homomorphism} from $\mathcal{A}$ to $\mathcal{B}$ is a mapping $h:U(\mathcal{A}) \to U(\mathcal{B})$ such that for each relation symbol $R\in \tau_\mathcal{A}$ with arity $a$ and each tuple $\Vec{t}=(t_1,\dots,t_a) \in R^{\mathcal{A}}$ we have that $h(\Vec{t})=(h(t_1),\dots,h(t_a)) \in R^{\mathcal{B}}$. We use $\Homs{\calA}{\calB}$ to denote the set of homomorphisms from $\calA$ to $\calB$, and we use the lower case version $\homs{\calA}{\calB}$ to denote the \emph{number} of homomorphisms from $\calA$ to $\calB$.

Let $\varphi$ be a conjunctive query with free variables $X=\{x_1,\dots,x_k\}$ and quantified variables $Y=\{y_1,\dots,y_d\}$. We can associate $\varphi$ with a structure $\mathcal{A}_\varphi$ defined as follows: The universe of $\mathcal{A}_\varphi$ are the variables $X\cup Y$ and for each atom $R(\Vec{t})$ of $\varphi$ we add the tuple $\Vec{t}$ to $R^\mathcal{A}$. 
It is well-known that, for each database $\mathcal{D}$, the set of answers of $\varphi$ in $\mathcal{D}$ is precisely the set of assignments $a: X \to U(\mathcal{D})$ such that there is a homomorphism $h \in\Homs{\mathcal{A}_\varphi}{\mathcal{D}}$ with $h|_{X}=a$. Since working with (partial) homomorphisms will be very convenient in this work, we will use the notation from~\cite{DellRW19} and (re)define a conjunctive query as a pair consisting of a relational structure $\mathcal{A}$ together with a set $X\subseteq U(\mathcal{A})$. The size of $(\mathcal{A},X)$ is denoted by $|(\mathcal{A},X)|$ and defined to be $|\mathcal{A}|+|X|$. Furthermore, we define 

\[ \Ans{(\mathcal{A},X)}{\mathcal{D}} := \{a: X \to U(\mathcal{D}) \mid \exists h \in\Homs{\mathcal{A}}{\mathcal{D}}: h|_{X}=a  \} \]
as the set of answers of $(\mathcal{A},X)$ in $\mathcal{D}$.
We then use $\ans{(\mathcal{A},X)}{\mathcal{D}}$ to denote the number of answers, i.e., $\ans{(\mathcal{A},X)}{\mathcal{D}}\coloneqq |\Ans{(\mathcal{A},X)}{\mathcal{D}}|$.

We can now formally define the (parameterised) problem of counting answers to conjunctive queries. As is usual, we restrict the problem by a class $C$ of allowed queries.

\vbox{
\begin{description}\setlength{\itemsep}{0pt}
\setlength{\parskip}{0pt}
\setlength{\parsep}{0pt}   			
\item[\bf Name:] $\#\homsprob(C)$ 
\item[\bf Input:]   A conjunctive query $(\mathcal{A},X)\in C$ together with a database $\mathcal{D}$.
\item[\bf Parameter:]   $|(\mathcal{A},X)|$. 
\item[\bf Output:]  The number of answers $\ans{(\mathcal{A},X)}{\mathcal{D}}$. 
\end{description}
}

\paragraph{Acyclicity and (Hyper-)Treewidth}
A structure of arity $2$ is \emph{acyclic} if its Gaifman graph is acyclic. For higher arities, a structure is acyclic if it has a join-tree or, equivalently, if it has (generalised) hyper-treewidth at most $1$ (see~\cite{Gottlob02:jcss-hypertree}). We
will not need the concept of hyper-treewidth, but we refer the reader to~\cite{GottlobGS14} for a comprehensive treatment.

The treewidth of graphs and structures is defined as follows
\begin{definition}[Tree decompositions, treewidth]
    Let $G$ be a graph. A \emph{tree decomposition} of $G$ is a pair $(T,B)$, where $T$ is a (rooted) tree, and $B$ assigns each vertex $t\in V(T)$ a \emph{bag} $B_t$ such that the following constraints are satisfied:
    \begin{itemize}
        \item[(C1)] $V(G)= \bigcup_{t\in V(T)} B_t$,
        \item[(C2)] For each edge $e\in E(G)$ there exists $t\in V(T)$ such that $e\subseteq B_t$, and
        \item[(C3)] For each $v\in V(G)$, the subgraph of $T$ containing all vertices $t$ with $v\in B_t$ is connected.
    \end{itemize}
    The \emph{width} of a tree decomposition is $\max_{t\in V(T)} |B_t| -1$, and the \emph{treewidth} of $G$ is the minimum width of any tree decomposition of $G$. Finally, the treewidth of a structure is the treewidth of its Gaifman graph.
\end{definition}

\paragraph{\#Equivalence, \#Minimality, and \#Cores}
In the realm of decision problems, it is well known that evaluating a conjunctive query is equivalent to evaluating the (homomorphic) core of the query, i.e., evaluating the minimal homomorphic-equivalent query (see, for instance, the seminal work of Chandra and Merlin~\cite{ChandraM77}). A similar, albeit slightly different notion of equivalence and minimality is required for counting answers to conjunctive queries. In what follows, we will provide   the necessary definitions and properties of equivalence, minimality and cores for counting answers to conjunctive queries, and we refer the reader to~\cite{ChenM16} and to the full version of~\cite{DellRW19} for a more comprehensive discussion. To avoid confusion between the notions in the realms of decision and counting, we will from now on use the \# symbol for the counting versions (see Definition~\ref{def:cequivalence}).

An \emph{isomorphism} from a structure $\mathcal{A}$ to a structure $\mathcal{A}'$ is a bijection $b: U(\mathcal{A}) \to U(\mathcal{A}')$ such that for each relation symbol $R \in \tau_A$ and each tuple $\vec{t}$ of elements of $U(\mathcal{A})$, we have $\vec{t}\in R^\mathcal{A}$ if and only if $b(\vec{t})\in R^{\mathcal{A}'}$. 
\begin{definition}
Two conjunctive queries $(\mathcal{A},X)$ and $(\mathcal{A}',X')$ are \emph{isomorphic}, denoted by $(\mathcal{A},X)\cong (\mathcal{A}',X')$, if there is an isomorphism $b$ from $\mathcal{A}$ to $\mathcal{A}'$ with $b(X)=X'$.
\end{definition}

\begin{definition}[\#Equivalence and \#minimality (see~\cite{ChenM16,DellRW19})]\label{def:cequivalence}
Two conjunctive queries $(\mathcal{A},X)$ and $(\mathcal{A}',X')$ are \emph{$\cequivalent$}, denoted by $(\mathcal{A},X)\sim (\mathcal{A}',X')$, if for every database $\mathcal{D}$ we have $\ans{(\mathcal{A},X)}{\mathcal{D}}=\ans{(\mathcal{A}',X')}{\mathcal{D}}$. A conjunctive query $(\mathcal{A},X)$ is \emph{$\cminimal$} if there is no proper substructure $\mathcal{A}'$ of $\mathcal{A}$ such that $(\mathcal{A},X)\sim (\mathcal{A}',X)$.
\end{definition}

\begin{observation}\label{obs:cminimalequivalent}
The following are equivalent:
\begin{enumerate}
\item A conjunctive query $(\mathcal{A},X)$ is \emph{$\cminimal$}.
\item $(\mathcal{A},X)$ has no $\cequivalent$ substructure that is induced by a set $U$ with $X\subseteq U\subset U(\mathcal{A})$.
\item Every homomorphism from $\mathcal{A}$ to itself 
that is the identity on~$X$ is surjective.
\end{enumerate}
\end{observation}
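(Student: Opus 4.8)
The plan is to prove the cyclic chain $(1)\Rightarrow(2)\Rightarrow(3)\Rightarrow(1)$. The implication $(1)\Rightarrow(2)$ is immediate: if $U$ satisfies $X\subseteq U\subsetneq U(\mathcal{A})$, then the substructure of $\mathcal{A}$ induced by $U$ is in particular a \emph{proper} substructure, so $\cminimal$ity --- which forbids \emph{any} $\sim$-equivalent proper substructure --- certainly forbids this one. The two remaining implications both rest on the following fact, which I would isolate as a lemma first: if there are homomorphisms $f\colon\mathcal{A}\to\mathcal{B}$ and $g\colon\mathcal{B}\to\mathcal{A}$ that both restrict to the identity on $X$ (which presupposes $X\subseteq U(\mathcal{B})$), then $(\mathcal{A},X)\sim(\mathcal{B},X)$. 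Indeed, for every database $\mathcal{D}$ and every $a\colon X\to U(\mathcal{D})$, post-composing an extension $\mathcal{A}\to\mathcal{D}$ of $a$ with $g$, respectively an extension $\mathcal{B}\to\mathcal{D}$ of $a$ with $f$, shows $\Ans{(\mathcal{A},X)}{\mathcal{D}}=\Ans{(\mathcal{B},X)}{\mathcal{D}}$, which is even stronger than $\sim$.

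For $(2)\Rightarrow(3)$ I would argue contrapositively. Suppose $h\colon\mathcal{A}\to\mathcal{A}$ is an endomorphism with $h|_X=\id$ that is not surjective, and set $U:=h(U(\mathcal{A}))$, so that $X\subseteq U\subsetneq U(\mathcal{A})$. The point is that, because we pass to the \emph{induced} substructure $\mathcal{A}[U]$ on $U$, the map $h$ is also a homomorphism $\mathcal{A}\to\mathcal{A}[U]$: any tuple $\vec t\in R^{\mathcal{A}}$ has $h(\vec t)\in R^{\mathcal{A}}$ with all entries in $U$, hence $h(\vec t)\in R^{\mathcal{A}}\cap U^{a}=R^{\mathcal{A}[U]}$. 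Together with the inclusion $\mathcal{A}[U]\hookrightarrow\mathcal{A}$, which also restricts to the identity on $X$, the lemma yields $(\mathcal{A},X)\sim(\mathcal{A}[U],X)$, contradicting $(2)$.

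For $(3)\Rightarrow(1)$, again contrapositively, assume $\mathcal{A}'$ is a proper substructure of $\mathcal{A}$ with $X\subseteq U(\mathcal{A}')$ and $(\mathcal{A}',X)\sim(\mathcal{A},X)$. Since $\mathcal{A}'$ is a substructure, restricting homomorphisms gives $\Ans{(\mathcal{A},X)}{\mathcal{D}}\subseteq\Ans{(\mathcal{A}',X)}{\mathcal{D}}$ for every database $\mathcal{D}$; as $\mathcal{D}$ is finite these answer sets are finite, so the cardinality equality supplied by $\sim$ upgrades this containment to equality. Applying this with $\mathcal{D}=\mathcal{A}'$, and using that $\id_X$ lies in $\Ans{(\mathcal{A}',X)}{\mathcal{A}'}$ (witnessed by the identity endomorphism of $\mathcal{A}'$), we obtain a homomorphism $h\colon\mathcal{A}\to\mathcal{A}'$ with $h|_X=\id$; composing with the inclusion $\mathcal{A}'\hookrightarrow\mathcal{A}$ gives an endomorphism $h'$ of $\mathcal{A}$ fixing $X$. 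It remains to check $h'$ is not surjective: its image lies in $U(\mathcal{A}')$, so if $U(\mathcal{A}')\subsetneq U(\mathcal{A})$ we are done; otherwise $U(\mathcal{A}')=U(\mathcal{A})$ while $R^{\mathcal{A}'}\subsetneq R^{\mathcal{A}}$ for some $R$, and a surjective $h'$ would be bijective on the finite universe, hence injective on tuples, forcing $|R^{\mathcal{A}}|\le|R^{\mathcal{A}'}|$ for every $R$ and thus $\mathcal{A}'=\mathcal{A}$, a contradiction.

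I expect the closing part of $(3)\Rightarrow(1)$ to be the only nonroutine step: a $\sim$-equivalent proper substructure need not have a strictly smaller universe --- it may merely drop tuples from some relation --- so one cannot argue purely by counting vertices. The fix is exactly the two-stage move above: first promote the $\sim$-hypothesis (equal answer \emph{counts}) to an equality of answer \emph{sets} by finiteness, then rule out a surjective endomorphism by a pigeonhole argument on the number of tuples in each relation. Everything else (the lemma, the induced-substructure computation, the $(1)\Rightarrow(2)$ step) is routine bookkeeping with homomorphisms.
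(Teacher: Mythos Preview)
The paper states this as an observation without supplying any proof (presumably treating it as folklore, cf.\ the analogous characterisations of cores in~\cite{ChenM16,DellRW19}), so there is nothing to compare against. Your cyclic argument $(1)\Rightarrow(2)\Rightarrow(3)\Rightarrow(1)$ is correct, including the careful handling of the case in $(3)\Rightarrow(1)$ where the proper substructure~$\mathcal{A}'$ has the same universe as~$\mathcal{A}$ but drops tuples; the pigeonhole step (a bijective endomorphism is injective on tuples, forcing $|R^{\mathcal{A}}|\le|R^{\mathcal{A}'}|$) is exactly what is needed there. One tiny wording quibble: in your lemma you write ``post-composing an extension $\mathcal{A}\to\mathcal{D}$ of $a$ with $g$'', but $g$ goes $\mathcal{B}\to\mathcal{A}$, so what you mean is pre-composition, i.e.\ $\alpha\circ g\colon\mathcal{B}\to\mathcal{D}$ for an extension $\alpha\colon\mathcal{A}\to\mathcal{D}$ of $a$; the mathematics is right, just tighten the phrasing.
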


It turns out that \#equivalence is the same as isomorphism if all variables are free, and it is the same as homomorphic equivalence 
if all variables are existentially quantified (see e.g.\ the discussion in Section 5 in the full version of~\cite{DellRW19}). 
Moreover, each quantifier-free conjunctive query is $\cminimal$.

\begin{lemma}[Corollary 54 in full version of~\cite{DellRW19}]\label{lem:counting_minimal_isomorphic}\label{lem:Z}
    If $(\mathcal{A},X)$ and $(\mathcal{A}',X')$ are $\cminimal$ and $(\mathcal{A},X)\sim(\mathcal{A}',X')$ then $(\mathcal{A},X)\cong (\mathcal{A}',X')$.
\end{lemma}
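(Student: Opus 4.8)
The statement to prove is Lemma~\ref{lem:counting_minimal_isomorphic} (``Corollary 54 in full version of~\cite{DellRW19}''): if $(\mathcal{A},X)$ and $(\mathcal{A}',X')$ are both $\cminimal$ and $\cequivalent$, then they are isomorphic. The plan is to exploit the characterisation of $\cequivalence$ via mutual ``surjective-on-$X$'' homomorphisms, together with the third item of Observation~\ref{obs:cminimalequivalent}, which says that for a $\cminimal$ query every self-homomorphism fixing $X$ pointwise is surjective. First I would record the standard fact (the counting analogue of the Chandra--Merlin characterisation) that $(\mathcal{A},X)\sim(\mathcal{A}',X')$ implies $|X|=|X'|$ and, after identifying $X$ with $X'$ via the bijection witnessed on one-element-image databases, there exist homomorphisms $f\colon\mathcal{A}\to\mathcal{A}'$ and $g\colon\mathcal{A}'\to\mathcal{A}$ that are the identity on $X=X'$. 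The cleanest route to this: plug into the $\cequivalence$ a suitably ``generic'' database built from $\mathcal{A}$ itself (e.g.\ $\mathcal{D}=\mathcal{A}$ with $X$ elements kept distinct) and count answers; since $\ans{(\mathcal{A},X)}{\mathcal{A}}$ is realised by the identity assignment on $X$, the equality of answer counts forces a homomorphism $\mathcal{A}'\to\mathcal{A}$ extending $\id_X$, and symmetrically $\mathcal{A}\to\mathcal{A}'$.

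Given such $f$ and $g$, consider the composition $g\circ f\colon\mathcal{A}\to\mathcal{A}$, which is a self-homomorphism of $\mathcal{A}$ that is the identity on $X$. By Observation~\ref{obs:cminimalequivalent}(3) applied to the $\cminimal$ query $(\mathcal{A},X)$, the map $g\circ f$ is surjective, hence (as $U(\mathcal{A})$ is finite) a bijection. Symmetrically $f\circ g$ is a bijection on $U(\mathcal{A}')$. From bijectivity of both compositions it follows in the usual way that $f$ and $g$ are themselves bijections and that $g=f^{-1}$. It then remains to upgrade the bijective homomorphism $f$ to an isomorphism of structures, i.e.\ to check that $f^{-1}$ is also a homomorphism on each relation; this is immediate because $f$ bijective + $f$ homomorphism + $f^{-1}$ homomorphism is exactly the definition, and $f^{-1}=g$ is a homomorphism by construction. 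Since $f(X)=X'$, this yields $(\mathcal{A},X)\cong(\mathcal{A}',X')$.

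One subtlety I would be careful about: a priori the bijection between $X$ and $X'$ is not given for free — I need to argue $|X|=|X'|$ and fix a correspondence before talking about ``homomorphisms that are the identity on $X$''. This is handled by evaluating both queries on databases whose universe has size $|X|$ with all relations empty except as needed, so that answer counts detect $|X|$; and more carefully, by testing against databases of the form ``$\mathcal{A}$ with an extra isolated colour class'' to force the witnessing homomorphisms to respect a fixed bijection $X\to X'$. (In fact the reference builds this into how answers to $(\mathcal{A},X)$ in the canonical database $\mathcal{A}$ decompose.)

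\textbf{Main obstacle.} The genuinely delicate step is establishing the existence of the two homomorphisms $f,g$ that are simultaneously the identity on $X=X'$ purely from equality of \emph{answer counts} (rather than from the answer \emph{sets}): one must choose the right test database and do the counting argument — essentially, evaluating $(\mathcal{A}',X')$ on the canonical structure associated to $(\mathcal{A},X)$ and observing that the number of answers equals the number of self-homomorphisms of $\mathcal{A}$ fixing $X$, which by $\cminimality$ of $(\mathcal{A},X)$ forces at least one answer to be ``the identity on $X$'', i.e.\ a homomorphism $\mathcal{A}'\to\mathcal{A}$ extending $\id_X$. Once $f$ and $g$ are in hand, the remainder (compose, invoke Observation~\ref{obs:cminimalequivalent}(3), finiteness, conclude isomorphism) is routine. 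I would therefore spend the bulk of the write-up on this counting/test-database step and treat the rest briskly.
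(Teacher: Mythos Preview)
The paper does not give its own proof of this lemma: it is quoted verbatim as ``Corollary 54 in the full version of~\cite{DellRW19}'' and used as a black box. So there is no in-paper proof to compare against. That said, your outline is the standard argument and is essentially correct; in particular, the step you flag as the ``main obstacle'' --- obtaining homomorphisms $f\colon\mathcal{A}\to\mathcal{A}'$ and $g\colon\mathcal{A}'\to\mathcal{A}$ that are surjective on the free variables --- is exactly the statement the paper itself cites elsewhere (in the proof of Lemma~\ref{lem:ccore_subraph}) as ``Lemma 48 in the full version of~\cite{DellRW19}''. You have correctly located the crux.

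One small gap to tighten: you assert that after identifying $X$ with $X'$ you can arrange \emph{both} $f|_X=\id$ and $g|_X=\id$, so that $g\circ f$ is the identity on $X$ and Observation~\ref{obs:cminimalequivalent}(3) applies directly. From the bare characterisation you only get that $f|_X$ and $g|_X$ are bijections, and a priori they need not be mutual inverses; hence $g\circ f$ restricted to $X$ is only a \emph{permutation} of $X$, not the identity. The fix is easy: take a power $(g\circ f)^n$ whose restriction to $X$ is the identity (possible since $X$ is finite), apply Observation~\ref{obs:cminimalequivalent}(3) to that power to get surjectivity, and conclude that $g\circ f$ itself is a bijection. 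From there, some further power of $g\circ f$ is the identity on all of $U(\mathcal{A})$ (finite endomorphism monoid), so $g\circ f$ is an automorphism; then $f^{-1}=(g\circ f)^{-1}\circ g$ is a composition of homomorphisms, hence a homomorphism, and $f$ is the desired isomorphism with $f(X)=X'$. With this adjustment your plan goes through cleanly.
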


As isomorphism trivially implies \#equivalence, Lemma~\ref{lem:counting_minimal_isomorphic} shows that, for $\cminimal$ queries, \#equivalence and isomorphism coincide.

\begin{definition}[$\ccore$]\label{def:core}
A \emph{$\ccore$} of a conjunctive query $(\mathcal{A},X)$ is a $\cminimal$ conjunctive query $(\mathcal{A}',X')$ with $(\mathcal{A},X) \sim (\mathcal{A}',X')$.
\end{definition}
By Lemma~\ref{lem:counting_minimal_isomorphic}, the $\ccore$ is unique up to isomorphisms; in fact, this allows us to speak of ``the'' $\ccore$ of a conjunctive query.

\paragraph{Classification of $\#\homsprob(C)$ via Treewidth and Contracts}
It is well known that the complexity of counting answers to a conjunctive query is governed by its treewidth, and by the treewidth of its contract~\cite{ChenM15,DellRW19}, which we define as follows.
\begin{definition}[Contract]\label{def:contract}
Let $(\mathcal{A},X)$ be a conjunctive query, let $Y=U(\mathcal{A})\setminus X$, and let $G$ be the Gaifmann graph of $\mathcal{A}$. The \emph{contract} of $(\mathcal{A},X)$, denoted by $\contract{\mathcal{A},X}$ is obtained from $G[X]$ by adding an edge between each pair of vertices $u$ and $v$ for which there is a connected component $S$ in $G[Y]$ that is adjacent to both $u$ and $v$, that is, there are vertices $x,y\in S$ such that $\{x,u\}\in E(G)$ and $\{y,v\}\in E(G)$. Given a class of conjunctive queries $C$, we write $\contract{C}$ for the class of all contracts of queries in $C$.
\end{definition}

We note that there are multiple equivalent ways to define the contract of a query. For our purposes, the definition in~\cite{DellRW19} is most suitable. Also, the treewidth of the contract of a conjunctive query is an upper bound of what is called the query's ``star size'' in~\cite{DurandM13} and  its ``dominating star size'' in~\cite{DellRW19}.

Chen and Mengel established the following classification for counting answers to conjunctive queries of bounded arity.
\begin{theorem}[\cite{ChenM15}]\label{thm:cq_classification}
    Let $C$ be a recursively enumerable class of conjunctive queries of bounded arity, and let $C'$ be the class of $\ccores$ of queries in $C$. If the treewidth of $C'$ and of $\contract{C'}$ is bounded, then $\#\homsprob(C)$ is solvable in polynomial time. Otherwise, $\#\homsprob(C)$ is $\ccW{1}$-hard.
\end{theorem}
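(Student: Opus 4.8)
The plan is to prove the tractability half and the hardness half separately; in both cases the first move is to pass from a query $(\mathcal{A},X)$ to its $\ccore$, which by Definition~\ref{def:cequivalence} has the same number of answers in every database and which, by hypothesis, lies in $C'$ and hence has bounded treewidth and bounded contract treewidth. For tractability I would replace the input by its $\ccore$ (a preprocessing step whose cost depends only on the query, so it does not affect polynomiality in the database) and thus assume $(\mathcal{A},X)$ is $\cminimal$. Write $Y=U(\mathcal{A})\setminus X$, let $G$ be the Gaifman graph of $\mathcal{A}$, and decompose $G[Y]$ into connected components $S_1,\dots,S_m$ with neighbourhoods $N(S_i)\subseteq X$. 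Since distinct components of $G[Y]$ are non-adjacent, an assignment $a\colon X\to U(\mathcal{D})$ is an answer of $(\mathcal{A},X)$ in $\mathcal{D}$ exactly when $a$ is a homomorphism from the induced substructure $\mathcal{A}[X]$ to $\mathcal{D}$ and, for every $i$, the restriction $a|_{N(S_i)}$ extends to a homomorphism from $\mathcal{A}[S_i\cup N(S_i)]$ to $\mathcal{D}$. Now $N(S_i)$ is a clique in $\contract{\mathcal{A},X}$, so $\abs{N(S_i)}$ is at most one plus the treewidth of $\contract{C'}$; and $\mathcal{A}[S_i\cup N(S_i)]$ is an induced substructure of $\mathcal{A}$, so its treewidth is bounded by that of $C'$. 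Hence, by dynamic programming along a tree decomposition of bounded width, one computes in time polynomial in $\abs{\mathcal{D}}$ the relation $R_i\subseteq U(\mathcal{D})^{N(S_i)}$ of all partial assignments on $N(S_i)$ that extend to a homomorphism from $\mathcal{A}[S_i\cup N(S_i)]$ to $\mathcal{D}$, which has polynomial size because $\abs{N(S_i)}$ is bounded. Finally, $\ans{(\mathcal{A},X)}{\mathcal{D}}=\homs{\mathcal{B}}{\mathcal{D}'}$, where $\mathcal{B}$ is the structure on universe $X$ carrying the relations of $\mathcal{A}$ restricted to $X$ together with one fresh atom on each tuple $N(S_i)$, and $\mathcal{D}'$ interprets those fresh symbols by the $R_i$; the Gaifman graph of $\mathcal{B}$ is exactly $\contract{\mathcal{A},X}$, so it has bounded treewidth, and $\homs{\mathcal{B}}{\mathcal{D}'}$ is computed in polynomial time by the standard tree-decomposition dynamic program for counting homomorphisms.

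For hardness, suppose the treewidth of $C'$ or of $\contract{C'}$ is unbounded. Since counting the answers of a query is exactly as hard as counting the answers of its $\ccore$, it suffices, for arbitrarily large $k$, to produce a query $\Psi_k\in C$ whose $\ccore$ $(\mathcal{A}_k,X_k)$ has treewidth, or contract treewidth, at least some $t_k$ chosen large enough to encode a $k$-clique, together with a family of databases encoding a given $k$-clique instance, from which a parameterised Turing reduction $\textsc{Clique}\fptred\#\homsprob(C)$ is assembled; such $\Psi_k$ are found by enumerating the recursively enumerable class $C$ and computing treewidths, and $\abs{\Psi_k}$ is a computable function of $k$. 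I would then split into two cases. If $\contract{C'}$ has unbounded treewidth, the Excluded Grid Theorem together with standard treewidth-to-clique embedding arguments yields a large grid-like pattern among the free variables of $(\mathcal{A}_k,X_k)$ whose edges are carried by the hyperedges $N(S_i)$; by the tractability analysis, counting answers amounts to counting homomorphisms from this contract structure into a target controlled through the extension relations $R_i$, so one engineers the databases to make the $R_i$ realise the constraint gadgets of a $k$-clique-counting instance and recovers the clique count by an interpolation over polynomially many databases, exploiting the classification of counting homomorphisms~\cite{DalmauJ04}. If instead $\contract{C'}$ has bounded treewidth but $C'$ does not, then, because the $\ccore$ is $\cminimal$, this property descends to the components: any endomorphism of $\mathcal{A}_k[S_i\cup N(S_i)]$ fixing $N(S_i)$ and not surjective would extend, by the identity on the rest of $\mathcal{A}_k$, to a non-surjective endomorphism of $\mathcal{A}_k$ fixing $X_k$, contradicting Observation~\ref{obs:cminimalequivalent}. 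Hence the unbounded treewidth is localised in a $\cminimal$ substructure $\mathcal{A}_k[S\cup N(S)]$ with $\abs{N(S)}$ bounded; when $N(S)=\emptyset$ this is a classical homomorphism-core of unbounded treewidth, so hardness follows from Grohe's classification of constraint satisfaction~\cite{Grohe07}, and the general bounded-boundary case is handled by running the grid-embedding and interpolation argument inside the component.

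The hardness direction is the crux, and its delicate point is the gadget construction: one must show that the high-treewidth pattern supplied by the Excluded Grid Theorem is genuinely visible to the answer-counting function --- that by varying the database one can force the extension relations $R_i$ to take enough distinct values to encode clique constraints rather than collapsing to trivial relations. This is exactly where the hypothesis that $C'$ consists of $\cminimal$ queries is used, and it is also the reason the dichotomy must be stated in terms of $C'$ (and $\contract{C'}$) rather than $C$ itself. A secondary but necessary point is the bookkeeping for the parameterised reduction: the interpolation extracting the exact $k$-clique count from a family of answer counts must use a number of oracle calls, and oracle-query sizes, bounded by a computable function of $k$, as required by condition (II) in the definition of a parameterised Turing reduction.
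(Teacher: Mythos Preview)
The paper does not give a proof of Theorem~\ref{thm:cq_classification}; it is quoted from Chen and Mengel~\cite{ChenM15} and used as a black box, so there is no proof in the paper to compare your proposal against.

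As an independent assessment: your tractability argument is essentially the standard one (cf.~\cite{DurandM13,ChenM15}) and is correct in outline --- decompose the quantified part into connected components, materialise each extension relation $R_i$ on its bounded-size boundary $N(S_i)$ using a bounded-width tree decomposition, and then count homomorphisms from the contract structure. One caveat: the statement claims polynomial time in the combined input, not merely FPT, so ``cost depends only on the query'' is not by itself enough; you must argue that the preprocessing is polynomial in $|(\mathcal{A},X)|$ as well.

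Your hardness sketch names the right ingredients --- the Excluded Grid Theorem, Grohe's reduction~\cite{Grohe07} for the fully quantified core case, an interpolation step, and the essential use of $\cminimal$ity to prevent the pattern from collapsing --- and your observation that $\cminimal$ity descends to the components $\mathcal{A}_k[S\cup N(S)]$ is exactly the lever Chen and Mengel use. However, the proposal remains a plan rather than a proof: the phrase ``one engineers the databases to make the $R_i$ realise the constraint gadgets'' hides the entire technical contribution of~\cite{ChenM15}, which is a rather delicate construction (via case-complexity and what they call strict star size) showing that high contract treewidth of a $\cminimal$ query genuinely yields a $\ccW{1}$-hard counting problem. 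Your two-case split (unbounded $\contract{C'}$ versus bounded $\contract{C'}$ but unbounded $C'$) also does not align with the trichotomy structure in~\cite{ChenM15}, where the intermediate $\ccW{1}$-equivalent band requires its own argument. So the outline is sound, but substantial work separates it from a proof.
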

We point out that the $\ccW{1}$-hard cases can further be partitioned into $\ccW{1}$-complete, $\#\ccW{1}$-complete and even harder cases~\cite{ChenM15,DellRW19}.\footnote{Those cases are: $\#\ccW{2}$-hard and $\#\mathrm{A}[2]$-complete.} However, for the purpose of this work, we are only interested in tractable and intractable cases (recall that $\ccW{1}$-hard problems are not fixed-parameter tractable under standard assumptions from fine-grained and parameterised complexity theory, such as ETH).

\paragraph{Self-join-free Conjunctive Queries and Isolated Variables}
A conjunctive query $(\mathcal{A},X)$ is \emph{self-join-free} if each relation of $\mathcal{A}$ contains at most one tuple. We say that a variable of a conjunctive query is \emph{isolated} if it is not part of any relation.

Note that that adding/removing isolated variables to/from a conjunctive query does not change its treewidth or the treewidth of its $\ccore$. Further, it does not change the complexity of counting answers: Just multiply/divide by $n^v$, where $n$ is the number of elements of the database and $v$ it the number of added/removed isolated free variables. For this reason, we will allow ourselves in this work to freely add and remove isolated variables from the queries that we encounter.
For the existence of homomorphisms we also observe the following. 
\begin{observation}\label{obs:CQplusisolated}
Let $(\mathcal{A},X)$ be a conjunctive query, let $X'$ be a superset of $X$ and let $\mathcal{A}'$ be the structure obtained from $\mathcal{A}$ by adding an isolated variable for each $x\in X'\setminus X$. Then for all $a: X' \to U(\mathcal{D})$ we have that $a|_{X}\in\Ans{(\mathcal{A},X)}{\mathcal{D}}$ iff $a\in\Ans{(\mathcal{A}',X')}{\mathcal{D}}$.
\end{observation}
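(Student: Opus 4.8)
The plan is to prove both implications of the biconditional by a direct restriction/extension argument on homomorphisms, relying on the fact that $\mathcal{A}'$ differs from $\mathcal{A}$ only by the addition of isolated elements: we have $R^{\mathcal{A}} = R^{\mathcal{A}'}$ for every relation symbol $R$, and the new elements $U(\mathcal{A}')\setminus U(\mathcal{A}) = X'\setminus X$ occur in no tuple of $\mathcal{A}'$. Recall that, by the homomorphism characterisation of answers, $a|_X\in\Ans{(\mathcal{A},X)}{\mathcal{D}}$ means there is $h\in\Homs{\mathcal{A}}{\mathcal{D}}$ with $h|_X = a|_X$, and similarly for $(\mathcal{A}',X')$.

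For the direction ``$a \in \Ans{(\mathcal{A}',X')}{\mathcal{D}} \Rightarrow a|_X \in \Ans{(\mathcal{A},X)}{\mathcal{D}}$'', I would take a homomorphism $h\colon\mathcal{A}'\to\mathcal{D}$ with $h|_{X'} = a$ and restrict it to $g := h|_{U(\mathcal{A})}$. Since every tuple in every relation of $\mathcal{A}$ consists only of elements of $U(\mathcal{A})$, and the relations of $\mathcal{A}$ and $\mathcal{A}'$ coincide, $g$ is a homomorphism from $\mathcal{A}$ to $\mathcal{D}$; moreover $g|_X = h|_X = a|_X$, so $a|_X\in\Ans{(\mathcal{A},X)}{\mathcal{D}}$. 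For the converse, I would start from $g\colon\mathcal{A}\to\mathcal{D}$ with $g|_X = a|_X$ and extend it to $h\colon U(\mathcal{A}')\to U(\mathcal{D})$ by $h(x) = a(x)$ for each $x\in X'\setminus X$ and $h(u) = g(u)$ otherwise. Because the elements of $X'\setminus X$ are isolated in $\mathcal{A}'$, they impose no constraints, so $h$ is still a homomorphism from $\mathcal{A}'$ to $\mathcal{D}$, and $h|_{X'} = a$ by construction, giving $a\in\Ans{(\mathcal{A}',X')}{\mathcal{D}}$.

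There is no genuine obstacle here: this is a bookkeeping observation justifying that one may freely pad a query with isolated free variables. The only points needing (minor) care are that restricting a homomorphism of $\mathcal{A}'$ to $U(\mathcal{A})$ is well-defined precisely because no relation of $\mathcal{A}$ mentions a variable outside $U(\mathcal{A})$, and that the extension in the other direction is harmless precisely because the added free variables are isolated and hence unconstrained.
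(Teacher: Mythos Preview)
Your argument is correct and is precisely the natural unfolding of the definitions; the paper states this as an observation without proof, and your restriction/extension argument is exactly what is implicitly intended.
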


\subsection{Unions of CQs and the Homomorphism Basis}\label{sec:UCQstoHoms}

A \emph{union of conjunctive queries} (UCQ) $\Psi$ is a tuple of structures $(\calA_1, \ldots, \calA_{\ell(\Psi)})$ over the same signature 
together with a set of designated elements $X$ 
(the free variables)
that are in the universe of each of the structures. 
For each $J\subseteq [\ell(\Psi)]$, we define $\Psi\vert_{J}=((A_j)_{j\in J}, X)$.
If we restrict to a single term of the union then we usually just write $\Psi_i$ instead of $\Psi\vert_{\{i\}}$. Note that $\Psi_i=(\calA_i, X)$ is simply a conjunctive query (rather than a union of CQs). 

We will assume (without loss of generality) that, for any distinct $i$ and $i'$ in $[\ell(\Psi)]$, 
$U(\mathcal{A}_i) \cap U(\mathcal{A}_{i'}) = X$, i.e., that each CQ in the union has its own set of existentially quantified variables.

If each such conjunctive query is acyclic we say that $\Psi$ is a union of acyclic conjunctive queries. Moreover, the arity of $\Psi$ is the maximum arity of any of the $\calA_i$. The size of $\Psi$ is $|\Psi|=\sum_{i=1}^{\ell(\Psi)} |\Psi_i|$. The elements of $X$ are the \emph{free variables} of $\Psi$ and $\ell(\Psi)$ is the number of CQs in the union.
    
The set of \emph{answers} of $\Psi$ in a database $\calD$, denoted by $\Ans{\Psi}{\calD}$ is defined as follows:
\[\Ans{\Psi}{\calD} = \left\{a: X \to U(\calD) \mid \exists i \in[\ell]: a\in
\Ans{\Psi_i}{\calD}\right\} .\]
Again, we use the lower case version $\ans{\Psi}{\calD}$ to denote the \emph{number} of answers of $\Psi$ in $\calD$.

In the definition of UCQs we assume that every CQ in the union has the same set of free variables, namely~$X$. This assumption is without loss of generality. To see this, suppose that we have a union of CQs $(\mathcal{A}_1,X_1),\dots,(\mathcal{A}_\ell,X_\ell)$ with individual sets of free variables. Let $X=\bigcup_{i=1}^\ell X_i$ and, for each $i\in[\ell]$, let $\mathcal{A}'_i$ be the structure obtained from $\mathcal{A}_i$ by adding an isolated variable  
for each $x\in X\setminus X_i$. Then consider the UCQ
$ \Psi:=((\mathcal{A}'_1,\ldots,\mathcal{A}'_\ell),X)$.
If for some assignment $a: X \to U(\mathcal{D})$ it holds that there is an $i\in [\ell]$ such that $a|_{X_i} \in  \Ans{(\mathcal{A}_i,X_i)}{\mathcal{D}}$, then, according to Observation~\ref{obs:CQplusisolated}, this is equivalent to $a \in  \Ans{(\mathcal{A}'_i,X)}{\mathcal{D}}$, which means that $a$ is an answer of $\Psi$.
So, without loss of generality we can work with $\Psi$, which uses the same set of free variables for each CQ in the union.

Now we define the parameterised problem of counting answers to UCQs. As usual, the problem is restricted by a class $C$ of allowed queries with respect to which we classify the complexity.

\vbox{
\begin{description}\setlength{\itemsep}{0pt}
\setlength{\parskip}{0pt}
\setlength{\parsep}{0pt}   			
\item[\bf Name:] $\#\ucq(C)$ 
\item[\bf Input:]   A UCQ $\Psi\in C$ together with a database $\mathcal{D}$.
\item[\bf Parameter:]   $|\Psi|$. 
\item[\bf Output:]  The number of answers $\ans{\Psi}{\mathcal{D}}$. 
\end{description}
}

The next definition will be crucial for the analysis of the complexity of $\#\ucq(C)$.
\begin{definition}[combined query $\conjunctfull{\Psi}$]
    Let $\Psi=((\mathcal{A}_1,,\dots,\mathcal{A}_\ell),X)$ be a UCQ. Then we define the \emph{combined query} $\conjunctfull{\Psi}=(\bigcup_{j\in [\ell]}A_j,X)$.
\end{definition}
What follows is an easy, but crucial observation about $\conjuncts{\Psi}{J}$.
\begin{observation}\label{fact:easy_fact}
Let $((\mathcal{A}_1,,\dots,\mathcal{A}_\ell),X)$ be a UCQ, and let $\emptyset\neq J\subseteq[\ell]$. For each database $\mathcal{D}$ and assignment $a: X \to U(\mathcal{D})$ we have
\[ a \in \Ans{\conjuncts{\Psi}{J}}{\mathcal{D}} \Leftrightarrow \forall j \in J : a \in \Ans{\Psi_j}{\mathcal{D}}.\]
\end{observation}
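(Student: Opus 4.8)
The plan is to unfold the definitions on both sides of the claimed equivalence and verify it directly; the statement is essentially a bookkeeping exercise, and the one genuine point requiring care is the well-definedness of a ``glued'' homomorphism, which will rely on the standing assumption that distinct CQs in the union share only the free variables~$X$.

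First I would recall that, by definition, $\conjuncts{\Psi}{J}$ is the conjunctive query $(\bigcup_{j\in J}\mathcal{A}_j, X)$, so that $a\in\Ans{\conjuncts{\Psi}{J}}{\mathcal{D}}$ means precisely that there is a homomorphism $h\in\Homs{\bigcup_{j\in J}\mathcal{A}_j}{\mathcal{D}}$ with $h|_X = a$. For the forward direction, given such an~$h$, I would observe that for every $j\in J$ the restriction $h|_{U(\mathcal{A}_j)}$ is a homomorphism from $\mathcal{A}_j$ to $\mathcal{D}$: every relation $R^{\mathcal{A}_j}$ is contained in $R^{\bigcup_{j\in J}\mathcal{A}_j}$, so every tuple of $\mathcal{A}_j$ is mapped by~$h$, and hence by $h|_{U(\mathcal{A}_j)}$, into $R^{\mathcal{D}}$; moreover $X\subseteq U(\mathcal{A}_j)$, so $(h|_{U(\mathcal{A}_j)})|_X = a$. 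Thus $a\in\Ans{\Psi_j}{\mathcal{D}}$ for every $j\in J$.

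For the backward direction, I would pick for each $j\in J$ a homomorphism $h_j\in\Homs{\mathcal{A}_j}{\mathcal{D}}$ with $h_j|_X = a$, and define $h$ on $\bigcup_{j\in J}U(\mathcal{A}_j)$ by $h(u)=h_j(u)$ whenever $u\in U(\mathcal{A}_j)$. The key point is that $h$ is well-defined: if $u\in U(\mathcal{A}_j)\cap U(\mathcal{A}_{j'})$ for distinct $j,j'\in J$, then by the standing disjointness convention $u\in X$, and so $h_j(u)=a(u)=h_{j'}(u)$. Then $h$ is a homomorphism from $\bigcup_{j\in J}\mathcal{A}_j$ to $\mathcal{D}$, since any tuple of $R^{\bigcup_{j\in J}\mathcal{A}_j}=\bigcup_{j\in J}R^{\mathcal{A}_j}$ already lies in some $R^{\mathcal{A}_j}$, hence is mapped into $R^{\mathcal{D}}$ by $h_j$ and therefore by~$h$; and $h|_X = a$. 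Consequently $a\in\Ans{\conjuncts{\Psi}{J}}{\mathcal{D}}$, completing the equivalence.

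I do not expect any real obstacle: the ``hardest'' part is merely remembering to invoke the convention $U(\mathcal{A}_i)\cap U(\mathcal{A}_{i'})=X$ in order to glue the $h_j$ into a single homomorphism on the combined structure, and to note that taking unions of structures only enlarges each relation (so restrictions of a homomorphism on the union remain homomorphisms on each $\mathcal{A}_j$).
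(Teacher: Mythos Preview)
Your proof is correct and is exactly the natural unfolding of the definitions; the paper itself does not supply a proof for this observation, treating it as immediate, and your argument is precisely the routine verification (including the use of the disjointness convention $U(\mathcal{A}_i)\cap U(\mathcal{A}_{i'})=X$ for the gluing step) that justifies it.
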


\begin{definition}[Coefficient function $c_\Psi$]\label{def:coefUCQ}
Let $\Psi = ((\mathcal{A}_1,,\dots,\mathcal{A}_\ell),X)$ be a UCQ. For each conjunctive query $(\mathcal{A},X)$, we set $\mathcal{I}(\mathcal{A},X) = \{J \subseteq [\ell] \mid  (\mathcal{A},X) \cequiv \conjuncts{\Psi}{J}\}$, and we define the \emph{coefficient function} of $\Psi$ as follows: 
    \[ c_\Psi(\mathcal{A},X) = \sum_{J \in \mathcal{I}(\mathcal{A},X)} (-1)^{|J|+1} \,.\]
\end{definition}

Using inclusion-exclusion, we can transform the problem of counting answers to $\Psi$ 
into the problem of evaluating
a linear combination of CQ answer counts. We include a proof only for reasons of self-containment and note that the complexity-theoretic applications of this transformation, especially regarding lower bounds, have first been discovered by Chen and Mengel~\cite{ChenM16}.

\begin{lemma}[\cite{ChenM16}]\label{lem:UCQ_to_hombasis}
Let $\Psi = ((\mathcal{A}_1,,\dots,\mathcal{A}_\ell),X)$ be a UCQ.  For every database~$\mathcal{D}$,
\[ \ans{\Psi}{\mathcal{D}} = \sum_{(\mathcal{A},X)} c_\Psi(\mathcal{A},X) \cdot \ans{(\mathcal{A},X)}{\mathcal{D}},\]
where the sum is over all equivalence classes of $\cequiv$.
\end{lemma}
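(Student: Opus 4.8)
The plan is to compute $\ans{\Psi}{\mathcal{D}}$ by a direct inclusion–exclusion over the $\ell$ conjunctive queries $\Psi_1,\dots,\Psi_\ell$ of the union, and then regroup the resulting terms according to $\cequiv$-equivalence classes. Fix a database $\mathcal{D}$. For each $i\in[\ell]$ let $S_i=\Ans{\Psi_i}{\mathcal{D}}$, a subset of the finite set of assignments $a\colon X\to U(\mathcal{D})$. By definition $\Ans{\Psi}{\mathcal{D}}=\bigcup_{i\in[\ell]} S_i$, so the classical inclusion–exclusion principle gives
\[
\ans{\Psi}{\mathcal{D}} \;=\; \Bigl|\bigcup_{i\in[\ell]} S_i\Bigr| \;=\; \sum_{\emptyset\neq J\subseteq[\ell]} (-1)^{|J|+1}\, \Bigl|\bigcap_{j\in J} S_j\Bigr|.
\]
The first step is to identify $\bigl|\bigcap_{j\in J} S_j\bigr|$ with $\ans{\conjuncts{\Psi}{J}}{\mathcal{D}}$. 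This is exactly Observation~\ref{fact:easy_fact}: an assignment $a$ lies in $\bigcap_{j\in J}S_j$ iff $a\in\Ans{\Psi_j}{\mathcal{D}}$ for all $j\in J$, which by that observation is equivalent to $a\in\Ans{\conjuncts{\Psi}{J}}{\mathcal{D}}$. Hence $\bigl|\bigcap_{j\in J}S_j\bigr| = \ans{\conjuncts{\Psi}{J}}{\mathcal{D}}$, and so
\[
\ans{\Psi}{\mathcal{D}} \;=\; \sum_{\emptyset\neq J\subseteq[\ell]} (-1)^{|J|+1}\, \ans{\conjuncts{\Psi}{J}}{\mathcal{D}}.
\]

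The second step is to regroup this sum by $\cequiv$-equivalence classes. Each $\conjuncts{\Psi}{J}$ is a conjunctive query with free variables $X$, so it falls into exactly one equivalence class of $\cequiv$; and by definition of $\cequiv$, if $(\mathcal{A},X)\cequiv\conjuncts{\Psi}{J}$ then $\ans{\conjuncts{\Psi}{J}}{\mathcal{D}}=\ans{(\mathcal{A},X)}{\mathcal{D}}$ for every $\mathcal{D}$. Partitioning the index set $\{J : \emptyset\neq J\subseteq[\ell]\}$ according to the equivalence class of $\conjuncts{\Psi}{J}$, and recalling that $\mathcal{I}(\mathcal{A},X)=\{J\subseteq[\ell] : (\mathcal{A},X)\cequiv\conjuncts{\Psi}{J}\}$ (note $\emptyset\notin\mathcal{I}(\mathcal{A},X)$ since $\conjuncts{\Psi}{\emptyset}$ is undefined, so this is consistent), we obtain
\[
\ans{\Psi}{\mathcal{D}} \;=\; \sum_{(\mathcal{A},X)} \;\sum_{J\in\mathcal{I}(\mathcal{A},X)} (-1)^{|J|+1}\, \ans{(\mathcal{A},X)}{\mathcal{D}} \;=\; \sum_{(\mathcal{A},X)} c_\Psi(\mathcal{A},X)\cdot \ans{(\mathcal{A},X)}{\mathcal{D}},
\]
where the outer sum ranges over equivalence classes of $\cequiv$ and we used Definition~\ref{def:coefUCQ} of $c_\Psi$ in the last equality. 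Only finitely many classes have $\mathcal{I}(\mathcal{A},X)\neq\emptyset$ (at most $2^\ell-1$), so the sum is finite and well-defined, and picking one representative per class does not change the value by $\cequiv$-invariance of $\ans{\cdot}{\mathcal{D}}$.

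The argument is essentially routine; there is no real obstacle. The only point requiring a little care is the regrouping step: one must check that the equivalence classes genuinely partition the nonempty subsets of $[\ell]$ (immediate, since $\cequiv$ is an equivalence relation on conjunctive queries) and that the value $\ans{(\mathcal{A},X)}{\mathcal{D}}$ is the same for every representative of a class, which is exactly the defining property of $\cequiv$. With those observations in place the identity follows by rearranging a finite sum.
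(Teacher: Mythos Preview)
Your proof is correct and follows exactly the same approach as the paper: apply inclusion--exclusion to the union $\bigcup_i \Ans{\Psi_i}{\mathcal{D}}$, identify the intersections with $\ans{\conjuncts{\Psi}{J}}{\mathcal{D}}$ via Observation~\ref{fact:easy_fact}, and then collect $\cequiv$-equivalent terms to obtain the coefficients $c_\Psi$. The paper's proof is more terse, but the content is identical.
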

\begin{proof}
By inclusion-exclusion and Observation~\ref{fact:easy_fact},
\begin{align*}
\ans{\Psi}{\mathcal{D}} &= \sum_{\emptyset \neq J \subseteq [\ell]} (-1)^{|J|+1} \cdot \abs{\{a: X \to U(\mathcal{D}) \mid \forall j \in J : a \in \Ans{(A_j,X)}{\mathcal{D}} \}}\\
~&=\sum_{\emptyset \neq J \subseteq [\ell]} (-1)^{|J|+1} \cdot \ans{\conjuncts{\Psi}{J}}{\mathcal{D}}. 
\end{align*}
The claim then follows by collecting $\cequivalent$ terms.
\end{proof}

We conclude this subsection with the following two operations on classes of UCQs.
\begin{definition}[$\Gamma(C)$ and $\conjunctclass{C}$]\label{def:Gamma}
    Let $C$ be a class of UCQs.   $\Gamma(C)$ is the class of all 
        $(\mathcal{A},X)$ such that $(\mathcal{A},X)$ is $\cminimal$ and there is  $\Psi\in C$ with $c_\Psi(\mathcal{A},X) \neq 0$.
    Let $\conjunctclass{C} = \{ \conjunctfull{\Psi} \mid \Psi\in C \}$.
\end{definition}

\subsection{Complexity Monotonicity}\label{sec:complexitymon}
The principle of Complexity Monotonicity states that the computation of a linear combination of homomorphism counts is precisely as hard as computing its hardest term. It was independently discovered by Curticapean, Dell and Marx~\cite{CurticapeanDM17} and by Chen and Mengel~\cite{ChenM16}. Moreover, Chen and Mengel established the principle in the more general context of linear combinations of conjunctive queries. Formally, their result is stated below for the special case of linear combinations derived via counting answers to UCQs (see Lemma~\ref{lem:UCQ_to_hombasis}). We include a proof since we need to pay some special attention to running times, as we will also be interested in the question of linear time tractability. 

\begin{theorem}[Implicitly by~\cite{ChenM16}]
There is an algorithm $\mathbb{A}$ with the following properties:
\begin{enumerate}
\item The input of $\mathbb{A}$ is a UCQ $\Psi$ and a database $\mathcal{D}$.
\item $\mathbb{A}$ has oracle access to the function
$ \mathcal{D}' \mapsto \ans{\Psi}{\mathcal{D}'}$.
\item The output of $\mathbb{A}$ is a list with entries $((\mathcal{A},X),\ans{(\mathcal{A},X)}{\mathcal{D}})$ for each $(\mathcal{A},X)$ in the support of $c_\Psi$.
\item $\mathbb{A}$ runs in time $f(|\Psi|) \cdot O(|\mathcal{D}|)$ for some computable function $f$.
\end{enumerate}
\end{theorem}
\begin{proof}
    A crucial operation in the construction is the Tensor product of relational structures. Let $\mathcal{A}$ and $\mathcal{B}$ be structures over the signatures $\tau_A$ and $\tau_B$. 
    The structure $\mathcal{A}\otimes \mathcal{B}$ is defined as follows: The signature is $\tau_A \cap \tau_B$, and the universe is $U(\mathcal{A})\times U(\mathcal{B})$. Moreover, for every relation symbol $R \in \tau_A \cap \tau_B$ with arity $r$, a tuple $((u_1,v_1),\dots,(u_r,v_r))$ is contained in $R^{\mathcal{A}\otimes\mathcal{B}}$ if and only if $(u_1,\dots,u_r)\in R^{\mathcal{A}}$ and $(v_1,\dots,v_r)\in R^{\mathcal{B}}$.

Observe that $\mathcal{A}\otimes \mathcal{B}$ is of size bounded by and can be computed in time $O(|\mathcal{A}||\mathcal{B}|)$.\footnote{Compute the Cartesian product of $U(\mathcal{A})$ and $U(\mathcal{B})$ and then, for every relation $R\in \tau_A \cap \tau_B$ iterate over all pairs of tuples in $R^\mathcal{A}$ and $R^\mathcal{B}$ and add their point-wise product to $R^{\mathcal{A}\otimes \mathcal{B}}$.}
The algorithm $\mathbb{A}$ proceeds as follows. Let $\Psi=((\mathcal{A}_1,\dots,\mathcal{A}_\ell),X)$ be the input. For a selected set of structures $\mathcal{B}_1\dots,\mathcal{B}_k$, specified momentarily, the algorithm queries the oracle on the Tensor products $\mathcal{D} \otimes \mathcal{B}_i$. Using Lemma~\ref{lem:UCQ_to_hombasis}, this yields the following equations:

\[\ans{\Psi}{\mathcal{D}\otimes \mathcal{B}_i} = \sum_{(\mathcal{A},X)} c_\Psi(\mathcal{A},X) \cdot \ans{(\mathcal{A},X)}{\mathcal{D}\otimes \mathcal{B}_i}.\]
Next, we use the fact (see e.g.\ \cite{ChenM16}) that the Tensor product is multiplicative with respect to counting answers to conjunctive queries:
\[ \ans{(\mathcal{A},X)}{\mathcal{D}\otimes \mathcal{B}_i} = \ans{(\mathcal{A},X)}{\mathcal{D}}\cdot \ans{(\mathcal{A},X)}{\mathcal{B}_i}.\]
In combination, the previous equations yield a system of linear equations:
\[\ans{\Psi}{\mathcal{D}\otimes \mathcal{B}_i} = \sum_{(\mathcal{A},X)} c_\Psi(\mathcal{A},X) \cdot \ans{(\mathcal{A},X)}{\mathcal{D}}\cdot \ans{(\mathcal{A},X)}{\mathcal{B}_i},\]
the unknowns of which are $c_\Psi(\mathcal{A},X) \cdot \ans{(\mathcal{A},X)}{\mathcal{D}}$. It was shown in~\cite{ChenM16} and~\cite{DellRW19} that it is always possible to find $\mathcal{B}_i$ for which the system is non-singular. Moreover, the time it takes to find the $\mathcal{B}_i$ only depends on $\Psi$. Finally, solving the system yields the terms $c_\Psi(\mathcal{A},X) \cdot \ans{(\mathcal{A},X)}{\mathcal{D}}$ from which we can recover $\ans{(\mathcal{A},X)}{\mathcal{D}}$ by dividing by $c_\Psi(\mathcal{A},X)$. The overall running time is bounded by $f(|\Psi|) \cdot O(|\mathcal{D}|)$ for some computable function $f$: to see this, note that the numbers in the system can be generously bounded by $f'(|\Psi|)\cdot |\mathcal{D}|^{|\Psi|}$, hence their size is bounded by $\log(f'(|\Psi|)) + |\Psi|\log(|\mathcal{D}|)$ and thus the arithmetic operations needed for solving the system do not create a super-linear overhead in~$|\mathcal{D}|$. This concludes the proof.
\end{proof}

\begin{corollary}\label{cor:UCQ_monotone}
Let $\Psi$ be a UCQ. For each $d\geq 1$, computing the function $\calD\mapsto \ans{\Psi}{\calD}$ can be done in time $O(\abs{\calD}^d)$ if and only if for each $\cminimal$ $(\mathcal{A},X)$ with $c_\Psi(\mathcal{A},X)\neq 0$ the function $\calD \mapsto \ans{(\mathcal{A},X)}{\calD}$ can be computed in time $O(\abs{\calD}^d)$.
\end{corollary}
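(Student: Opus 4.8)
The corollary follows by combining the Complexity Monotonicity Theorem (the unnamed theorem just proven) with the multiplicativity of Tensor products and a counting argument to control how large the databases $\mathcal{B}_i$ need to be. The plan is to prove both directions essentially from the statement of that theorem.

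\textbf{The ``if'' direction.} Suppose that for each $\cminimal$ query $(\mathcal{A},X)$ with $c_\Psi(\mathcal{A},X)\neq 0$ the map $\calD\mapsto\ans{(\mathcal{A},X)}{\calD}$ is computable in time $O(|\calD|^d)$. By Lemma~\ref{lem:UCQ_to_hombasis} we have $\ans{\Psi}{\calD}=\sum_{(\mathcal{A},X)} c_\Psi(\mathcal{A},X)\cdot\ans{(\mathcal{A},X)}{\calD}$, where the sum ranges over $\cequiv$-equivalence classes. Since $\Psi$ is fixed, only finitely many classes have $c_\Psi(\mathcal{A},X)\neq 0$, and by Definition~\ref{def:coefUCQ} each such class contains some $\conjuncts{\Psi}{J}$ with $J\subseteq[\ell(\Psi)]$; its $\ccore$ is $\cminimal$, $\cequiv$-equivalent to $\conjuncts{\Psi}{J}$, and has the same coefficient. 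So I would rewrite the linear combination as a finite sum over $\cminimal$ representatives of these classes, evaluate each of the (constantly many) terms $\ans{(\mathcal{A},X)}{\calD}$ in time $O(|\calD|^d)$ by hypothesis, multiply by the (constant) coefficients, and add. The arithmetic is on integers of size $\mathrm{poly}(|\calD|)$, so it contributes only constant (in $|\calD|$) overhead, giving total time $O(|\calD|^d)$.

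\textbf{The ``only if'' direction.} Suppose $\calD\mapsto\ans{\Psi}{\calD}$ is computable in time $O(|\calD|^d)$. Fix a $\cminimal$ query $(\mathcal{A},X)$ with $c_\Psi(\mathcal{A},X)\neq 0$; I want to compute $\calD\mapsto\ans{(\mathcal{A},X)}{\calD}$ in time $O(|\calD|^d)$. Run the algorithm $\mathbb{A}$ from the Complexity Monotonicity Theorem on input $(\Psi,\calD)$: it needs oracle access to $\calD'\mapsto\ans{\Psi}{\calD'}$, and each such oracle call is invoked on a Tensor product $\calD\otimes\mathcal{B}_i$ where $\mathcal{B}_1,\dots,\mathcal{B}_k$ depend only on $\Psi$. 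Since $|\calD\otimes\mathcal{B}_i|=O(|\calD||\mathcal{B}_i|)$ and $|\mathcal{B}_i|$ is a constant depending only on $\Psi$, each oracle call can be answered by our assumed algorithm for $\Psi$ in time $O((|\calD||\mathcal{B}_i|)^d)=O(|\calD|^d)$ (the $|\mathcal{B}_i|^d$ factor is absorbed into the constant). There are $k=O_\Psi(1)$ such calls, and the remaining work of $\mathbb{A}$ (forming the Tensor products, setting up and solving the linear system, dividing by coefficients) runs in $f(|\Psi|)\cdot O(|\calD|)$, i.e.\ $O(|\calD|)$. So the whole computation runs in $O(|\calD|^d)$, and $\mathbb{A}$'s output list contains the entry $((\mathcal{A},X),\ans{(\mathcal{A},X)}{\calD})$, since $(\mathcal{A},X)$ is in the support of $c_\Psi$; reading off this entry completes the argument.

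\textbf{Main obstacle.} The only subtlety is bookkeeping about sizes and running times rather than any conceptual difficulty: one must verify that substituting the $O(|\calD|^d)$-time algorithm for $\Psi$ into $\mathbb{A}$'s oracle does not blow up the exponent, which hinges precisely on the fact that the auxiliary structures $\mathcal{B}_i$ and their number $k$ depend only on the fixed query $\Psi$ and not on $\calD$ — this is exactly what the Complexity Monotonicity Theorem guarantees. A minor point to state carefully is that the sum in Lemma~\ref{lem:UCQ_to_hombasis} is over $\cequiv$-classes, so in the ``if'' direction one passes to $\cminimal$ ($\ccore$) representatives, using Lemma~\ref{lem:counting_minimal_isomorphic} to know these are well-defined up to isomorphism and carry the same coefficient and answer count.
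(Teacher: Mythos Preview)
Your proposal is correct and follows exactly the intended approach: the paper states this as an immediate corollary of the Complexity Monotonicity Theorem without giving any explicit proof, and you have simply spelled out the two directions using precisely that theorem (together with Lemma~\ref{lem:UCQ_to_hombasis} for the ``if'' direction). The one bookkeeping point you rightly highlight---that the auxiliary structures $\mathcal{B}_i$ and their number depend only on $\Psi$, so the oracle calls on $\calD\otimes\mathcal{B}_i$ cost $O(|\calD|^d)$---is exactly what makes the corollary go through.
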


\begin{corollary}[Implicitly also in \cite{ChenM16}]\label{cor:UCQ_monotone_dichotomy}
Let $C$ be a recursively enumerable class of UCQs. The problems $\#\ucq(C)$ and $\#\homsprob(\Gamma(C))$ are interreducible with respect to parameterised Turing-reductions.
\end{corollary}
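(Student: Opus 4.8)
The plan is to prove the two parameterised Turing reductions $\#\ucq(C)\fptred\#\homsprob(\Gamma(C))$ and $\#\homsprob(\Gamma(C))\fptred\#\ucq(C)$ separately.

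\emph{Reducing $\#\ucq(C)$ to $\#\homsprob(\Gamma(C))$.} Given an input $\Psi=((\mathcal{A}_1,\dots,\mathcal{A}_\ell),X)\in C$ and a database $\mathcal{D}$, I would first make the CQ expansion of Lemma~\ref{lem:UCQ_to_hombasis} explicit using $\Psi$ alone: enumerate all $\emptyset\neq J\subseteq[\ell]$, form $\conjuncts{\Psi}{J}$, compute its $\ccore$ (which exists and is unique up to isomorphism by Definition~\ref{def:core} and Lemma~\ref{lem:counting_minimal_isomorphic}), and group the sets $J$ by the isomorphism type of this $\ccore$. Since $\conjuncts{\Psi}{J}$ has the same answer count as its $\ccore$ in every database, and since $c_\Psi$ depends only on the $\cequiv$-class of its argument (the set $\mathcal{I}(\cdot)$ is $\cequiv$-invariant), collecting terms rewrites $\ans{\Psi}{\mathcal{D}}$ as a finite sum $\sum_{i=1}^m c_i\cdot\ans{(\mathcal{B}_i,Y_i)}{\mathcal{D}}$, where the $(\mathcal{B}_i,Y_i)$ are pairwise non-isomorphic $\cminimal$ queries with $c_i=c_\Psi(\mathcal{B}_i,Y_i)\neq 0$ and $|(\mathcal{B}_i,Y_i)|\le\mathrm{poly}(|\Psi|)$. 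Each such $(\mathcal{B}_i,Y_i)$ is then in $\Gamma(C)$ by definition, so the reduction queries the $\#\homsprob(\Gamma(C))$-oracle on $((\mathcal{B}_i,Y_i),\mathcal{D})$ and returns $\sum_i c_i\cdot\ans{(\mathcal{B}_i,Y_i)}{\mathcal{D}}$. All computation besides the oracle calls depends only on $\Psi$, and the parameter of every query is bounded by a function of $|\Psi|$, so this is a parameterised Turing reduction.

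\emph{Reducing $\#\homsprob(\Gamma(C))$ to $\#\ucq(C)$.} Given a $\cminimal$ query $(\mathcal{A},X)\in\Gamma(C)$ and a database $\mathcal{D}$, I would first \emph{find} a witnessing UCQ. Using that $C$ is recursively enumerable, enumerate $C$ and, for each $\Psi$, compute its expansion as above and test whether some $(\mathcal{B}_i,Y_i)$ in it is isomorphic to $(\mathcal{A},X)$ with $c_i\neq 0$; this test is effective since isomorphism of finite structures is decidable. By the definition of $\Gamma(C)$ at least one such $\Psi$ exists, so the search halts, and---crucially---it reads only $(\mathcal{A},X)$, so the first witness $\Psi$ found satisfies $|\Psi|\le g(|(\mathcal{A},X)|)$ for some computable $g$. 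Now run the algorithm $\mathbb{A}$ of the Complexity Monotonicity theorem above on $(\Psi,\mathcal{D})$, answering each of its oracle calls $\ans{\Psi}{\mathcal{D}'}$---which it issues only on tensor products of $\mathcal{D}$ with auxiliary structures that depend only on $\Psi$, hence with parameter $|\Psi|$---via the $\#\ucq(C)$-oracle, which is legal because $\Psi\in C$. The output list of $\mathbb{A}$ then contains an entry $((\mathcal{A}',X'),v)$ whose first component is the (unique up to isomorphism) $\cminimal$ representative of the $\cequiv$-class of $(\mathcal{A},X)$; hence $(\mathcal{A}',X')\cong(\mathcal{A},X)$ and $v=\ans{(\mathcal{A},X)}{\mathcal{D}}$, which we output. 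All $\#\ucq(C)$-queries have parameter $|\Psi|\le g(|(\mathcal{A},X)|)$, so this is again a parameterised Turing reduction.

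\emph{Main obstacle.} I expect the delicate part to be the second direction's bookkeeping: one must verify not only that the witness search terminates but that its running time and the size of the $\Psi$ it produces are bounded by a computable function of $|(\mathcal{A},X)|$ alone, since otherwise the $\#\ucq(C)$-queries routed through $\mathbb{A}$ would have unbounded parameter and the reduction would fail to be FPT. The remaining ingredients---computability of $c_\Psi$ from $\Psi$, its $\cequiv$-invariance, the identification of the $\ccores$ we compute with the $\cminimal$ queries that $\mathbb{A}$ reports, and the effectiveness of the isomorphism checks used for grouping and matching---are routine.
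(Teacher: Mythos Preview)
The paper does not spell out a proof; the corollary is stated as an immediate consequence of the preceding Complexity Monotonicity theorem and Lemma~\ref{lem:UCQ_to_hombasis}, and is attributed to Chen and Mengel. Your two directions are exactly the intended argument: the inclusion--exclusion expansion together with passing to $\ccores$ gives $\#\ucq(C)\fptred\#\homsprob(\Gamma(C))$, and enumerating $C$ to locate a witnessing $\Psi$ and then invoking the algorithm~$\mathbb{A}$ gives the converse.

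One remark on the obstacle you flag: your observation that the witness search ``reads only $(\mathcal{A},X)$'' shows that the size of the first witness found is a well-defined function of the \emph{input query}, but upgrading this to a \emph{computable} function of the \emph{parameter} $|(\mathcal{A},X)|$ is not automatic---taking the maximum over all size-$n$ queries in $\Gamma(C)$ would require deciding membership in $\Gamma(C)$, which is only semi-decidable from the hypotheses. This technicality is standard and routinely elided in the literature; in particular, for the paper's downstream application---composing with the Clique reduction underlying Theorem~\ref{thm:cq_classification} to obtain $\ccW{1}$-hardness in Theorem~\ref{thm:implicit_UCQ_dicho}---the issue evaporates, because that reduction produces one specific query per value of the Clique parameter~$k$, and the search time for \emph{that} query is then trivially a computable function of~$k$.
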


As a consequence, in combination with Theorem~\ref{thm:cq_classification}, Chen and Mengel~\cite{ChenM16} established the following dichotomy, which we can state in a convenient way using our notation:
\ChenMUCQ*

In other words, as a summary of the above results, we know that counting answers to a UCQ is hard if and only if a conjunctive query survives in the CQ expansion, whose $\ccore$ has either high treewidth or its contract has high treewidth. Unfortunately, given a concrete UCQ, or a concrete class of UCQs, it is not clear how to determine whether such high treewidth terms survive. 

The central question that we ask in this work is whether this implicit tractability criterion for counting answers to UCQs can be rephrased as a more natural one. Equivalently, this means that we aim to find out whether there is a natural criterion for the existence of high treewidth terms in the CQ expansion. Such natural criteria have been found for subgraph and induced subgraph counting\footnote{(Induced) Subgraph counting is a special case of counting answers to quantifier-free conjunctive queries with inequalities (and negations).}~\cite{CurticapeanDM17,FockeR22}, and for conjunctive queries with inequalities and negations~\cite{DellRW19,Roth21}.

\section{Classifications for Deletion-Closed UCQs}

Let $C$ be a class of UCQs.
Recall that $\conjunctclass{C}$ is the class of all conjunctive queries that are obtained just by substituting all $\vee$ by $\wedge$ in UCQs in $C$, whereas $\Gamma(C)$ in Theorem~\ref{thm:implicit_UCQ_dicho} is the much less natural class of $\cminimal$ queries that survive with a non-zero coefficient in the CQ expansion of a UCQ in~$C$.
The work of Chen and Mengel~\cite{ChenM16} implicitly also shows an upper bound for counting answers to UCQs from the class $C$ in terms of the simpler objects $\conjunctclass{C}$ and $\contract{\conjunctclass{C}}$, rather than in terms of the more complicated objects $\Gamma(C)$ and $\contract{\Gamma(C)}$.  
We include a proof for completeness.

\begin{lemma}\label{lem:UCQ_algo}
Let $C$ be a recursively enumerable class of UCQs. Suppose that both $\conjunctclass{C}$ and $\contract{\conjunctclass{C}}$ have bounded treewidth. Then $\#\ucq(C)$ is fixed-parameter tractable.
\end{lemma}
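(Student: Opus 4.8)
The plan is to use the inclusion-exclusion expansion of Lemma~\ref{lem:UCQ_to_hombasis} to reduce $\#\ucq(C)$ to counting answers to conjunctive queries, and then apply the known classification for $\#\homsprob$ (Theorem~\ref{thm:cq_classification}) together with Corollary~\ref{cor:UCQ_monotone_dichotomy}. The key point is to show that the hypothesis on $\conjunctclass{C}$ and $\contract{\conjunctclass{C}}$ forces the class $\Gamma(C)$ of surviving $\cminimal$ queries to have bounded treewidth, and likewise for its contracts, after which Theorem~\ref{thm:implicit_UCQ_dicho} gives fixed-parameter tractability immediately. So the real content is a structural comparison between $\Gamma(C)$ and $\conjunctclass{C}$.

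First I would recall from Lemma~\ref{lem:UCQ_to_hombasis} that for any UCQ $\Psi=((\mathcal{A}_1,\dots,\mathcal{A}_\ell),X)\in C$, every query $(\mathcal{A},X)$ with $c_\Psi(\mathcal{A},X)\neq 0$ is $\cequiv$ to $\conjuncts{\Psi}{J}$ for some nonempty $J\subseteq[\ell]$. Now the combined query $\conjunctfull{\Psi}=(\bigcup_{j\in[\ell]}\mathcal{A}_j,X)$ has every $\conjuncts{\Psi}{J}$ as a substructure (on the same free variables, after discarding the quantified variables not used by the $\mathcal{A}_j$ with $j\in J$, which are isolated and hence irrelevant for treewidth and for the contract). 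A substructure of a structure has treewidth at most that of the ambient structure, so $\tw(\conjuncts{\Psi}{J})\le\tw(\conjunctfull{\Psi})$; similarly, passing to a substructure only removes edges from the Gaifman graph, and removing atoms can only merge or delete connected components of $G[Y]$, so one checks that $\contract{\conjuncts{\Psi}{J}}$ is a subgraph of $\contract{\conjunctfull{\Psi}}$, giving $\tw(\contract{\conjuncts{\Psi}{J}})\le\tw(\contract{\conjunctfull{\Psi}})$. Finally, for a $\cminimal$ query, being $\cequiv$ to $\conjuncts{\Psi}{J}$ means (by Lemma~\ref{lem:counting_minimal_isomorphic}) being isomorphic to the $\ccore$ of $\conjuncts{\Psi}{J}$, which is a substructure of $\conjuncts{\Psi}{J}$, so all the above treewidth bounds are inherited. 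Hence if $\conjunctclass{C}$ has treewidth bounded by $t$ and $\contract{\conjunctclass{C}}$ by $t'$, then every member of $\Gamma(C)$ has treewidth at most $t$ and every member of $\contract{\Gamma(C)}$ has treewidth at most $t'$.

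The conclusion then follows directly from Theorem~\ref{thm:implicit_UCQ_dicho}: since $\Gamma(C)$ and $\contract{\Gamma(C)}$ have bounded treewidth, $\#\ucq(C)$ is fixed-parameter tractable. Alternatively one can bypass $\Gamma(C)$ and argue algorithmically: compute the CQ expansion of $\Psi$ in time depending only on $|\Psi|$ (enumerate all $J\subseteq[\ell]$, compute $\conjuncts{\Psi}{J}$, its $\ccore$, and the coefficients by collecting $\cequiv$ terms), and then apply the polynomial-time algorithm of Theorem~\ref{thm:cq_classification} to each surviving $\ccore$, which works because each such $\ccore$ has treewidth $\le t$ and contract treewidth $\le t'$.

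The main obstacle is the claim that $\contract{\conjuncts{\Psi}{J}}$ embeds as a subgraph of $\contract{\conjunctfull{\Psi}}$: deleting atoms can split a connected component $S$ of $G[Y]$ into several smaller components, and one must check that this can only delete contract-edges, never create new ones, and that the special ``own existential variables'' convention does not interfere. This is a short but slightly delicate combinatorial verification about how the contract construction in Definition~\ref{def:contract} behaves under taking substructures; once it is in place, everything else is a routine invocation of the cited dichotomies.
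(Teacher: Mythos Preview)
Your proposal is correct and rests on the same key structural observation as the paper: each $\conjuncts{\Psi}{J}$ is a substructure of $\conjunctfull{\Psi}$, so both its treewidth and the treewidth of its contract are bounded by the corresponding quantities for $\conjunctfull{\Psi}$. The paper's proof is slightly more direct: rather than passing to $\ccores$ to bound $\Gamma(C)$ and then invoking Theorem~\ref{thm:implicit_UCQ_dicho}, it simply observes that $\#\ucq(C)\fptred\#\homsprob(\hat C)$ for $\hat C=\{\conjuncts{\Psi}{J}\}$ and applies the algorithmic part of Chen--Mengel's CQ classification to $\hat C$ itself (your ``alternative'' paragraph is exactly this). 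Your main route via $\Gamma(C)$ works too, but it adds the extra step of checking that contract-treewidth is inherited under taking $\ccores$, which the paper sidesteps.
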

\begin{proof}
Let $\Psi \in C$. Recall from the proof of Lemma~\ref{lem:UCQ_to_hombasis} that, for every databse $\calD$,
\begin{align*}
    \ans{\Psi}{\calD} =\sum_{\emptyset \neq J \subseteq [\ell]} (-1)^{|J|+1} \cdot \homs{\conjuncts{\Psi}{J}}{\calD}\,. 
\end{align*}
Hence $\#\ucq(C)\fptred\#\homsprob(\hat{C})$ where $\hat{C}$ is 
$\{\conjuncts{\Psi}{J} \mid \Psi \in C ~\wedge~\emptyset \neq J \subseteq [\ell(\Psi)] \}$.
Finally, since $\conjuncts{\Psi}{J}$ is a subquery of $\conjunctfull{\Psi}$ for each $J$, the treewidths of $\conjuncts{\Psi}{J}$ and $\contract{\conjuncts{\Psi}{J}}$ are 
bounded from above by the treewidths of $\conjunctfull{\Psi}$ and $\contract{\conjunctfull{\Psi}}$, respectively. Consequently, the treewidths of $\hat{C}$ and $\contract{\hat{C}}$ are bounded, and thus $\#\homsprob(\hat{C})$ is polynomial-time solvable by the classification of Chen and Mengel~\cite[Theorem 22]{ChenM15} 
Since $\#\ucq(C)\fptred\#\homsprob(\hat{C})$,  the lemma follows.
\end{proof} 

Our goal is to relate the complexity of $\#\ucq(C)$ to the structure of $\conjunctclass{C}$ with the hope of obtaining a more natural tractability criterion than the one given by Theorem~\ref{thm:implicit_UCQ_dicho}.
While we will see later that this seems not always possible (Appendix~\ref{sec:appendix}), we identify conditions under which a natural criterion based on $\conjunctclass{C}$ is possible, both in the quantifier-free case (Section~\ref{sec:Meta_quantfreeLB}), and in the general case that allows quantified variables (Section~\ref{sec:Meta_generalLB}).

A class of UCQs $C$ is \emph{closed under deletions} if, for every $\Psi=((\calA_1,\ldots,\calA_\ell),X)\in C$ and for every $\emptyset\neq J\subseteq [\ell]$, the UCQ $\Psi\vert_{J}$
is also contained in $C$. For example, any class of UCQs defined by prescribing the allowed conjunctive queries is closed under deletions. This includes, e.g., unions of acyclic conjunctive queries.

\subsection{The Quantifier-free Case}\label{sec:Meta_quantfreeLB}
 As a warm-up, we start with the much simpler case of quantifier-free queries. Here, we only allow (unions of) conjunctive queries $(\mathcal{A},X)$ satisfying $U(\mathcal{A})=X$.

\begin{lemma}\label{lem:main_dichotomy_quantifierfree}
Let $C$ be a recursively enumerable class of quantifier-free UCQs of bounded arity. Suppose that $C$ is closed under deletions. If $\conjunctclass{C}$ has unbounded treewidth then $\#\ucq(C)$ is $\ccW{1}$-hard.
\end{lemma}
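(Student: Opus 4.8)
The plan is to reduce from $\#\homsprob(\conjunctclass{C})$, exploiting Lemma~\ref{lem:UCQ_to_hombasis} to express $\ans{\Psi}{\calD}$ as a linear combination of CQ answer counts, and then to apply Complexity Monotonicity (Corollary~\ref{cor:UCQ_monotone_dichotomy}) together with the $\ccW{1}$-hardness side of Theorem~\ref{thm:cq_classification}. Concretely, since $\conjunctclass{C}$ has unbounded treewidth and all queries are quantifier-free (so every query equals its own $\ccore$ and every contract equals the query itself, making the contract condition vacuous), Theorem~\ref{thm:cq_classification} tells us that $\#\homsprob(\conjunctclass{C})$ is $\ccW{1}$-hard. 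Hence it suffices to show $\#\homsprob(\conjunctclass{C}) \fptred \#\ucq(C)$, or more precisely that $\conjunctclass{C}$ (up to adding isolated variables, which changes nothing) is contained, as a class of CQs, in $\Gamma(C)$ — then $\#\homsprob(\conjunctclass{C}) \fptred \#\homsprob(\Gamma(C)) \equiv \#\ucq(C)$ by Corollary~\ref{cor:UCQ_monotone_dichotomy}.

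The heart of the argument is therefore the following claim: for every $\Psi = ((\calA_1,\dots,\calA_\ell),X) \in C$, the combined query $\conjunctfull{\Psi}$ is $\cminimal$ and has a non-zero coefficient $c_\Psi(\conjunctfull{\Psi}) \neq 0$, so that $\conjunctfull{\Psi} \in \Gamma(C)$. The first part is immediate since in the quantifier-free case every conjunctive query is $\cminimal$ (as noted in the preliminaries). For the coefficient, observe $\conjunctfull{\Psi} = \conjuncts{\Psi}{[\ell]}$, and because all queries are quantifier-free, $\cequiv$ coincides with isomorphism. So I would argue that among all $J \subseteq [\ell]$, the term $\conjuncts{\Psi}{J}$ for $J = [\ell]$ is strictly "largest" — in fact $\conjuncts{\Psi}{J}$ is (up to isomorphism, after removing isolated variables) the structure with universe $X$ and relations $\bigcup_{j\in J} R^{\calA_j}$, so $J \subseteq J'$ implies $\conjuncts{\Psi}{J}$ is a substructure of $\conjuncts{\Psi}{J'}$, and for $J = [\ell]$ no other $J' \neq [\ell]$ can give an isomorphic structure unless some $\calA_j$ is redundant (contained in the union of the others). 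The clean way to handle this is: pass to a minimal subquery of $\Psi$ — delete CQs $\calA_j$ from the union whenever $R^{\calA_j} \subseteq \bigcup_{i \neq j} R^{\calA_i}$ for all $R$; by closure under deletions this smaller UCQ $\Psi'$ is still in $C$, it has the same combined query $\conjunctfull{\Psi'} \cong \conjunctfull{\Psi}$, and now $\conjuncts{\Psi'}{[\ell']}$ is strictly larger (as a structure) than every $\conjuncts{\Psi'}{J}$ with $J \subsetneq [\ell']$. Hence $\mathcal{I}(\conjunctfull{\Psi'},X) = \{[\ell']\}$, giving $c_{\Psi'}(\conjunctfull{\Psi'}) = (-1)^{\ell'+1} \neq 0$.

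This shows $\conjunctfull{\Psi'} \in \Gamma(C)$ for a cofinal family of combined queries, and since treewidth is monotone and $\{\conjunctfull{\Psi'}\}$ still has unbounded treewidth (the deletion step only removes redundant atoms, which does not change the Gaifman graph up to isolated vertices, hence does not change treewidth), $\Gamma(C)$ has unbounded treewidth. By Theorem~\ref{thm:cq_classification}, $\#\homsprob(\Gamma(C))$ is $\ccW{1}$-hard, and by Corollary~\ref{cor:UCQ_monotone_dichotomy}, so is $\#\ucq(C)$. The main obstacle — and the place that needs care — is verifying that the deletion/minimisation step genuinely yields a query still in $C$ (this is exactly where closure under deletions is used) and that after this step the combined query is the unique maximal term, so that no cancellation can occur in the alternating sum defining $c_\Psi$; the quantifier-free hypothesis is what makes "$\cequiv$ = isomorphism" available and keeps this bookkeeping elementary. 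One should also double-check the recursive enumerability of $\Gamma(C)$, but this is routine since $C$ is recursively enumerable and $\cminimality$ and non-vanishing of $c_\Psi$ are decidable.
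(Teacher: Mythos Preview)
Your proposal is correct and follows essentially the same route as the paper's proof: show that every combined query $\conjunctfull{\Psi}$ lies in $\Gamma(C)$ (using that quantifier-free queries are $\cminimal$ and that $\cequiv$ coincides with isomorphism), then invoke Theorem~\ref{thm:implicit_UCQ_dicho}. The paper streamlines your minimisation step by directly choosing an inclusion-minimal $J\subseteq[\ell]$ with $\conjuncts{\Psi}{J}\cong\conjunctfull{\Psi}$, rather than iteratively deleting redundant CQs---this gives the single-term coefficient $(-1)^{|J|+1}$ immediately and avoids the need to argue separately that a proper substructure on the same universe cannot be isomorphic to the full one.
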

\begin{proof}
We show that $\conjunctclass{C}\subseteq \Gamma(C)$, which then proves the claim by Theorem~\ref{thm:implicit_UCQ_dicho}. Recall from Definition~\ref{def:Gamma} that $\Gamma(C)= \{(\mathcal{A},X) \mid (\mathcal{A},X) \text{ is $\cminimal$ and there is } \Psi\in C \text{ with } c_\Psi(\mathcal{A},X) \neq 0 \}$. Let  $\Psi=((\calA_1,\ldots,\calA_\ell),X) \in C$. Note that, according to Observation~\ref{obs:cminimalequivalent}, $\conjunctfull{\Psi}$ is its own $\ccore$ since  it does not have existentially quantified variables. 
For the same reason, for each nonempty subset~$J$ of~$[\ell]$,   the query $\conjuncts{\Psi}{J}$ is its own $\ccore$. Now let $J\subseteq [\ell]$ be inclusion-minimal with the property that $\conjuncts{\Psi}{J}$ is isomorphic to $\conjunctfull{\Psi}$. Since $C$ is closed under deletions, the UCQ $\Psi\vert_{J}$ is contained in $C$. By the inclusion-minimality of $J$, Definition~\ref{def:coefUCQ} ensures that   $c_{\Psi\vert_{J}}(\conjunctfull{\Psi}) =(-1)^{|J|+1} \neq 0$. As a consequence, $\conjunctfull{\Psi}\in \Gamma(C)$, concluding the proof.
\end{proof}

From Lemmas~\ref{lem:main_dichotomy_quantifierfree} and~\ref{lem:UCQ_algo} together with the fact that the contract of a quantifier-free query is the query itself, we obtain the classification for quantifier-free UCQs, which we restate for convenience.
\maindichosimple*

\subsection{The General Case}\label{sec:Meta_generalLB}

Now we consider UCQs with existentially quantified variables. Here, a corresponding hardness result (Lemma~\ref{lem:main_dichotomy_selfjoinfree}) can be achieved under some additional assumptions. 
Note that the number of existentially quantified variables in a UCQ $\Psi=((\mathcal{A}_1,\ldots,\mathcal{A}_\ell),X)$ is equal to $\sum_{i=1}^\ell |U(\mathcal{A}_i) \setminus X|$. 
We first need the following two auxiliary results:

\begin{lemma}\label{lem:ccore_subraph}\label{lem:new}
    Let $(\mathcal{A},X)$ and $(\mathcal{A}',X')$ be $\cequivalent$ conjunctive queries. Further, let $G$ and $G'$ be the Gaifman graphs of $\mathcal{A}$ and $\mathcal{A}'$, respectively. Then $G[X]$ and $G'[X']$ are isomorphic.
\end{lemma}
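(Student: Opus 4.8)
\textbf{Proof plan for Lemma~\ref{lem:new}.}
The plan is to exploit the characterisation of $\cequivalent$ queries through surjective homomorphisms that are the identity on the free variables, together with Lemma~\ref{lem:counting_minimal_isomorphic}. First I would observe that $\cequivalent$ is preserved by passing to $\ccores$: let $(\mathcal{B},X)$ be the $\ccore$ of $(\mathcal{A},X)$ and $(\mathcal{B}',X')$ the $\ccore$ of $(\mathcal{A}',X')$. Since $(\mathcal{A},X)\cequiv(\mathcal{A}',X')$, both $\ccores$ are $\cequivalent$, and they are both $\cminimal$, so by Lemma~\ref{lem:counting_minimal_isomorphic} there is an isomorphism $b$ from $\mathcal{B}$ to $\mathcal{B}'$ with $b(X)=X'$. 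In particular $G_{\mathcal{B}}[X]$ and $G_{\mathcal{B}'}[X']$ are isomorphic (via the restriction of $b$, which is a bijection $X\to X'$ respecting the Gaifman edges inside $X$, since isomorphisms of structures induce isomorphisms of Gaifman graphs).

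The remaining step — and the one I expect to be the crux — is to show that forming the $\ccore$ does not change the subgraph of the Gaifman graph induced by the free variables; that is, $G_{\mathcal{A}}[X]\cong G_{\mathcal{B}}[X]$, and likewise for the primed structures. Recall (Observation~\ref{obs:cminimalequivalent}, and the discussion of $\ccores$) that the $\ccore$ $(\mathcal{B},X)$ arises as the image of an idempotent-up-to-iso endomorphism of $\mathcal{A}$ fixing $X$ pointwise; concretely, there is a homomorphism $h:\mathcal{A}\to\mathcal{A}$ with $h|_X=\id_X$ whose image, as an induced substructure, is $\cminimal$ and $\cequivalent$ to $(\mathcal{A},X)$. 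Since $h$ is the identity on $X$, no two distinct free variables get identified, and $X\subseteq U(\mathcal{B})$. Now for two distinct $u,v\in X$: if $\{u,v\}\in E(G_{\mathcal{A}}[X])$, there is a tuple $\vec t$ in some relation of $\mathcal{A}$ containing both $u$ and $v$; applying $h$ gives the tuple $h(\vec t)$ in the same relation of $\mathcal{A}$, and since $h$ fixes $u$ and $v$, this tuple still contains both $u$ and $v$ and lies entirely within $U(\mathcal{B})$ (the image of $h$), hence it witnesses $\{u,v\}\in E(G_{\mathcal{B}}[X])$. Conversely, if $\{u,v\}\in E(G_{\mathcal{B}}[X])$ then, because $\mathcal{B}$ is a substructure of $\mathcal{A}$, the witnessing tuple is also a tuple of $\mathcal{A}$, so $\{u,v\}\in E(G_{\mathcal{A}}[X])$. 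Therefore $G_{\mathcal{A}}[X]=G_{\mathcal{B}}[X]$ on the common vertex set $X$, and in particular they are isomorphic.

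Chaining the three isomorphisms — $G_{\mathcal{A}}[X]\cong G_{\mathcal{B}}[X]$, then $G_{\mathcal{B}}[X]\cong G_{\mathcal{B}'}[X']$ via $b$, then $G_{\mathcal{B}'}[X']\cong G_{\mathcal{A}'}[X']$ — yields $G[X]\cong G'[X']$, as claimed. The only subtlety to be careful about is the exact form of the endomorphism realising the $\ccore$: one should cite Observation~\ref{obs:cminimalequivalent}(3) to get an endomorphism of $\mathcal{A}$ fixing $X$ whose image is $\cminimal$ (iterating non-surjective endomorphisms until the image stabilises), and then invoke uniqueness of the $\ccore$ (Lemma~\ref{lem:counting_minimal_isomorphic}) to identify this image with $(\mathcal{B},X)$ up to isomorphism fixing $X$. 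I would make sure this iteration argument is spelled out, since it is the one place where $\cminimality$ genuinely enters; everything else is a direct manipulation of Gaifman edges.
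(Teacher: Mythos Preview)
Your argument is correct, but it takes a longer detour than the paper's proof. The paper avoids $\ccores$ entirely: it invokes a known characterisation of $\cequivalent$ queries (from~\cite{ChenM16}, stated explicitly as Lemma~48 in the full version of~\cite{DellRW19}) which says that if $(\mathcal{A},X)\sim(\mathcal{A}',X')$ then there exist homomorphisms $h\in\Homs{\mathcal{A}}{\mathcal{A}'}$ and $h'\in\Homs{\mathcal{A}'}{\mathcal{A}}$ whose restrictions $h|_X:X\to X'$ and $h'|_{X'}:X'\to X$ are surjective. Since $X$ and $X'$ are finite, both restrictions are bijections, and then a one-line Gaifman-edge argument (identical in spirit to the one you give) shows that $h|_X$ is a graph isomorphism $G[X]\to G'[X']$.

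Your route --- pass to $\ccores$, use Lemma~\ref{lem:counting_minimal_isomorphic} for the isomorphism between them, and argue separately that the endomorphism realising the $\ccore$ leaves $G[X]$ unchanged --- reaches the same conclusion using only statements already proved in this paper, at the cost of the iteration argument you flag at the end (and some care that the iterated endomorphism really has image equal to the universe of the $\ccore$). The paper's approach trades that iteration for a single external citation and is correspondingly shorter; conceptually, the cited lemma already packages the ``homomorphism that is bijective on free variables'' fact that your endomorphism-and-$\ccore$ argument reconstructs by hand.
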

\begin{proof}
By~\cite{ChenM16} (see Lemma 48 in the full version of~\cite{DellRW19} for an explicit statement),   there are surjective functions $s: X \to X'$ and $s': X' \to X$ and homomorphisms $h\in \Homs{\mathcal{A}}{\mathcal{A}'}$ and $h'\in \Homs{\mathcal{A}'}{\mathcal{A}}$ such that $h|_X = s$ and $h'|_{X'}=s'$. 

Clearly, $s$ and $s'$ are bijective. Let $e=\{u,v\}$ be an edge of $G[X]$. Then there exists a tuple $\vec{t}$ of elements of $U(\mathcal{A})$ such that
\begin{itemize}
        \item[(i)] $\vec{t}\in R^{\mathcal{A}}$ for some relation (symbol) $R$ of the signature of $\mathcal{A}$, and
        \item[(ii)] $u$ and $v$ are elements of $\vec{t}$.
\end{itemize}
Since $h$ is a homomorphism, $h(\vec{t})\in R^{\mathcal{B}}$. Thus $\{h(u),h(v)\} = \{s(u),s(v)\}$ is an edge of $G'[X']$. The backward direction is analogous.
\end{proof}

\begin{lemma}\label{lem:self-join-free-counting-cores} 
    Let $(\mathcal{A},X)$ be a self-join-free conjunctive query. Let $\mathcal{A}'$ be the structure obtained from $\mathcal{A}$ by deleting all isolated variables in $U(\mathcal{A})\setminus X$. Then $(\mathcal{A}',X)$ is the $\ccore$ of $(\mathcal{A},X)$.
\end{lemma}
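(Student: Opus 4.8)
The plan is to verify two things: first, that $(\mathcal{A}',X) \cequiv (\mathcal{A},X)$, and second, that $(\mathcal{A}',X)$ is $\cminimal$; by Definition~\ref{def:core} and the uniqueness guaranteed by Lemma~\ref{lem:counting_minimal_isomorphic}, this establishes that $(\mathcal{A}',X)$ is the $\ccore$. The first point is immediate from Observation~\ref{obs:CQplusisolated}: $\mathcal{A}$ is obtained from $\mathcal{A}'$ by adding isolated \emph{quantified} variables (those in $(U(\mathcal{A})\setminus X)$ that occur in no atom), and adding isolated quantified variables changes neither the answer set nor its size, since any homomorphism from $\mathcal{A}'$ extends to such a variable in exactly $|U(\mathcal{D})|$ ways, and then this factor is not recorded because the variable is not free. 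More carefully: an assignment $a\colon X\to U(\mathcal{D})$ lies in $\Ans{(\mathcal{A}',X)}{\mathcal{D}}$ iff it lies in $\Ans{(\mathcal{A},X)}{\mathcal{D}}$, because a homomorphism $h'\colon \mathcal{A}'\to\mathcal{D}$ with $h'|_X = a$ can be extended arbitrarily on the deleted isolated variables (the database universe is nonempty), and conversely restricting a homomorphism from $\mathcal{A}$ gives one from $\mathcal{A}'$. Hence $(\mathcal{A}',X)\cequiv(\mathcal{A},X)$.

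For $\cminimality$ of $(\mathcal{A}',X)$, I would use characterisation~(3) of Observation~\ref{obs:cminimalequivalent}: it suffices to show that every homomorphism $h\colon \mathcal{A}'\to\mathcal{A}'$ with $h|_X = \id_X$ is surjective. Let $h$ be such a homomorphism and let $v\in U(\mathcal{A}')$; we must show $v\in h(U(\mathcal{A}'))$. If $v\in X$ this is clear since $h|_X$ is the identity. If $v\notin X$, then by construction of $\mathcal{A}'$ the vertex $v$ is not isolated, so there is some atom $R(\vec{t})$ of $\mathcal{A}'$ with $v$ occurring in $\vec{t}$, say $\vec{t} = (t_1,\dots,t_a)$ with $t_j = v$ for some $j$. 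Since $\mathcal{A}$ — and hence $\mathcal{A}'$ — is self-join-free, $R$ contains exactly one tuple, namely $\vec{t}$. Because $h$ is a homomorphism, $h(\vec{t}) = (h(t_1),\dots,h(t_a)) \in R^{\mathcal{A}'}$, and since $R^{\mathcal{A}'} = \{\vec{t}\}$ is a singleton we must have $h(\vec{t}) = \vec{t}$, so in particular $h(v) = h(t_j) = t_j = v$. Thus $v\in h(U(\mathcal{A}'))$, so $h$ is surjective, and $(\mathcal{A}',X)$ is $\cminimal$.

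Combining the two parts: $(\mathcal{A}',X)$ is $\cminimal$ and $\cequivalent$ to $(\mathcal{A},X)$, so by Definition~\ref{def:core} it is a $\ccore$ of $(\mathcal{A},X)$, and by Lemma~\ref{lem:counting_minimal_isomorphic} it is \emph{the} $\ccore$ (up to isomorphism). I do not anticipate a serious obstacle here; the only point requiring a little care is making sure that in the $\cminimality$ argument we genuinely use self-join-freeness of the \emph{whole} structure (so that every relation symbol, not just the ones touching $X$, has at most one tuple) — this is exactly what forces $h$ to fix every non-isolated vertex. One should also note in passing that deleting isolated variables does not introduce self-joins, so $(\mathcal{A}',X)$ remains self-join-free, though this is not strictly needed for the statement.
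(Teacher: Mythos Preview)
Your proof is correct and follows essentially the same approach as the paper. The only difference is that you use characterisation~(3) of Observation~\ref{obs:cminimalequivalent} (every endomorphism fixing $X$ is surjective), whereas the paper uses characterisation~(2) (no $\cequivalent$ proper induced substructure over $X$); both arguments hinge on the same key observation that self-join-freeness forces each relation to contain a single tuple, which must be fixed by any homomorphism.
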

\begin{proof}
 
Clearly, $(\mathcal{A},X)$ and $(\mathcal{A}',X)$ are $\cequivalent$. Thus it remains to show that $(\mathcal{A}',X)$ is $\cminimal$. Assume for contradiction by Observation~\ref{obs:cminimalequivalent} that 
$(\mathcal{A}',X)$ has an \#equivalent substructure $\hat{\mathcal{A}}$ that is induced by a set $U$ with 
$X \subseteq U \subset U(\mathcal{A}')$.
 
Since $\mathcal{A}'$ is self-join-free and it does not have isolated variables,
there is a relation (symbol) $R$ such that $R^{\mathcal{A}'}$ contains precisely one tuple, and $R^{\hat{\mathcal{A}}}$ is empty. Thus, there is no homomorphism from $\mathcal{A}'$ to $\hat{\mathcal{A}}$ and, consequently, $(\hat{\mathcal{A}},X)$ and $(\mathcal{A}',X)$ cannot be $\cequivalent$, yielding a contradiction and concluding the proof.
\end{proof}

\begin{lemma}\label{lem:main_dichotomy_selfjoinfree}
Let $C$ be a recursively enumerable class of unions of self-join-free conjunctive queries with bounded arity. Suppose that $C$ is closed under deletions and that there is a finite upper bound on the number of existentially quantified variables in queries in $C$. If either of $\conjunctclass{C}$ or $\contract{\conjunctclass{C}}$ have unbounded treewidth then $\#\ucq(C)$ is $\ccW{1}$-hard.
\end{lemma}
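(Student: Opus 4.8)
The plan is to mimic the quantifier-free argument (Lemma~\ref{lem:main_dichotomy_quantifierfree}) by showing that $\conjunctclass{C}$ is, up to the relevant invariants, captured inside $\Gamma(C)$ — but now we must track \emph{both} the treewidth of queries and the treewidth of their contracts, and we must pass through $\ccores$ rather than using the fact that quantifier-free queries are their own $\ccores$. Specifically, fix $\Psi=((\calA_1,\ldots,\calA_\ell),X)\in C$ and let $(\calB,X)$ be the $\ccore$ of $\conjunctfull{\Psi}$. The goal will be to produce, using closedness under deletions, some $\Psi'\in C$ with $c_{\Psi'}(\calB,X)\neq 0$, so that $(\calB,X)\in\Gamma(C)$; then Theorem~\ref{thm:implicit_UCQ_dicho} finishes the argument provided we also know that unbounded treewidth of $\conjunctclass{C}$ (resp.\ of $\contract{\conjunctclass{C}}$) forces unbounded treewidth of $\Gamma(C)$ (resp.\ of $\contract{\Gamma(C)}$). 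The contract part is where Lemma~\ref{lem:ccore_subraph} enters: since $\ccores$ are $\cequivalent$ to the original query, that lemma tells us $G[X]$ is preserved under $\cequivalence$, and one needs the analogous statement that the \emph{contract} graph on $X$ is also preserved (or at least that its treewidth does not drop), so that $\tw(\contract{\conjunctfull{\Psi}})=\tw(\contract{\calB,X})$; combined with $(\calB,X)\in\Gamma(C)$ this gives the contract lower bound. For the treewidth-of-the-query part, one needs that $\tw(\calB)$ is not too much smaller than $\tw(\conjunctfull{\Psi})$ — this is \emph{not} true for arbitrary queries, and this is exactly where the self-join-freeness and the bound on the number of existential variables are used.

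The key steps, in order. First, reduce to a single $\Psi$ and its combined query $\conjunctfull{\Psi}$, which is a self-join-free CQ since each $\calA_i$ is self-join-free and the unions are over disjoint existential-variable sets (one has to check self-join-freeness of $\conjunctfull{\Psi}$ carefully — if two different $\calA_i$ share a relation symbol this could fail, so one likely first passes to relation symbols that are private to each atom, or invokes condition (III) at the level of the union in a way that makes $\conjunctfull{\Psi}$ self-join-free; this is a point to be careful about). Second, apply Lemma~\ref{lem:self-join-free-counting-cores}: the $\ccore$ of a self-join-free CQ is obtained just by deleting isolated existential variables, so $(\calB,X)$ is $\conjunctfull{\Psi}$ with isolated quantified variables removed. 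In particular $\tw(\calB)=\tw(\conjunctfull{\Psi})$ and $\contract{\calB,X}=\contract{\conjunctfull{\Psi}}$ (deleting isolated variables changes neither). Third, find $J\subseteq[\ell]$ inclusion-minimal with $\conjuncts{\Psi}{J}\cequiv\conjunctfull{\Psi}$; by closedness under deletions $\Psi\vert_J\in C$, and by minimality of $J$ together with Definition~\ref{def:coefUCQ} and Lemma~\ref{lem:counting_minimal_isomorphic} (no smaller $J'$ gives a query $\cequiv$ to the $\ccore$), the coefficient $c_{\Psi\vert_J}(\calB,X)=(-1)^{|J|+1}\neq 0$ — here one must verify that $\conjuncts{\Psi}{J}\cequiv\conjuncts{\Psi}{J'}$ for $J'\subsetneq J$ would contradict minimality, using that $\cequivalence$ is what the coefficient function is defined against. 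Fourth, conclude $(\calB,X)\in\Gamma(C)$, hence $\Gamma(C)$ contains queries of unbounded treewidth whenever $\conjunctclass{C}$ does (via step two), and likewise $\contract{\Gamma(C)}$ contains graphs of unbounded treewidth whenever $\contract{\conjunctclass{C}}$ does; Theorem~\ref{thm:implicit_UCQ_dicho} then gives $\ccW{1}$-hardness.

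The main obstacle I expect is the third step — ensuring the coefficient is genuinely nonzero. It is tempting to argue exactly as in the quantifier-free case, but there $\cequivalence$ coincided with isomorphism, so ``inclusion-minimal $J$ with $\conjuncts{\Psi}{J}\cong\conjunctfull{\Psi}$'' was unambiguous and every other $J'$ in $\mathcal{I}(\calB,X)$ had to contain such a $J$. With quantifiers, $\mathcal{I}(\calB,X)=\{J: \conjuncts{\Psi}{J}\cequiv\calB\}$ and one must rule out cancellation among the $(-1)^{|J|+1}$ over all $J$ that happen to be $\cequivalent$ to the core — it is conceivable that several incomparable minimal $J$'s exist and their signs cancel. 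The cleanest fix is probably to argue that for the inclusion-minimal $J$ we choose, \emph{every} $J'\in\mathcal{I}(\calB,X)$ with $J'\subseteq J$ equals $J$ (otherwise $\conjuncts{\Psi}{J'}\cequiv\calB\cequiv\conjunctfull{\Psi}$ contradicts minimality), but that only controls subsets of $J$; to control the full sum one instead restricts to $\Psi\vert_{J}$ itself, for which the \emph{only} subset $J'$ of $J$ with $\conjuncts{(\Psi\vert_J)}{J'}\cequiv\calB$ is $J=[\ell(\Psi\vert_J)]$ — so in the coefficient function $c_{\Psi\vert_J}(\calB,X)$ there is a single term and no cancellation. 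I would make sure this last reduction is spelled out precisely, as it is the crux. A secondary, more routine obstacle is confirming that self-join-freeness of each $\calA_i$ plus disjointness of existential variables really yields a self-join-free $\conjunctfull{\Psi}$ (or reducing to that case), since Lemma~\ref{lem:self-join-free-counting-cores} is what lets us identify the $\ccore$ explicitly and thereby transfer both treewidth invariants; without it the core could collapse treewidth and the reduction to $\Gamma(C)$ would break.
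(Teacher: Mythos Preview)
Your plan has a genuine gap at the step you flagged as ``secondary'' and ``routine'': the combined query $\conjunctfull{\Psi}$ is \emph{not} self-join-free in general, and there is no reduction to that case. Condition (III) says each individual $\calA_i$ is self-join-free, but two different $\calA_i,\calA_j$ may well use the same relation symbol, so their union can contain that symbol twice. (The paper states this explicitly.) Your suggested fix --- privatising relation symbols per atom --- changes the signature and hence the answers, so it is not a valid reduction. Once $\conjunctfull{\Psi}$ fails to be self-join-free, Lemma~\ref{lem:self-join-free-counting-cores} is unavailable, and the $\ccore$ can genuinely collapse both the treewidth and the contract treewidth of $\conjunctfull{\Psi}$; your transfer of invariants in step two breaks.

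The paper's proof avoids this trap by never applying Lemma~\ref{lem:self-join-free-counting-cores} to $\conjunctfull{\Psi}$. For the treewidth case it uses Lemma~\ref{lem:ccore_subraph} (which you mentioned only for contracts) together with the bound $d$ on existential variables: since $G[X]$ is preserved under $\cequivalence$ and deleting the $\leq d$ existential vertices from $G$ drops treewidth by at most $d$, the $\ccore$ has treewidth at least $\tw(\conjunctfull{\Psi})-d$. For the contract case the paper does something you did not anticipate: it introduces the \emph{degree of freedom} (number of free neighbours of an existential variable) and splits into two sub-cases. If the degree of freedom is unbounded across the \emph{individual} CQs $\calA_i$ --- which \emph{are} self-join-free, so Lemma~\ref{lem:self-join-free-counting-cores} applies to them --- then those single CQs already have high contract treewidth and hardness follows directly from $\hat C\subseteq C$. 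If the degree of freedom is bounded by $d'$, then $\contract{\conjunctfull{\Psi}}$ differs from $G[X]$ by at most $\binom{dd'}{2}$ edges, so high contract treewidth forces high $\tw(G[X])$, and one is back to the first case. Your coefficient argument (choose inclusion-minimal $J$, pass to $\Psi\vert_J$, single surviving term) is correct and matches the paper; the missing piece is entirely in how the invariants are transferred to the $\ccore$.
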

\begin{proof}
Let $d$ be the maximum number of existentially quantified variables in a query in $C$.
Assume first that $\conjunctclass{C}$ has unbounded treewidth. We show that $\Gamma(C)$ has unbounded treewidth, which proves the claim by Theorem~\ref{thm:implicit_UCQ_dicho}. 
To this end, let $B$ be any positive integer. The goal is to find a conjunctive query in $\Gamma(C)$ with treewidth at least $B$. Since $\conjunctclass{C}$ has unbounded treewidth, there is a UCQ $\Psi=((\mathcal{A}_1,\ldots,\mathcal{A}_\ell),X)$ in $C$ such that $\conjunctfull{\Psi}$ has treewidth larger than $d+ B$. Note that, although all $\Psi_i$ are self-join-free, $\conjunctfull{\Psi}$ is not necessarily self-join-free. Let $J$ be inclusion-minimal among the subsets of~$[\ell]$ with the property that the $\ccore$ of $\conjuncts{\Psi}{J}$ is isomorphic to the $\ccore$ of $\conjunctfull{\Psi}$.  Since $C$ is closed under deletions, the UCQ $\Psi\vert_{J}$ is contained in $C$. Let $(\mathcal{A}',X')$ be 
the $\ccore$ of $\conjunctfull{\Psi}$.

By inclusion-minimality of $J$,   $c_{\Psi\vert_{J}}((\mathcal{A}',X')) =(-1)^{|J|+1} \neq 0$. As a consequence, $(\mathcal{A}',X')\in \Gamma(C)$. It remains to show that the treewidth of $(\mathcal{A}',X')$ is at least $B$. For this, let $G$ be the Gaifmann graph of $\conjunctfull{\Psi}$ and let $G'$ be the Gaifmann graph of the $\ccore$ of $\conjunctfull{\Psi}$ (the Gaifman graph of $\mathcal{A}'$). First, deletion of a vertex can decrease the treewidth by at most $1$. Thus, $G[X]$ has treewidth at least $d+B-d = B$. By Lemma~\ref{lem:ccore_subraph}, $G[X]$ and $G'[X']$ are isomorphic.

Therefore the treewidth of $G'[X']$, i.e., the treewidth of $(\mathcal{A}',X')$, is at least $B$.
So we have shown that if the treewidth of $\conjunctclass{C}$ is unbounded then so is the treewidth of $\Gamma(C)$.

In the second case, we assume that the contracts of queries in~$\conjunctclass{C}$ (see Definition~\ref{def:contract}) have unbounded treewidth. We introduce the following terminology: Let $(\mathcal{A},X)$ be a conjunctive query and let $y\in U(\mathcal{A})$. The \emph{degree of freedom} of $y$ is the number of vertices in $X$ that are adjacent to $y$ in the Gaifman graph of $\mathcal{A}$. Let $\hat{C}$ be the class of all conjunctive queries $(\mathcal{A},X)$ such that there exists $\Psi=((\mathcal{A}_1,\ldots,\mathcal{A}_{\ell(\Psi)}),X)$ in $C$ with $(\mathcal{A},X)=(\mathcal{A}_i,X)$ for some $i\in[\ell(\Psi)]$. Since $C$ is closed under deletions, $\hat{C}\subseteq C$. By the assumptions of the lemma, $\hat{C}$ consists only of self-join-free queries. Thus, by Lemma~\ref{lem:self-join-free-counting-cores}, each query in $\hat{C}$ is its own $\ccore$ (up to deleting isolated variables). We will now consider the following cases:

\begin{enumerate}
\item[(i)] Suppose that $\hat{C}$ has unbounded degree of freedom. With Definition~\ref{def:contract} it is straightforward to check that a quantified variable $y$ with degree of freedom $B$ induces a clique of size $B$ in the contract of the corresponding query. Therefore, the contracts of the queries in $\hat{C}$ have unbounded treewidth. Consequently, $\#\homsprob(\hat{C})$ is $\ccW{1}$-hard by the classification of Chen and Mengel~\cite[Theorem 22]{ChenM15}. Since $\hat{C}\subseteq C$ the problem $\#\homsprob(\hat{C})$ is merely a restriction of $\#\ucq(C)$, the latter of which is thus $\ccW{1}$-hard as well.
\item[(ii)] Suppose that the degree of freedom of queries in $\hat{C}$ is bounded by a constant $d'$. We show that $\Gamma(C)$ has unbounded treewidth, which proves the claim by Theorem~\ref{thm:implicit_UCQ_dicho}. 
To this end, let $B$ be any positive integer. The goal is to find a conjunctive query in $\Gamma(C)$ with treewidth at least $B$. Since $\contract{\conjunctclass{C}}$ has unbounded treewidth, there is a UCQ $\Psi=(\mathcal{A}_1, \ldots,\mathcal{A}_\ell),X)$ in $C$ such that $\contract{\conjunctfull{\Psi}}$ has treewidth larger than $d + \binom{dd'}{2}+ B$. We will show that $\conjunctfull{\Psi}$ has treewidth larger than $d+B$, which, as we have argued previously, implies that $\Gamma(C)$ contains a query with treewidth at least $B$.
To prove that $\conjunctfull{\Psi}$ indeed has treewidth larger than $d+B$, let $\conjunctfull{\Psi}=(\mathcal{A},X)$ and let $G$ be the Gaifman graph of $\mathcal{A}$. Let $Y=U(\mathcal{A})\setminus X$ and recall 
from Definition~\ref{def:contract} that $\contract{\mathcal{A},X}$ is obtained from $G[X]$ by adding an edge between any pair of vertices $u$ and $v$ that are adjacent to a common connected component in $G[Y]$. Let $N\subseteq X$ be the set of all vertices in~$X$ that are adjacent to a vertex in $Y$ and observe that $|N|\leq dd'$ since the number of existentially quantified variables and the degree of freedom are bounded by $d$ and $d'$, respectively. Thus, $\contract{\mathcal{A},X}$ is obtained by adding at most $\binom{dd'}{2}$ edges to $G[X]$. The deletion of an edge can decrease the treewidth by at most $1$,  so
\[ \mathsf{tw}(\conjunctfull{\Psi}) = \mathsf{tw}(G) \geq \mathsf{tw}(G[X]) \geq \mathsf{tw}(\contract{\mathcal{A},X}) - \binom{dd'}{2} >  d + B  \,,\]
which concludes Case (ii).
\end{enumerate}
With all cases concluded, the proof is completed.
\end{proof}

From Lemmas~\ref{lem:UCQ_algo} and~\ref{lem:main_dichotomy_selfjoinfree} we directly obtain the main result of this section, which we restate for convenience.
\maindichogeneral*

\begin{remark}
    It turns out that all side conditions of Theorem~\ref{thm:main_dicho_quantifiers} are necessary if we aim to classify $\#\ucq(C)$ solely via $\conjunctclass{C}$. To this end, we provide counter examples for each missing condition in Appendix~\ref{sec:appendix}.
\end{remark}

\section{The Meta Complexity of Counting Answers to UCQs}\label{sec:meta}

We consider the meta-complexity question of deciding whether it is possible to count the answers of a given UCQ in linear time. 
As pointed out in the introduction, this problem is immediately $\mathrm{NP}$-hard even when restricted to conjunctive queries if we were to allow quantified variables. Therefore, we consider quantifier-free UCQs in this section. Recall the definition of $\meta$ from Section~\ref{sec:intro}.

\vbox{
\begin{description}\setlength{\itemsep}{0pt}
\setlength{\parskip}{0pt}
\setlength{\parsep}{0pt}   			
\item[\bf Name:] $\meta$ 
\item[\bf Input:]   A union $\Psi$ of quantifier-free conjunctive queries. 
\item[\bf Output:]  Is it possible to count answers to $\Psi$ in time linear in the size of $\calD$, i.e., can the function $\mathcal{D} \mapsto \ans{\Psi}{\mathcal{D}}$ be computed in time $O(|\mathcal{D}|)$. 
\end{description}
}

Since we focus in this section solely on quantifier-free queries, it will be  convenient to simplify our notation as follows. As all variables are free, we will identify a conjunctive query $\varphi$ just by its associated structure, that is, we will write $\varphi=\mathcal{A}$, rather than $\varphi=(\mathcal{A},U(\mathcal{A}))$. Similarly, we represent a union of quantifier-free conjunctive queries $\Psi$ as a tuple of structures $\Psi=(\mathcal{A}_1,\dots,\mathcal{A}_\ell)$.

For studying the complexity of $\meta$, it will be crucial to revisit the classification of linear-time counting of answers to quantifier-free conjunctive queries:
The following theorem is well known and was discovered multiple times by different authors in different contexts.\footnote{We remark that~\cite[Theorem 7]{BeraGLSS22} focuses on the special case of graphs and \emph{near} linear time algorithms. However, in the word RAM model with $O(\log n)$ bits, a linear time algorithm is possible~\cite{CarmeliZBCKS22}.}  
This is Theorem~\ref{thm:CQ_dichointro} from the introduction, which we now restate in a version that expresses CQs as structures.

\begin{theorem}[See Theorem 12 in~\cite{BraultBaron13}, and \cite{BeraGLSS22,Mengel25arxiv}]\label{thm:CQ_dicho}
Let $\mathcal{A}$ be a quantifier-free conjunctive query. If $\mathcal{A}$ is acyclic, then the function $\mathcal{D} \mapsto \homs{\mathcal{A}}{\mathcal{D}}$ is computable in linear time. Otherwise, if $\mathcal{A}$ has arity $2$, then $\mathcal{D} \mapsto \homs{\mathcal{A}}{\mathcal{D}}$ cannot be solved in linear time unless the Triangle Conjecture fails.
\end{theorem}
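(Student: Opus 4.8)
The plan is to treat the two directions separately: the acyclic case by a direct linear-time dynamic program, and the cyclic arity-$2$ case by a fine-grained reduction from triangle detection.

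\textbf{Upper bound.} Assume $\mathcal{A}$ is acyclic. Since $\mathcal{A}$ is fixed, I would precompute (in $O(1)$ time) a join tree of the atoms of $\mathcal{A}$, i.e.\ a tree satisfying the running-intersection property; isolated variables and several connected components are dealt with at the very end by multiplying with powers of $|U(\mathcal{D})|$ and by taking a product over components. The core is the usual Yannakakis-style bottom-up aggregation: with each node $t$ (an atom $R_t(\vec{s}_t)$) store a table of the tuples of $R_t^{\mathcal{D}}$, each carrying a nonnegative integer weight that counts extensions to a homomorphism of the subquery rooted at $t$; initialise all weights to $1$ and, in post-order, fold each child $t'$ into its parent $t$ by multiplying the weight of each row $\tau$ of $t$'s table by the sum of the weights of the rows of $t'$'s table that agree with $\tau$ on the shared variables. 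Using radix sort on the (bounded-arity) projection key, each fold costs $O(|\mathcal{D}|)$; all weights are bounded by $|\mathcal{D}|^{|U(\mathcal{A})|}$ and hence fit in $O(1)$ machine words in the word-RAM model with $O(\log n)$ bits, so all arithmetic is $O(1)$. There are $O(1)$ folds, the root weights sum to $\homs{\mathcal{A}}{\mathcal{D}}$ (before the trivial corrections), and the total running time is $O(|\mathcal{D}|)$.

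\textbf{Lower bound.} Suppose $\mathcal{A}$ is cyclic and has arity $2$. I would give a linear-time reduction from detecting a triangle in a graph $H$ with $m$ edges to computing $\homs{\mathcal{A}}{\cdot}$ on databases of size $O_{\mathcal{A}}(m)$; since the Triangle Conjecture forbids any $O(m)$-time triangle detector, this suffices. As the Gaifman graph $G$ of $\mathcal{A}$ contains a cycle, fix a shortest one $C=(v_0,\dots,v_{k-1})$, of length $k=\mathrm{girth}(G)\ge 3$ (if $\mathcal{A}$ uses several relation symbols, $C$ is a suitably ``coloured'' cycle, which does not affect the argument). Step~1 reduces computing $\homs{C_k}{\cdot}$ to computing $\homs{\mathcal{A}}{\cdot}$: given a graph $H'$, build a database over $\mathcal{A}$'s signature consisting of a copy of $H'$ for the intended images of $V(C)$, together with a fixed ``universal'' gadget (depending only on $\mathcal{A}$) glued on along each connected component of $G-V(C)$ to absorb the acyclic part of $\mathcal{A}$; because $C$ is a \emph{shortest} cycle, one can arrange that any homomorphism of $\mathcal{A}$ either collapses $C$ onto a strictly shorter closed walk confined to a known sub-gadget --- a contribution that is independent of $H'$ (or depends on $H'$ only through easily computed quantities) and can be subtracted --- or maps $C$ homomorphically into $H'$, so that $\homs{\mathcal{A}}{\cdot}$ determines $\homs{C_k}{H'}$. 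Step~2 reduces triangle detection to computing $\homs{C_k}{\cdot}$: partition $C_k$ into three arcs of lengths $k_1,k_2,k_3\ge 1$ summing to $k$, and from $H$ build a graph on $k$ ``layers'', three of which are copies of $V(H)$ (the triangle corners) and the rest auxiliary, with edges only between consecutive layers, wired so that a closed walk of length $k$ is forced to run monotonically around the layers --- ruling out backtracking --- and hence picks out three vertices of $H$ pairwise joined by edges, i.e.\ a triangle; consequently $\homs{C_k}{\cdot}$ on this $O(m)$-size graph equals a positive multiple of the number of triangles of $H$ plus an easily computed correction (needed only when $k$ is even).

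\textbf{Main obstacle.} The upper bound is routine. The content lies in the lower bound, specifically in the two gadget constructions of Steps~1 and~2: encoding triangles faithfully by cycle-homomorphisms via the layered, non-backtracking construction while keeping the database linear in $m$, and ``peeling off'' the acyclic part of $\mathcal{A}$ without creating uncontrolled spurious homomorphisms. These are standard but delicate; fully rigorous versions appear in~\cite{BraultBaron13,BeraGLSS22,Mengel25arxiv}, whose arguments I would follow.
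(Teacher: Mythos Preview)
The paper does not give its own proof of this theorem; it is stated with citations to Brault-Baron, Bera et al., and Mengel, so there is no in-paper argument to compare against. I will therefore assess your sketch on its own.

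Your upper bound is fine: the Yannakakis-style bottom-up count over a join tree is exactly the standard linear-time algorithm, and your remarks about word-RAM arithmetic and isolated variables are correct.

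Your lower bound has the right global shape (triangle detection $\to$ $C_k$-homomorphism counting $\to$ $\mathcal{A}$-homomorphism counting), and Step~2 is the standard layered construction. Step~1, however, is where the real work lies, and your sketch does not discharge it. Two specific issues: first, the phrase ``the acyclic part of $\mathcal{A}$'' is misleading --- choosing $C$ to be a \emph{shortest} cycle does not make $G-V(C)$ a forest (take two vertex-disjoint triangles), so your gadget cannot simply ``absorb an acyclic remainder''. Second, your dichotomy ``either $C$ collapses into the gadget, or $C$ maps into $H'$'' omits the mixed case where part of $C$ lands in $H'$ and part in the gadget; once $H'$ and the gadget are glued, these mixed homomorphisms depend on $H'$ in a way that is not obviously a correction computable in linear time. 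The shortest-cycle property is indeed useful (it rules out chords and limits how non-cycle vertices attach to $C$), but not for the reason you state.

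The cleaner route taken in the cited references avoids the two-step decomposition: one builds the layered database directly for $\mathcal{A}$ (universe roughly $V(H)\times V(\mathcal{A})$, with cycle atoms encoding edges of $H$ along the three arcs and non-cycle atoms encoding equality or pointing to fixed dummy elements), so that a homomorphism $\mathcal{A}\to\mathcal{D}$ is forced to trace out a triangle of $H$ on the cycle coordinates. In the self-join-free case this is immediate because each atom has its own relation symbol; with self-joins one must additionally control endomorphisms of $\mathcal{A}$, which is where the argument becomes genuinely delicate. Your acknowledgement that the details are ``standard but delicate'' and your deferral to the cited sources is appropriate, but as a standalone sketch Step~1 does not yet stand.
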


Theorem~\ref{thm:CQ_dicho} yields an efficient way to check whether counting answers to a quantifier-free conjunctive query $\varphi$ can be done in linear time: Just check whether $\varphi$ is acyclic. In stark contrast, we show that no easy criterion for linear time tractability of counting answers to \emph{unions} of conjunctive queries is possible, unless some conjectures of fine-grained complexity theory fail. In fact, our Theorem~\ref{thm:main_meta}, which we restate here for convenience, precisely determines the complexity of $\meta$.

\mainmeta*

The lower bounds in Theorem~\ref{thm:main_meta} imply that the exponential dependence on $\ell$ in our $2^{O(\ell)} \cdot |\Psi|^{\mathsf{poly}(\log|\Psi|)}$ time algorithm for $\meta$ cannot be significantly improved, unless standard assumptions fail.

The remainder of this section is devoted to the proof of Theorem~\ref{thm:main_meta}. It is split into two parts: In the first and easier part (Section~\ref{sec:meta_algo}), we construct the algorithm for $\meta$. For this, all we need to do is to translate the problem into the homomorphism basis (see Section~\ref{sec:UCQstoHoms}), i.e., we transform the problem of counting answers to $\Psi$ into the problem of evaluating a linear combination of terms, each of which can be determined by counting the answers to a conjunctive query. This is done in Lemma~\ref{lem:meta_algo}.
The second part (Section~\ref{sec:meta_hardness}) concerns the lower bounds and is more challenging. The overall strategy is as follows: In the first step, we use a parsimonious reduction 
from $3$-$\textsc{SAT}$ to computing the reduced Euler Characteristic of a complex.
The parsimonious reduction is due to Roune and S{\'{a}}enz{-}de{-}Cabez{\'{o}}n~\cite{RouneS13}. 
In combination with the Sparsification Lemma~\cite{ImpagliazzoPZ01}, this reduction becomes tight enough for our purposes. In the second step, we show how to encode a complex $\Delta$ into a union of acyclic conjunctive queries $\Psi$ such that the following is true: The reduced Euler Characteristic of $\Delta$ is zero if and only if all terms in the homomorphism basis are acyclic. The lower bound results of Theorem~\ref{thm:main_meta} are established in Lemmas~\ref{lem:meta_hard1}, and~\ref{lem:meta_hard3}.

\subsection{Solving $\meta$ via Inclusion-Exclusion}\label{sec:meta_algo}

\begin{lemma}\label{lem:new_meta_help}
    There is a deterministic algorithm that on input a quantifier-free UCQ $\Psi = (\mathcal{A}_1,\dots,\mathcal{A}_\ell)$ decides in time $|\Psi|^{\mathsf{poly}(\log|\Psi|)}\cdot 2^{O(\ell)}$ whether all structures $\mathcal{A}$ with
    $c_\Psi(\mathcal{A})\neq 0$ are acyclic.
\end{lemma}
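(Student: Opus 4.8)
The key structural fact is that, by inclusion-exclusion (the proof of Lemma~\ref{lem:UCQ_to_hombasis}),
$\ans{\Psi}{\calD} = \sum_{\emptyset\neq J\subseteq[\ell]} (-1)^{|J|+1}\cdot\homs{\conjuncts{\Psi}{J}}{\calD}$,
and after collecting $\cequiv$-terms the coefficient of a quantifier-free conjunctive query $\mathcal{A}$ is
$c_\Psi(\mathcal{A}) = \sum_{J:\ \conjuncts{\Psi}{J}\cong\mathcal{A}} (-1)^{|J|+1}$.
(Here I use that, in the quantifier-free setting, $\cequiv$ coincides with isomorphism, since every quantifier-free CQ is its own $\ccore$; this is recalled right after Lemma~\ref{lem:counting_minimal_isomorphic}.) So the structures $\mathcal{A}$ with $c_\Psi(\mathcal{A})\neq 0$ are exactly the isomorphism types of combined queries $\conjuncts{\Psi}{J}$, $\emptyset\neq J\subseteq[\ell]$, for which the signed count of the subsets $J$ mapping to that type is nonzero. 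First I would therefore enumerate, for every nonempty $J\subseteq[\ell]$, the structure $\conjuncts{\Psi}{J} = (\bigcup_{j\in J}\mathcal{A}_j, X)$; there are $2^\ell-1$ of them, each computable in time $\mathsf{poly}(|\Psi|)$ and of size at most $|\Psi|$.

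\textbf{Grouping by isomorphism type.}
Next I would partition these $2^\ell-1$ structures into isomorphism classes. Naively this requires $\binom{2^\ell}{2}$ graph-isomorphism tests on structures of size $|\Psi|$; using Babai's quasipolynomial isomorphism test each test runs in $|\Psi|^{\mathsf{poly}(\log|\Psi|)}$, which gives total time $2^{O(\ell)}\cdot|\Psi|^{\mathsf{poly}(\log|\Psi|)}$ — this is exactly the running time claimed. (One can slightly streamline by computing a canonical form of each $\conjuncts{\Psi}{J}$ and bucketing by canonical form, but the bound is the same.) For each isomorphism class $\mathcal{C}$, let $J_1,\dots,J_m$ be the subsets with $\conjuncts{\Psi}{J_r}\in\mathcal{C}$, and compute $\sum_{r=1}^m (-1)^{|J_r|+1}$; this is $c_\Psi(\mathcal{A})$ for any representative $\mathcal{A}$ of $\mathcal{C}$. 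The total number of such sums and additions is bounded by $2^\ell$, each on integers of magnitude at most $2^\ell$, so this step is within budget.

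\textbf{Finishing.}
Finally, for each isomorphism class $\mathcal{C}$ with nonzero coefficient, pick a representative $\mathcal{A}$ and test whether $\mathcal{A}$ is acyclic — for binary (arity-$2$) structures this is just acyclicity of the Gaifman graph, and in general it can be checked in polynomial time via the existence of a join tree (e.g.\ by the GYO algorithm, see~\cite{GottlobGS14}). The algorithm accepts if and only if every such representative is acyclic. Correctness is immediate from the displayed description of $\{\mathcal{A}: c_\Psi(\mathcal{A})\neq 0\}$ above and from invariance of acyclicity under isomorphism. The overall running time is dominated by the isomorphism-classification step, namely $|\Psi|^{\mathsf{poly}(\log|\Psi|)}\cdot 2^{O(\ell)}$, as required.

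\textbf{Main obstacle.}
The only real subtlety is the isomorphism bookkeeping: one needs the quasipolynomial dependence on $|\Psi|$ (hence the $\mathsf{poly}(\log|\Psi|)$ exponent rather than an $O(1)$ exponent) in order to avoid any superpolynomial blow-up beyond the unavoidable $2^{O(\ell)}$ coming from ranging over all $J\subseteq[\ell]$. Everything else — enumerating the $\conjuncts{\Psi}{J}$, summing signs, and the final acyclicity checks — is routine polynomial-time computation, provided one has already recalled that in the quantifier-free case $\cequiv$ is plain isomorphism so that "$c_\Psi(\mathcal{A})\neq 0$" really is decided by these signed isomorphism-class counts.
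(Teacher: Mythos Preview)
Your proposal is correct and follows essentially the same approach as the paper: enumerate all $\conjuncts{\Psi}{J}$, group them into isomorphism classes via Babai's quasipolynomial algorithm, compute the signed sums, and test acyclicity of the surviving representatives. The paper additionally spells out how to reduce structure isomorphism to vertex-coloured graph isomorphism (via incidence graphs), but this is a routine detail you implicitly rely on.
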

\begin{proof}
We rely on Babai's algorithm for isomorphism of vertex coloured graphs: This result~\cite[Corollary 4]{Babai16} implies that the isomorphism problem for vertex-coloured $n$-vertex graphs can be solved in time $n^{\mathsf{poly}(\log n)}$. Given a structure $\mathcal{A}$, the incidence graph of $\mathcal{A}$ is a vertex-coloured bipartite graph $B(\mathcal{A})=(V_1,V_2,E)$ where $V_1 = U(\mathcal{A})$ and $V_2$ is the set of all tuples of $\mathcal{A}$. We colour each vertex in $V_1$ with colour $U$, and for each relation symbol $R$ of the signature of $\mathcal{A}$, we colour all tuples $\vec{t} \in R^{\mathcal{A}}$ with colour $R$. It is then easy to see that two structures are isomorphic if and only if their incidence graphs are isomorphic (as vertex-coloured graphs). Thus, Babai's result implies that we can decide isomorphism between two structures $\mathcal{A}$ and $\mathcal{A}'$ in time $m^{\mathsf{poly}(\log m)}$ where $m=\max\{|\mathcal{A}|,|\mathcal{A}'|\}$.

We are now able to complete the proof.
By Definition~\ref{def:coefUCQ}, \[ c_\Psi(\mathcal{A}) = \sum_{\substack{J\subseteq [\ell]\\ \conjuncts{\Psi}{J} \cong \mathcal{A}}} (-1)^{|J|+1}.\]
This suggests the following algorithm: 
for each subset $J\subseteq [\ell]$, compute $\conjuncts{\Psi}{J}$. Afterwards, using the aforementioned application of Babai's algorithm, collect the isomorphic terms and compute $c_\Psi(\conjuncts{\Psi}{J})$ for each $J\subseteq[\ell]$ in time $2^{O(\ell)} \cdot |\Psi|^{\mathsf{poly}(\log|\Psi|)}$ (observing that $|\conjuncts{\Psi}{J}|\in O(|\Psi|)$). Clearly, $c_\Psi(\mathcal{A})=0$ for every $\mathcal{A}$ that is not isomorphic to any $\conjuncts{\Psi}{J}$.

Finally, output $1$ if each $\mathcal{A}$ with $c_\Psi(\mathcal{A})\neq 0$ is acyclic (each of these checks can be done in linear time~\cite{TarjanY84}), and output $0$ otherwise. The total running time of this algorithm is bounded from above by $2^{O(\ell)} \cdot |\Psi|^{\mathsf{poly}(\log|\Psi|)}$.
\end{proof}

We are now able to provide our algorithm for $\meta$. As mentioned in the introduction, since the question asked by $\meta$ is itself a complexity-theoretic question, we need to rely on a lower bound assumption for the correctness of the algorithm. For the case of arity $2$, the Triangle Conjecture will suffice, but we remark that result below and its proof apply verbatim to the case of arbitrary arity (which is why we proved Lemma~\ref{lem:new_meta_help} for arbitrary arities); the only difference is that instead of assuming the Triangle Conjecture, the higher arity case relies on the Hyperclique Hypothesis (see~\cite{Mengel25,Mengel25arxiv}). Since we wish to avoid introducing yet another lower bound assumption, we only present the case of arity $2$.

\begin{lemma}\label{lem:meta_algo}
When restricted to queries of arity $2$, the problem $\meta$ can be solved in time $|\Psi|^{\mathsf{poly}(\log|\Psi|)}\cdot 2^{O(\ell)}$, where $\ell$ is the number of conjunctive queries in the union, if the Triangle Conjecture is true.  
\end{lemma}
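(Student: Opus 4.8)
The plan is to combine the Complexity Monotonicity corollary (Corollary~\ref{cor:UCQ_monotone}) with the fine-grained dichotomy for quantifier-free CQs (Theorem~\ref{thm:CQ_dicho}) and then to invoke the decision procedure from Lemma~\ref{lem:new_meta_help}. Concretely, fix a quantifier-free UCQ $\Psi=(\mathcal{A}_1,\dots,\mathcal{A}_\ell)$ of arity $2$. By Corollary~\ref{cor:UCQ_monotone} with $d=1$, the function $\calD\mapsto\ans{\Psi}{\calD}$ is computable in time $O(|\calD|)$ if and only if for every $\cminimal$ structure $\mathcal{A}$ with $c_\Psi(\mathcal{A})\neq 0$ the function $\calD\mapsto\ans{\mathcal{A}}{\calD}$ is computable in time $O(|\calD|)$. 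Since every quantifier-free CQ is $\cminimal$ (as noted after Observation~\ref{obs:cminimalequivalent}), the $\cminimal$ qualifier is automatic here, so the condition is exactly: \emph{for every $\mathcal{A}$ with $c_\Psi(\mathcal{A})\neq 0$, counting answers to $\mathcal{A}$ is linear-time solvable}.

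Next I would pin down that last condition using Theorem~\ref{thm:CQ_dicho}. Each structure $\mathcal{A}$ with $c_\Psi(\mathcal{A})\neq 0$ is of the form $\conjuncts{\Psi}{J}$ for some $\emptyset\neq J\subseteq[\ell]$ (by Definition~\ref{def:coefUCQ}), hence is itself a quantifier-free CQ of arity $2$. Under the Triangle Conjecture, Theorem~\ref{thm:CQ_dicho} tells us that $\calD\mapsto\ans{\mathcal{A}}{\calD}$ is linear-time computable \emph{precisely} when $\mathcal{A}$ is acyclic. Therefore, assuming the Triangle Conjecture, $\meta(\Psi)$ answers ``yes'' if and only if every structure $\mathcal{A}$ with $c_\Psi(\mathcal{A})\neq 0$ is acyclic. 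This is exactly the predicate decided by the algorithm of Lemma~\ref{lem:new_meta_help}, which runs in time $|\Psi|^{\mathsf{poly}(\log|\Psi|)}\cdot 2^{O(\ell)}$. Running that algorithm and returning its output solves $\meta$ within the claimed bound, completing the proof of the lemma.

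The only subtlety to be careful about is bookkeeping in the ``only if'' direction of the equivalence: I must note that the $O(|\calD|^d)$ running-time bound in Corollary~\ref{cor:UCQ_monotone} already absorbs a factor depending only on $|\Psi|$ (via the function $f$ in the complexity-monotonicity theorem), so linear-time solvability of $\calD\mapsto\ans{\Psi}{\calD}$ is genuinely equivalent to linear-time solvability of each surviving CQ term, with the dependence on the (fixed) query hidden in the constant. There is no real obstacle here --- the lemma is essentially a clean assembly of three earlier results --- but the one place to state explicitly is that we use the Triangle Conjecture exactly once, to upgrade Theorem~\ref{thm:CQ_dicho}'s lower bound into a genuine characterisation of which single-CQ terms are linear-time hard; without it we would only get a one-directional guarantee. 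I would also remark in passing (as the paper does) that the same argument goes through for arbitrary arity if one replaces the Triangle Conjecture by the Hyperclique Hypothesis, since Lemma~\ref{lem:new_meta_help} was already proved for arbitrary arities.
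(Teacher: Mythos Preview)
Your proposal is correct and follows essentially the same route as the paper's proof: invoke Corollary~\ref{cor:UCQ_monotone} with $d=1$, observe that in the quantifier-free setting every surviving term is already $\cminimal$ (equivalently, that \#equivalence coincides with isomorphism here), apply Theorem~\ref{thm:CQ_dicho} under the Triangle Conjecture to turn ``linear-time'' into ``acyclic'', and then hand off to Lemma~\ref{lem:new_meta_help}. Your added remarks about the $f(|\Psi|)$ factor and the higher-arity extension via the Hyperclique Hypothesis match the paper's own commentary.
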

\begin{proof}
If the Triangle Conjecture is true, then Theorem~\ref{thm:CQ_dicho} implies that counting answers to a quantifier-free conjunctive query of arity $2$ is solvable in linear time if and only if the query is acyclic.
In combination with Corollary~\ref{cor:UCQ_monotone} we obtain that counting answers to a UCQ $\Psi=(\mathcal{A}_1,\dots,\mathcal{A}_\ell)$ can be done in linear time if and only if each $\mathcal{A}$ with $c_\Psi(\mathcal{A})\neq 0$ is acyclic --- recall that we consider the quantifier-free case in which \#equivalence is the same as isomorphism, hence we do not need to consider \#cores.

Consequently, we only need to call the algorithm provided in Lemma~\ref{lem:new_meta_help}. This concludes the proof.
\end{proof}

\subsection{Fine-grained Lower bounds for $\meta$}\label{sec:meta_hardness}

For our hardness proof for $\meta$, we will construct a reduction from the computation of the reduced Euler characteristic of an abstract simplicial complex. We begin by introducing some central notions about (abstract) simplicial complexes.

\subsubsection{Simplicial Complexes}
A simplicial complex captures the geometric notion of an independence system. We will use the corresponding combinatorial description, which is also known as \emph{abstract simplicial complex}, and defined as follows.
\begin{definition}
A \emph{complex} (short for abstract 
simplicial complex) $\Delta$ is a pair consisting of a nonempty 
finite ground set $\Omega$ and a set of \emph{faces} $\mathcal{I}\subseteq 2^\Omega$ 
that includes all singletons and is a downset.  
That is, the set of faces~$\mathcal{I}$ satisfies the following criteria.
\begin{itemize}
    \item $\forall S\subseteq \Omega: S \in \mathcal{I} \Rightarrow \forall S'\subseteq S: S' \in \mathcal{I} $, and
    \item $\forall x \in \Omega: \{x\}\in \mathcal{I}$.
\end{itemize}
The inclusion-maximal faces in $\mathcal{I}$  are called \emph{facets}. Unless stated otherwise, we encode complexes  by the ground set $\Omega$ and the set of facets. Then $|\Delta|$ is its encoding
length.  
\end{definition}

\begin{definition}
The \emph{reduced Euler characteristic} of a complex~$\Delta=(\Omega,\mathcal{I})$ is defined as
\[ \hat{\chi}(\Delta) := - \sum_{S \in \mathcal{I}} (-1)^{|S|}  \,.\]
\end{definition}

Consider for example the complexes $\Delta_1$ and $\Delta_2$ in Figure~\ref{fig:Example_complexes}, given with a computation of their reduced Euler characteristic.

\begin{figure}
    \centering
    \includegraphics{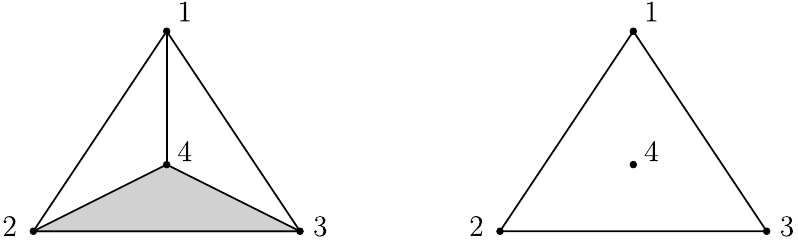}
    \caption{Two complexes over the groundset $\Omega=\{1,2,3,4\}$. Let $\Delta_1$ be the complex shown on the left. It has facets $\{2,3,4\}$, $\{1,2\}$, $\{1,3\}$, and $\{1,4\}$. Let $\Delta_2$ be the complex shown on the right, with facets $\{1,2\}$, $\{2,3\}$, $\{1,3\}$, and $\{4\}$. The reduced Euler characteristic of these complexes is computed as follows: Since $\Delta_1$ has one face of size $3$ ($\{2,3,4\}$), $6$ faces of size $2$, $4$ faces of size $1$, and the empty set as face of size $0$ it holds that $\hat{\chi}(\Delta_1) = - (-1 + 6 - 4 +1) = -2$. Similarly, we have $\hat{\chi}(\Delta_2) = - (3 - 4 + 1) = 0$.}
    \label{fig:Example_complexes}
\end{figure}

Let $\Delta=(\Omega,\mathcal{I})$ be a complex. 
Consider distinct elements~$x$ and~$y$ in~$\Omega$. We say that $x$ \emph{dominates} $y$ if, for every $S\in\mathcal{I}$, $y \in S$ implies $S\cup\{x\} \in \mathcal{I}$. 
For example, $x$ and $y$ dominate each other if they are contained in the same facets. We start with a simple observation:

\begin{lemma}\label{lem:dom_equivalence}
    Let $\Delta=(\Omega,\mathcal{I})$ be a complex and let $x,y\in \Omega$. Then $x$ dominates $y$ if and only if each facet that contains $y$ must also contain $x$.
\end{lemma}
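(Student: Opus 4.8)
\textbf{Proof plan for Lemma~\ref{lem:dom_equivalence}.}
The statement is a straightforward unfolding of the definition of domination, so the plan is to prove the two implications directly, using the downset property of $\mathcal{I}$ for the harder direction.

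For the forward direction, suppose $x$ dominates $y$ and let $F$ be a facet with $y\in F$. Since $F\in\mathcal{I}$ and $y\in F$, the definition of domination gives $F\cup\{x\}\in\mathcal{I}$. But $F$ is inclusion-maximal in $\mathcal{I}$, so $F\cup\{x\}\subseteq F$, which forces $x\in F$. Hence every facet containing $y$ also contains $x$.

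For the backward direction, suppose every facet containing $y$ also contains $x$, and let $S\in\mathcal{I}$ with $y\in S$. Since $\mathcal{I}$ is finite and a downset, $S$ is contained in some facet $F$ (extend $S$ by adding elements while staying in $\mathcal{I}$ until maximality is reached). Then $y\in S\subseteq F$, so by assumption $x\in F$, and therefore $S\cup\{x\}\subseteq F\in\mathcal{I}$; since $\mathcal{I}$ is a downset, $S\cup\{x\}\in\mathcal{I}$. As $S$ was arbitrary, $x$ dominates $y$.

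There is essentially no obstacle here; the only point requiring a word of care is the claim that every face extends to a facet, which is immediate from finiteness of $\Omega$ (and hence of $\mathcal{I}$). This lemma is presumably a warm-up used to simplify later reasoning about dominated elements of the complex $\Delta$ in the reduction underlying Theorem~\ref{thm:main_meta}.
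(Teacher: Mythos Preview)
Your proof is correct and essentially identical to the paper's: both directions are argued exactly as you do, using maximality of facets for the forward direction and extending a face to a facet plus the downset property for the backward direction. If anything, you are slightly more explicit (justifying why a face extends to a facet), but there is no substantive difference.
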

\begin{proof}
    For the forward direction suppose $x$ dominates $y$. Let $F$ be a facet that contains $y$. Then $F\cup\{x\} \in \mathcal{I}$. Since facets are inclusion maximal in $\mathcal{I}$ it follows that $F=F\cup \{x\}$, that is, $x\in F$.

    For the backward direction suppose that each facet containing $y$ must also contain $x$. Let $S\in\mathcal{I}$ with $y \in S$. Then there is a facet $F$ with $S\subseteq F$. Since $x \in F$ we have $S \cup \{x\} \subseteq F$ and hence $S \cup \{x\} \in \mathcal{I}$. Therefore $x$ dominates $y$.
\end{proof}

We say that a complex $\Delta=(\Omega,\mathcal{I})$ is \emph{irreducible} if, for every $y\in \Omega$, there is no $x\in \Omega \setminus \{y\}$ that dominates~$y$. For example, both complexes in Figure~\ref{fig:Example_complexes} are irreducible.

Given $\Delta=(\Omega,\mathcal{I})$ and $y\in\Omega$, we define $\Delta\setminus y$ to be the complex obtained from $\Delta$ by deleting all faces that contain $y$ and by deleting $y$ from $\Omega$.
The following lemma seems to be folklore. We include a proof only for reasons of self-containment.
\begin{lemma}\label{lem:no_dom}
Let $\Delta = (\Omega,\mathcal{I})$ be a complex. If $y\in \Omega$ is dominated
by some $x\in \Omega\setminus \{y\}$ then
$\hat{\chi}(\Delta) = \hat{\chi}(\Delta\setminus y)$.
\end{lemma}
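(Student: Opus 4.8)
The plan is to prove that deleting a dominated vertex preserves the reduced Euler characteristic by exhibiting a sign-reversing involution on the faces of $\Delta$ that do not survive in $\Delta\setminus y$, i.e.\ on the faces of $\Delta$ containing $y$ — or, more symmetrically, a bijection that cancels all the ``extra'' terms in the alternating sum $\sum_{S\in\mathcal{I}}(-1)^{|S|}$. Concretely, suppose $x$ dominates $y$ in $\Delta=(\Omega,\mathcal{I})$, with $x\neq y$. The faces of $\Delta\setminus y$ are exactly the faces $S\in\mathcal{I}$ with $y\notin S$, so $\hat\chi(\Delta)-\hat\chi(\Delta\setminus y) = -\sum_{S\in\mathcal{I},\,y\in S}(-1)^{|S|}$, and it suffices to show this sum vanishes.

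First I would partition $\mathcal{F}:=\{S\in\mathcal{I}: y\in S\}$ into pairs. Define the map $\phi:\mathcal{F}\to\mathcal{F}$ by $\phi(S) = S\triangle\{x\}$ (symmetric difference), i.e.\ $\phi(S)=S\cup\{x\}$ if $x\notin S$ and $\phi(S)=S\setminus\{x\}$ if $x\in S$. The key claim is that $\phi$ is a well-defined involution on $\mathcal{F}$: it clearly still contains $y$ (since $x\neq y$), it is its own inverse, and it has no fixed points. The only nontrivial point is that $\phi(S)\in\mathcal{I}$. If $x\notin S$, then since $y\in S$ and $x$ dominates $y$, the domination property gives $S\cup\{x\}\in\mathcal{I}$. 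If $x\in S$, then $S\setminus\{x\}\subseteq S\in\mathcal{I}$, and $\mathcal{I}$ is a downset, so $S\setminus\{x\}\in\mathcal{I}$. Hence $\phi$ maps $\mathcal{F}$ to $\mathcal{F}$.

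Now $\phi$ pairs up the elements of $\mathcal{F}$, and within each pair $\{S, \phi(S)\}$ the sizes differ by exactly $1$, so $(-1)^{|S|} + (-1)^{|\phi(S)|} = 0$. Summing over a set of representatives gives $\sum_{S\in\mathcal{F}}(-1)^{|S|} = 0$, and therefore $\hat\chi(\Delta) = \hat\chi(\Delta\setminus y)$, as claimed. I would also remark that, by Lemma~\ref{lem:dom_equivalence}, the domination hypothesis can equivalently be stated in terms of facets, which is sometimes the more convenient form for applications.

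The main (and really the only) obstacle is verifying that $\phi$ genuinely maps into $\mathcal{F}$ — i.e.\ that adding $x$ to a face containing $y$ stays inside $\mathcal{I}$ — and this is precisely what the definition of ``$x$ dominates $y$'' supplies; the downset property handles the reverse direction. Everything else is a routine pairing argument. One should also double-check the trivial edge case where $\mathcal{F}$ could in principle be empty (it cannot, since $\{y\}\in\mathcal{I}$ by the singleton axiom, so $\{y\}\in\mathcal{F}$), but this does not affect the conclusion since the empty sum is $0$ anyway.
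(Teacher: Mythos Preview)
Your proof is correct and follows essentially the same approach as the paper: both define the sign-reversing involution $S\mapsto S\triangle\{x\}$ on the set of faces containing $y$, use domination to verify that adding $x$ stays in $\mathcal{I}$, and conclude that the contributions of these faces to the alternating sum cancel in pairs. Your write-up is in fact slightly more careful, explicitly invoking the downset property for the removal case and noting that $\phi$ is a fixed-point-free involution.
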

\begin{proof}
Let $\Delta=(\Omega,\mathcal{I})$. Write $\mathcal{I}_{y}$ for the set of all faces containing $y$. Consider the mapping $b:\mathcal{I}_y \to \mathcal{I}_y$
\[b(S) := \begin{cases} S \cup \{x\} & x \notin S \\ S \setminus\{x\} & x \in S\,.
\end{cases}
\]
Note that $b$ is well-defined since $S\cup \{x\} \in \mathcal{I}_y$ because $y\in S$ and $x$ dominates $y$.
Observe that $b$ induces a partition of $\mathcal{I}_y$ in pairs $\{S,S\cup\{x\}\}$ for $x\notin S$. For those pairs, we clearly have $|S|+1 =|S\cup\{x\}|$. Thus
\[ \sum_{S \in \mathcal{I}_y} (-1)^{|S|} = 0 \,,\]
and therefore
\[ \hat{\chi}(\Delta) = - \sum_{S \in \mathcal{I}} (-1)^{|S|} = - \sum_{S \in \mathcal{I}\setminus \mathcal{I}_y} (-1)^{|S|} = \hat{\chi}(\Delta\setminus y)\,. \]
\end{proof}

\begin{definition}
Two complexes $\Delta_1=(\Omega_1,\mathcal{I}_1)$ and $\Delta_2=(\Omega_2,\mathcal{I}_2)$ are \emph{isomorphic} if there is a bijection $b:\Omega_1 \to \Omega_2$ such that $S \in \mathcal{I}_1$ if and only if $b(S)\in \mathcal{I}_2$ for each $S\subseteq \Omega$, where $b(S)=\{b(x)\mid x \in S\}$.
\end{definition}

Finally, a complex $(\Omega,\mathcal{I})$ is called \emph{trivial} if it is isomorphic to $(\{x\},\{\emptyset,\{x\}\})$.

\subsubsection{The Main Reduction}\label{sec:mainred}

To begin with, we require a conjunctive query whose answers cannot be counted in linear time under standard assumptions. To this end, we define, for positive integers $k$ and $t$, a binary relational structure $\mathcal{K}_t^k$ as follows. Start with a $t$-clique $K_t$ and $k$-stretch every edge, that is, each edge of $K_t$ is replaced by a path consisting of $k$ edges. We denote the resulting graph by~$K_t^k$. For each edge $e$ of $K_t^k$, we introduce a relation $R_e=\{e\}$ of arity $2$. The structure $\mathcal{K}_t^k$ has universe $V(K_t^k)$ and relations $(R_e)_{e\in E(K_t^k)}$.

\begin{observation}\label{fac:obvious_facts_are_obvious}
Let $k$ and $t$ be positive integers. The structure $\mathcal{K}_t^k$ is self-join-free and has arity $2$.
\end{observation}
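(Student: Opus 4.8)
The statement is entirely routine, so the proof is short and amounts to unwinding the definition of $\mathcal{K}_t^k$. First, for the arity: every relation $R_e$ of $\mathcal{K}_t^k$ was explicitly introduced as $R_e = \{e\}$ where $e$ is an edge of $K_t^k$, i.e., $e$ is a pair of vertices. Hence each $R_e$ has arity $2$, and since these are all the relations of $\mathcal{K}_t^k$, the arity of the structure is exactly $2$.

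For self-join-freeness, recall from the paragraph ``Self-join-free Conjunctive Queries and Isolated Variables'' that a conjunctive query $(\mathcal{A},X)$ is self-join-free if each relation of $\mathcal{A}$ contains at most one tuple. By construction, for every edge $e$ of $K_t^k$ the relation $R_e$ equals $\{e\}$, which contains exactly one tuple. Since the relation symbols $(R_e)_{e \in E(K_t^k)}$ are pairwise distinct (one per edge), no relation symbol is reused, and in particular each relation contains exactly one tuple, hence at most one. Therefore $\mathcal{K}_t^k$ is self-join-free.

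There is essentially no obstacle here; the only point worth stating carefully is that the relation symbols $R_e$ are indexed by, and hence in bijection with, the edges of $K_t^k$, so that distinct edges give distinct relation symbols and no ``self-join'' (two atoms sharing a relation symbol) can occur. This is immediate from the definition of the family $(R_e)_{e \in E(K_t^k)}$.
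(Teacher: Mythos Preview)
Your proof is correct and complete. The paper states this as an observation without proof, treating it as immediate from the definition of $\mathcal{K}_t^k$; your argument simply makes explicit the two points (each $R_e$ is binary, and each $R_e$ contains exactly one tuple) that the paper leaves implicit.
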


\begin{lemma}\label{lem:non_inif_ETH_clique}
Suppose that the non-uniform ETH holds. For all positive integers $d$, there is a $t$ such that for each $k$, the function  $\calD\mapsto \homs{\mathcal{K}_t^k}{\calD}$ cannot be computed in time $O(\abs{\calD}^d)$.
\end{lemma}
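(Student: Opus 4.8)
The plan is to reduce $\clique$ detection to computing $\calD\mapsto\homs{\mathcal{K}_t^k}{\calD}$ via a ``colourful'' database gadget, and then to derive a contradiction with the non-uniform ETH (Conjecture~\ref{conj:NUETH}) through the standard $3$-$\textsc{SAT}$-to-$\clique$ reduction. Concretely, for every $t\geq 2$, every $k\geq 1$ and every $n$-vertex graph $G$, I would construct a database $\calD_G$ over the signature of $\mathcal{K}_t^k$ (one binary relation symbol $R_e$ per edge $e$ of $K_t^k$) of size $O_{t,k}(n^2)$, computable in time $O_{t,k}(n^2)$, such that $\homs{\mathcal{K}_t^k}{\calD_G}>0$ if and only if $G$ contains a $t$-clique. (In fact the gadget gives $\homs{\mathcal{K}_t^k}{\calD_G}=t!\cdot(\text{number of }t\text{-cliques of }G)$, but the decision version already suffices for a lower bound.) The decisive feature for the lemma is that this construction works uniformly for \emph{every} $k$, with only the hidden constant depending on $k$.

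For the gadget I would use disjoint ``colour classes'': $V_i=V(G)\times\{i\}$ for each vertex $v_i$ of $K_t$, and, for each $i<j$ and each internal vertex $p^{ij}_s$ of the $i$--$j$ path of $K_t^k$, a class $W^{ij}$ consisting of the ordered pairs $(a,b)$ with $\{a,b\}\in E(G)$, tagged by $(i,j)$. The relation assigned to the first path edge $(v_i,p^{ij}_1)$ is $\{((a,i),((a,b),i,j)):\{a,b\}\in E(G)\}$; each internal edge $(p^{ij}_s,p^{ij}_{s+1})$ gets the identity relation $\{(x,x):x\in W^{ij}\}$; and the last edge $(p^{ij}_{k-1},v_j)$ gets $\{(((a,b),i,j),(b,j)):\{a,b\}\in E(G)\}$ (for $k=1$ there are no internal vertices and the single edge $(v_i,v_j)$ simply gets $\{((a,i),(b,j)):\{a,b\}\in E(G)\}$). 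Any homomorphism $h\colon\mathcal{K}_t^k\to\calD_G$ is then forced to be colour-respecting, the identity relations force all internal vertices of each path to share a value, and chasing the two endpoints shows $h(v_i)=(a_i,i)$ for vertices $a_1,\dots,a_t$ that are pairwise adjacent in $G$, hence distinct since $G$ is loopless, i.e.\ a $t$-clique; the path vertices are then uniquely determined. Conversely, every ordered $t$-clique induces exactly one homomorphism. Note $\mathcal{K}_t^k$ is self-join-free of arity $2$ by Observation~\ref{fac:obvious_facts_are_obvious}, and $\calD_G$ has arity $2$ as required.

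It then remains to combine this with the known fine-grained lower bound for $\clique$. Suppose, for contradiction, that the lemma fails: there is a $d$ such that for every $t$ there is a $k=k(t)$ and an algorithm $A_t$ computing $\calD\mapsto\homs{\mathcal{K}_t^{k(t)}}{\calD}$ in time $O(\abs{\calD}^d)$. Composing $A_t$ with the gadget yields, for every $t$, an algorithm $B_t$ deciding the existence of a $t$-clique in an $n$-vertex graph in time $O_t(n^{2d})$, so $(B_t)_t$ is a \emph{non-uniform} family of algorithms for $t$-$\clique$ with a \emph{bounded} exponent $2d$, independent of $t$. Feeding this into the standard $3$-$\textsc{SAT}$-to-$\clique$ reduction finishes the argument: applying the Sparsification Lemma~\cite{ImpagliazzoPZ01} to reduce to $3$-$\textsc{SAT}$ instances with $O(n)$ clauses and splitting the $n$ variables into $k$ blocks produces an equivalent $k$-$\clique$ instance on $2^{O(n/k)}$ vertices; running $B_k$ on it solves $3$-$\textsc{SAT}$ in time $2^{O(d\cdot n/k)}$, and choosing $k$ large enough (depending on $d$ and $\varepsilon$) makes this $2^{\varepsilon n}$ for any prescribed $\varepsilon>0$. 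Thus $3$-$\textsc{SAT}\in\bigcap_{\varepsilon>0}\mathrm{DTime}(\exp(\varepsilon n))$ (in the non-uniform sense), contradicting Conjecture~\ref{conj:NUETH}. This is precisely where non-uniformity is essential: the failure of the lemma only guarantees the \emph{existence} of the values $k(t)$ and algorithms $A_t$ with no uniform way to find them, so only a non-uniform family of fast clique algorithms can be extracted. I expect the main obstacle to be stating the $\clique$ lower bound in exactly this non-uniform, bounded-exponent form (repackaging the sparsification-based arguments of~\cite{Chenetal06,CyganFKLMPPS15}); the size and correctness checks for the gadget should be routine.
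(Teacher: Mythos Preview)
Your proof is correct and takes essentially the same approach as the paper: reduce $t$-clique detection to computing $\homs{\mathcal{K}_t^k}{\cdot}$ via a colour-forcing gadget, then invoke the non-uniform ETH lower bound for clique. The paper cites Chen--Eickmeyer--Flum~\cite{ChenEF12} as a black box for that lower bound and uses a slightly different gadget (replacing each edge of $G$ by $\binom{t}{2}$ coloured length-$k$ paths rather than your shared $W^{ij}$ classes with identity relations), whereas you unwind the sparsification-plus-variable-grouping argument yourself; both routes are equally valid, and the only cosmetic blemish is your overloading of $k$ in the final paragraph.
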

\begin{proof}
We  again write $K_t$ for the $t$-clique and $K_t^k$ for the $k$-stretch of $K_t$.
Chen, Eickmeyer and Flum~\cite{ChenEF12} proved that, under non-uniform ETH, for each positive integer $d$, there is a $t$ such that determining whether a graph $G$ contains $K_t$ as a subgraph cannot be done in time $O(|G|^d)$.

We construct a simple linear-time reduction to computing $\calD\mapsto \homs{\mathcal{K}_t^k}{\calD}$. Since for each edge $e$ of the underlying graph $K_t^k$, the structure $\mathcal{K}_t^k$ has a separate binary single-element relation $R_e$, the input database $\calD$ (of the same signature as $\mathcal{K}_t^k$) also has a relation $R'_e$ whose elements can be interpreted as $e$-coloured edges. This means that the problem of computing $\calD\mapsto \homs{\mathcal{K}_t^k}{\calD}$ can equivalently be expressed as the following problem: Given a graph $G'$, that comes with an edge-colouring $\mathsf{col}: E(G') \mapsto E(K_t^k)$, count the homomorphisms $h$ from $K_t^k$ to $G'$ such that for each edge $e$ of $K_t^k$ we have $\mathsf{col}(h(e))=e$.

We now reduce the problem of determining whether a graph $G$ contains $K_t$ to this problem.    
Given input $G$, we construct an edge-coloured graph $G'$ as follows. Each edge of $G$ is replaced by $\binom{t}{2}$ paths of length $k$, and we colour the edges of the $i$-th of those paths with the $k$ edges of the $k$-stretch of the $i$-th edge $e_i$ of $K_t$. It is easy to see that $G$ contains a $t$-clique if and only if there is at least one homomorphism from $K_t^k$ to $G'$ that preserves the edge-colours. Moreover, the construction of $G'$ can clearly be done in linear time.
\end{proof}

Before proving Theorem~\ref{thm:main_meta}, we need to take a detour to examine the complexity of computing the reduced Euler Characteristic of a complex associated with a UCQ. To begin with, we introduce the notion of a ``power complex''.

\begin{definition}
Let $\mathcal{U}$ be a finite set, and let $\Omega \subseteq 2^\mathcal{U}$ be a set system that does not contain $\mathcal{U}$. The \emph{power complex} $\Delta_{\Omega,\mathcal{U}}$ is a complex with ground set $\Omega$ and faces 
\[\mathcal{I}=\left\{ S \subseteq \Omega ~:~ \bigcup_{A\in S} A \neq \mathcal{U} \right\}\,.\] 
\end{definition}

It is easy to check that $\Delta_{\Omega,\mathcal{U}}$ is a complex --- for each $x\in \Omega$, the set $\{x\}$ is in $\mathcal{I}$ since $\Omega$ does not contain $\mathcal{U}$. Although we typically encode complexes by the ground set $\Omega$ and the set of facets, in the case of a power complex 
we list the elements of $\mathcal{U}$ and $\Omega$.

In the first step, we show that each complex is isomorphic to a power complex.

\begin{lemma}\label{lem:I_have_the_power!}
Let $\Delta=(\Omega,\mathcal{I})$ be a non-trivial irreducible complex. It is possible to
compute, in polynomial time in $|\Delta|$, a set $\mathcal{U}$ and a set $\hat{\Omega}\subseteq 2^\mathcal{U}$ with $\mathcal{U}\notin \hat{\Omega}$ such that $\Delta$ is isomorphic to the power complex $\Delta_{\hat{\Omega},\mathcal{U}}$.
\end{lemma}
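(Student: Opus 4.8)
The plan is to represent each ground element $x\in\Omega$ by the set of facets of $\Delta$ that \emph{avoid} it. Concretely, let $\mathcal{U}$ be the set of facets of $\Delta$ (this is exactly the data with which $\Delta$ is encoded), and for each $x\in\Omega$ set
\[A_x := \{\, F\in\mathcal{U} : x\notin F \,\},\qquad b(x):=A_x,\qquad \hat{\Omega}:=\{\,A_x : x\in\Omega\,\}\subseteq 2^{\mathcal{U}}.\]
The intuition is that, since $\mathcal{I}$ is a downset, a set $S\subseteq\Omega$ is a face if and only if it is contained in some facet, i.e.\ if and only if there is a facet that no element of $S$ avoids; taking complements, $S$ is \emph{not} a face exactly when every facet is avoided by some element of $S$, i.e.\ when $\bigcup_{x\in S}A_x=\mathcal{U}$. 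Since the right-hand condition is precisely the defining condition for a face of the power complex $\Delta_{\hat\Omega,\mathcal{U}}$, the map $b$ should be the desired isomorphism, provided we check it is a bijection.

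I would then verify three things. First, $\mathcal{U}\notin\hat\Omega$: if $A_x=\mathcal{U}$ then $x$ lies in no facet, contradicting that the singleton $\{x\}$ is a face and hence contained in some facet. Second, $b$ is injective: if $A_x=A_y$ with $x\neq y$, then $x$ and $y$ lie in exactly the same facets, so by Lemma~\ref{lem:dom_equivalence} $x$ dominates $y$, contradicting irreducibility of $\Delta$; hence $b\colon\Omega\to\hat\Omega$ is a bijection (in particular $|\hat\Omega|=|\Omega|$). Third, for every $S\subseteq\Omega$ the chain of equivalences sketched above gives
\[S\in\mathcal{I}\ \Longleftrightarrow\ \exists F\in\mathcal{U}\colon S\subseteq F\ \Longleftrightarrow\ \bigcup_{x\in S}A_x\neq\mathcal{U}\ \Longleftrightarrow\ b(S)\ \text{is a face of}\ \Delta_{\hat\Omega,\mathcal{U}},\]
where the last step uses $\bigcup_{A\in b(S)}A=\bigcup_{x\in S}A_x$, which holds because $b$ is a bijection. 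Together with injectivity this shows $b$ is an isomorphism of complexes $\Delta\to\Delta_{\hat\Omega,\mathcal{U}}$.

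For the running time: $\mathcal{U}$ is read off the encoding of $\Delta$ directly; each $A_x$ is obtained by one scan over the list of facets; and $\hat\Omega$ is their (injective, size-$|\Omega|$) image. All of this is polynomial in $|\Delta|$. I do not expect a serious technical obstacle here; the only real content is choosing this encoding and observing that injectivity of the representation $x\mapsto A_x$ is exactly the irreducibility hypothesis, via the characterisation of domination in Lemma~\ref{lem:dom_equivalence}. The points to be careful about are the degenerate cases (one should check that $\mathcal{U}\neq\emptyset$, which follows since $\Omega$ is nonempty and every singleton is a face, so at least one facet exists), and keeping the ``non-face $\leftrightarrow$ union equals $\mathcal{U}$'' correspondence oriented correctly; non-triviality of $\Delta$ is not needed for the construction itself and only serves to guarantee the resulting power complex is again non-trivial.
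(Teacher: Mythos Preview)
Your proposal is correct and is essentially the same as the paper's proof: the paper also takes $\mathcal{U}$ to be (indexed by) the set of facets, defines $b(x)=\{E_i\mid x\notin F_i\}$, checks injectivity via irreducibility and $\mathcal{U}\notin\hat\Omega$ via the singleton-face axiom, and verifies the face correspondence by the same ``$S$ is a face iff $S$ lies in some facet'' argument. One tiny remark: the equality $\bigcup_{A\in b(S)}A=\bigcup_{x\in S}A_x$ holds tautologically (duplicate terms do not affect a union), so you do not actually need bijectivity of $b$ for that step.
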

\begin{proof}
Let $F_1,\dots,F_k$ be the facets of $\Delta$ so that the encoding of~$\Delta$ consists of $\Omega$ and $F_1,\ldots,F_k$, and $|\Delta|$ is the length of this encoding. 
For each $i\in [k]$, we introduce an element $E_i$ corresponding to $F_i$.
Let  $\mathcal{U}=\{E_1,\dots,E_k\}$. Next, we define a mapping $b:\Omega \to 2^\mathcal{U}$ as follows:
\[ b(x) \coloneqq \{E_i \mid x \notin F_i \} \,.\]
Observe that $b$ is injective since otherwise two elements of $\Omega$ are contained in the same facets of $\Delta$, which means that they dominate each other. Also, note that $b(x)=\mathcal{U}$ would imply that $x$ is not contained in any facet of $\Delta$ which gives a contradiction as $\{x\}\in \mathcal{I}$ for any $x \in \Omega$.
    
We choose $\hat{\Omega}$ as the range of $b$. Then, clearly, $b$ is a bijection from $\Omega$ to $\hat{\Omega}$. Furthermore, $\mathcal{U}$ and $\hat{\Omega}$ can be constructed in time polynomial in $|\Delta|$.

It remains to prove that $b$ is also an isomorphism from $\Delta$ to $\Delta_{\hat{\Omega},\mathcal{U}}$. 
To this end, we need to show that \[S\in \mathcal{I} \Leftrightarrow \bigcup_{A \in b(S)} A \neq \mathcal{U}\,,\]
where $b(S)=\{b(x)\mid x \in S\}$.

For the first direction, let $S\in\mathcal{I}$. W.l.o.g.\ we have $S\subseteq F_1$. Then, for all $x\in S$, $E_1\notin b(x)$. Hence $E_1 \notin \bigcup_{A \in b(S)} A$. For the other direction, suppose that $S\notin \mathcal{I}$. Then $S$ is not a subset of any facet. Consequently, for every $i\in[k]$, there exists $x_i \in S$ such that $x_i \notin F_i$. By definition of $b$, we thus have that, for every $i\in[k]$, there exists $x_i \in S$ such that $E_i \in b(x_i)$. Thus
    $\bigcup_{A \in b(S)} A = \mathcal{U}$.
\end{proof}

\begin{example}\label{example1}
Recall the complex $\Delta_1$ in Figure~\ref{fig:Example_complexes}. $\Delta_1$ has facets $F_1=\{2,3,4\}$, $F_2=\{1,2\}$, $F_3=\{1,3\}$, and $F_4=\{1,4\}$. Since $\Delta_1$ is irreducible and non-trivial, we can apply the construction in the previous lemma to construct an isomorphic power complex: We set $\mathcal{U}=\{E_1,E_2,E_3,E_4\}$, since $\Delta_1$ has $4$ facets. Moreover, the ground set $\hat{\Omega}$ of the power complex is the range of the function $b:\Omega \to 2^\mathcal{U}$ with $b(x) \coloneqq \{E_i \mid x \notin F_i \}$,
that is, $b(1)=\{E_1\}$ since $1\notin F_1$, and similarly, $b(2)=\{E_3,E_4\}$, $b(3)=\{E_2,E_4\}$, and $b(4)=\{E_2,E_3\}$. 
Then the facets $\hat{\mathcal{I}}$ of the power complex are $\{E_1,E_3,E_4\}$, $\{E_1,E_2,E_4\}$, $\{E_1,E_2,E_3\}$, and $\{E_2,E_3,E_4\}$.
Then $\Delta_1$ and the power complex $\Delta_{\hat{\Omega},\mathcal{U}}$ are isomorphic, where for instance $F_2=\{1,2\}$ corresponds to $b(1)\cup b(2) =\{E_1,E_3,E_4\}$.
When applied to $\Delta_2$, the same construction yields a power complex with $\mathcal{U}=\{E_1,E_2,E_3,E_4\}$ and ground set $\{\{E_3,E_4\},\{E_1,E_4\},\{E_2,E_4\},\{E_1,E_2,E_3\}\}$.
\end{example}

We now introduce the main technical result that we use to establish lower bounds for $\meta$; note that computability result in the last part of Lemma~\ref{lem:main_reduct_meta_help} will just be required for the construction of exemplary classes of UCQs in the appendix.

\begin{lemma}\label{lem:main_reduct_meta_help}
    For each positive integer $t$, there is a polynomial-time algorithm $\hat{\mathbb{A}}_t$ that, when given as input a non-trivial irreducible complex $\Delta=(\Omega,\mathcal{I})$ with $\Omega\notin \mathcal{I}$,  computes a union of quantifier-free conjunctive queries $\Psi=(\mathcal{B}_1,\dots,\mathcal{B}_\ell)$ satisfying the following constraints:
    \begin{enumerate}
        \item $\conjunctfull{\Psi} \cong \mathcal{K}_t^k$ for some $k\geq 1$.
        \item $c_\Psi(\conjunctfull{\Psi}) = - \hat{\chi}(\Delta)$. 
        \item For all relational structures $\mathcal{B}\ncong \conjunctfull{\Psi}$, $c_\Psi(\mathcal{B})\neq 0$ implies that $\mathcal{B}$ is acyclic.
        \item $\ell\leq|\Omega|$.
        \item For all $i\in [\ell]$ the conjunctive query $\mathcal{B}_i$ is acyclic and self-join-free, and has arity $2$.
    \end{enumerate}
    Moreover, the algorithm $\hat{\mathbb{A}}_t$ can be explicitly constructed from~$t$.
\end{lemma}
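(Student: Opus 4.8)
The plan is to start from the power-complex representation of $\Delta$ provided by Lemma~\ref{lem:I_have_the_power!}: compute $\mathcal{U}=\{E_1,\dots,E_k\}$ and $\hat\Omega=\{A_1,\dots,A_\ell\}\subseteq 2^{\mathcal{U}}$ with $\mathcal{U}\notin\hat\Omega$ and $\Delta\cong\Delta_{\hat\Omega,\mathcal{U}}$; note $\ell=|\hat\Omega|=|\Omega|$, giving constraint~(4). The combined query $\conjunctfull{\Psi}$ should be (isomorphic to) $\mathcal{K}_t^k$, the structure built from the $k$-stretch $K_t^k$ of the $t$-clique, with one single-tuple binary relation $R_e$ per edge $e$ of $K_t^k$ (here $k=|\mathcal{U}|$ is the number of facets of $\Delta$, and $t$ is the fixed parameter of the algorithm $\hat{\mathbb{A}}_t$). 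Concretely, I would fix an enumeration $e_1^{(j)},\dots,e_k^{(j)}$ of the $k$ edges along the stretched $j$-th edge of $K_t$, for each $j\in\binom{[t]}{2}$, and identify the relation symbols so that the $i$-th edge of \emph{every} stretched clique-edge carries symbol ``$E_i$''. Then define $\mathcal{B}_r$, for $r\in[\ell]$ corresponding to the set $A_r\subseteq\mathcal{U}$, to be the substructure of $\mathcal{K}_t^k$ obtained by deleting, for each $j$, every atom on a path-edge whose colour index lies in $A_r$ (equivalently: keep atom $e_i^{(j)}$ iff $E_i\notin A_r$), and add isolated variables so that $U(\mathcal{B}_r)=U(\mathcal{K}_t^k)$. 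Since $\mathcal{U}\notin\hat\Omega$, for every $r$ there is at least one colour index $E_i\notin A_r$, so each stretched clique-edge retains at least one atom — this is what makes each $\mathcal{B}_r$ a forest (acyclicity), and self-join-freeness and arity~$2$ are immediate from Observation~\ref{fac:obvious_facts_are_obvious}; this yields constraint~(5).

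Next I would verify constraint~(1): $\bigcup_{r\in[\ell]}\mathcal{B}_r$ contains atom $e_i^{(j)}$ iff there is some $A_r\in\hat\Omega$ with $E_i\notin A_r$; since $\mathcal{U}\notin\hat\Omega$ and $\hat\Omega$ is the ground set of a power complex (so it contains singletons by downward closure of faces is irrelevant here — rather, one just needs every $E_i$ to be missing from some $A_r$, which follows because $\{A_r\}$ is always a face so $A_r\neq\mathcal{U}$, hence the union of \emph{all} $A_r$ need not be everything, but intersection-complement-wise each $E_i$ lies outside at least the $A_r$'s not containing it — I should double check that $\hat\Omega$ genuinely covers all colours in this sense, arguing via irreducibility of $\Delta$ so no facet is redundant and each $E_i$ is ``used''). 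Granting that, $\conjunctfull{\Psi}=\bigcup_r\mathcal{B}_r$ recovers every atom of $\mathcal{K}_t^k$, so $\conjunctfull{\Psi}\cong\mathcal{K}_t^k$. For constraint~(2), by Definition~\ref{def:coefUCQ}, $c_\Psi(\conjunctfull{\Psi})=\sum_{J}(-1)^{|J|+1}$ over nonempty $J\subseteq[\ell]$ with $\conjuncts{\Psi}{J}\cong\mathcal{K}_t^k$. Now $\conjuncts{\Psi}{J}=\bigcup_{r\in J}\mathcal{B}_r$ equals $\mathcal{K}_t^k$ exactly when the deleted colour indices $\bigcup_{r\in J}A_r$ fail to cover $\mathcal{U}$ — wait, it is the reverse: $\bigcup_{r\in J}\mathcal{B}_r$ has atom $e_i^{(j)}$ iff $E_i\notin\bigcap_{r\in J}A_r$, so it equals the full structure iff $\bigcap_{r\in J}A_r=\emptyset$. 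Hmm, that does not match $\Delta_{\hat\Omega,\mathcal{U}}$'s faces (which are $J$ with $\bigcup_{r\in J}A_r\neq\mathcal{U}$). So instead I should take $\mathcal{B}_r$ to keep atom $e_i^{(j)}$ iff $E_i\in A_r$; then $\bigcup_{r\in J}\mathcal{B}_r$ has atom $e_i^{(j)}$ iff $E_i\in\bigcup_{r\in J}A_r$, which is the full structure iff $\bigcup_{r\in J}A_r=\mathcal{U}$, i.e. iff $J\notin\mathcal{I}(\Delta_{\hat\Omega,\mathcal{U}})$. Acyclicity of $\mathcal{B}_r$ then needs: for each stretched clique-edge, \emph{not all} $k$ path-atoms are kept, i.e. $A_r\neq\mathcal{U}$ — true since $\{A_r\}$ is a face. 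Good; this is the correct orientation.

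With the corrected construction: $\conjuncts{\Psi}{J}\cong\mathcal{K}_t^k$ iff $\bigcup_{r\in J}A_r=\mathcal{U}$ iff $J\notin\mathcal{I}$, where $\mathcal{I}$ is the face set of $\Delta_{\hat\Omega,\mathcal{U}}\cong\Delta$. Hence
\[
c_\Psi(\conjunctfull{\Psi})=\sum_{\substack{\emptyset\neq J\subseteq[\ell]\\ J\notin\mathcal{I}}}(-1)^{|J|+1}
=\sum_{\emptyset\neq J\subseteq[\ell]}(-1)^{|J|+1}-\sum_{\substack{\emptyset\neq J\subseteq[\ell]\\ J\in\mathcal{I}}}(-1)^{|J|+1}
=1+\sum_{\substack{J\subseteq[\ell]\\ J\in\mathcal{I}}}(-1)^{|J|},
\]
using $\sum_{\emptyset\neq J}(-1)^{|J|+1}=1$ and that $\emptyset\in\mathcal{I}$ contributes $(-1)^0=1$ which I must account for; matching this against $\hat\chi(\Delta)=-\sum_{S\in\mathcal{I}}(-1)^{|S|}$ should give exactly $c_\Psi(\conjunctfull{\Psi})=-\hat\chi(\Delta)$ — I will chase the $\pm1$'s carefully, which is the kind of routine bookkeeping that is easy to get off by one, so this is a place to be deliberate (it is essentially the identity in Figure~\ref{fig:Example_complexes}). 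Finally, constraint~(3): for $\mathcal{B}\ncong\conjunctfull{\Psi}$ with $c_\Psi(\mathcal{B})\neq0$, some $\conjuncts{\Psi}{J}\cong\mathcal{B}$ is a proper sub-union, so $\bigcup_{r\in J}A_r\subsetneq\mathcal{U}$, whence $\bigcup_{r\in J}\mathcal{B}_r$ still misses, on \emph{every} stretched clique-edge, the path-atom(s) indexed outside $\bigcup_{r\in J}A_r$ — so each of the $\binom{t}{2}$ stretched edges is broken, and $\conjuncts{\Psi}{J}$ is a disjoint union of paths and isolated vertices, hence a forest, hence acyclic. Polynomial running time and explicit constructibility from $t$ follow since $k$, $\ell$, the $A_r$, and $K_t^k$ are all polynomial-size and mechanically produced from $\Delta$ and $t$.

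The \textbf{main obstacle} I anticipate is not any single deep step but the combination of (a) getting the ``keep vs.\ delete'' orientation right so that $\{J:\conjuncts{\Psi}{J}\cong\conjunctfull{\Psi}\}$ is exactly the complement of the face set (I flipped it once above), and (b) the exact sign/constant bookkeeping in $c_\Psi(\conjunctfull{\Psi})=-\hat\chi(\Delta)$, including the treatment of $\emptyset$ and the normalisation $\sum_{\emptyset\neq J}(-1)^{|J|+1}=1$. A secondary subtlety is confirming that $\conjunctfull{\Psi}$ really is all of $\mathcal{K}_t^k$ (i.e.\ every colour $E_i$ appears in some $A_r$), which I plan to derive from the irreducibility of $\Delta$ via Lemma~\ref{lem:dom_equivalence}/Lemma~\ref{lem:I_have_the_power!} rather than assume.
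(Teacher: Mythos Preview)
Your corrected construction (keep atom $e_i^{(j)}$ in $\mathcal{B}_r$ iff $E_i\in A_r$) is exactly the paper's: it sets $\mathcal{B}_r=\bigcup_{i\in A_r}\mathcal{E}_i$, where $\mathcal{E}_i$ is the substructure of $\mathcal{K}_t^k$ consisting of the $i$-th edge on every stretched clique-edge, and verifies (1)--(5) by the same arguments you outline (including the sign computation via $\sum_{J\subseteq[\ell]}(-1)^{|J|}=0$ and the feedback-edge-set observation for (3)).

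Two small corrections to your plan. First, do \emph{not} identify relation symbols across different stretched clique-edges---keep one single-tuple relation $R_e$ per edge $e$ of $K_t^k$ as you first wrote, since otherwise each $\mathcal{B}_r$ would have $\binom{t}{2}$ tuples in relation $E_i$ and self-join-freeness in (5) would fail. Second, the fact that $\bigcup_{r\in[\ell]} A_r=\mathcal{U}$ (needed for (1)) follows directly from the hypothesis $\Omega\notin\mathcal{I}$: under the isomorphism $\Delta\cong\Delta_{\hat\Omega,\mathcal{U}}$ this says $\hat\Omega$ is not a face of the power complex, which by definition means $\bigcup_{A\in\hat\Omega}A=\mathcal{U}$. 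Irreducibility is used only inside Lemma~\ref{lem:I_have_the_power!} (to make $b$ injective), not for this step.
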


\begin{proof}
The algorithm $\hat{\mathbb{A}}_t$ applies Lemma~\ref{lem:I_have_the_power!} to obtain $\mathcal{U}$ and $\hat{\Omega}$ such that the power-complex $\Delta_{\hat{\Omega},\mathcal{U}}$ is isomorphic to $\Delta$. Let $k=|\mathcal{U}|$ and assume w.l.o.g.\ that $\mathcal{U}=[k]$. Let $\hat{\Omega}=\{A_1,\dots,A_\ell\}$. Let $e^j$ be the $j$-th edge of the $k$-stretch of $e$ in $K_t^k$ and recall that $\mathcal{K}_t^k$ contains the relations $R_{e^i}=\{e^i\}$ for each $e\in E(K_t)$ and $i\in[k]$. For each $i\in[k]$ we define $\mathcal{E}_i$ to be the substructure of $\mathcal{K}_t^k$ containing 
the universe $V(K_t^k)$ and
the relations $R_{e^i}$ (one relation for each $e\in E(K_t)$). The algorithm constructs $\Psi=(\mathcal{B}_1,\dots,\mathcal{B}_\ell)$ as follows:
For each $j\in[\ell]$, set $\mathcal{B}_j=\cup_{i\in A_j} \mathcal{E}_i$, that is, $\mathcal{B}_j$ is the substructure of $\mathcal{K}_t^k$ obtained by taking all relations in $\mathcal{E}_i$ for all $i\in A_j$.  The pseudocode for $\hat{\mathbb{A}}_t$ is as follows.
\begin{algorithm}
\caption{Pseudocode for $\hat{\mathbb{A}}_t$}\label{alg:cap}
\begin{algorithmic}[1]
\State \textbf{Input:} $\Delta=(\Omega,\mathcal{I})$
\State $(F_1,\dots,F_k) \gets $ facets of $\Delta$ 
    \For{$x\in \Omega$} \State $A_x \gets \{i \mid x \notin F_i \}$ \Comment{$\mathcal{U}=[k]$ and $\hat{\Omega}= \{A_x \mid x \in \Omega\} = \{A_1,\ldots,A_{\ell}\}$ yield the power   complex}
    \EndFor
\State $K_t \gets$ $t$-clique, $m \gets \binom{t}{2}$ \Comment{$e_i$ denotes the $i$-th edge of $K_t$}
\State $K^k_t \gets $ $k$-stretch of $K_t$ \Comment{$e_i^j$ denotes the $j$-th edge of the $k$-stretch of $e_i\in E(K_t)$}
\For{$i\in [m]$, $j\in[k]$} \State $R_{e_i^j} \gets \{e_i^j \}$
    \EndFor
    \For{$i\in[k]$}
\State $\mathcal{E}_i \gets (V(K_k^t),R_{e_1^i},\dots,R_{e_m^i})$
\EndFor
\For{$j\in [\ell]$}
\State $\mathcal{B}_j\gets \bigcup_{i\in A_j} \mathcal{E}_i$
\EndFor
\State \textbf{Output:} $\Psi=(\mathcal{B}_1,\dots,\mathcal{B}_\ell)$
\end{algorithmic}
\end{algorithm}

    We now prove the required properties of $\Psi$.
    \begin{enumerate}
\item Recall that $\Omega$ is not a facet. Since $\Delta$ and $\Delta_{\hat{\Omega},\mathcal{U}}$ are isomorphic, $\hat{\Omega}$ is not a facet. Thus $\bigcup_{i\in [\ell]} A_i = \mathcal{U}$ by the definition of power complexes. Since $\mathcal{B}_j$ contains all relations in $\mathcal{E}_i$ with $i\in A_j$, we have that $\conjunctfull{\Psi}= \mathcal{K}_t^k$ as desired.
\item  Recall from Definition~\ref{def:coefUCQ} that $\mathcal{I}(\mathcal{B},X)=\{J \subseteq [\ell] \mid (\mathcal{B},X) \sim \conjuncts{\Psi}{J}\}$. Since we are in the setting of quantifier-free queries ($X=U(\mathcal{B})$; we will drop the $X$ as before), equivalence $(\sim)$ is equivalent to isomorphism. Thus, we have that $c_\Psi(\conjunctfull{\Psi})$ is equal to
 \[  
 \sum_{\substack{J\subseteq [\ell]\\ \conjunctfull{\Psi} \cong \conjuncts{\Psi}{J}}} (-1)^{|J|+1} = 
 \sum_{\substack{J\subseteq [\ell]\\ \conjunctfull{\Psi} \not\cong \conjuncts{\Psi}{J}}} (-1)^{|J|} = 
 \sum_{\substack{J\subseteq [\ell]\\ \cup_{j\in J}A_j \neq \mathcal{U}}} (-1)^{|J|} = -\hat{\chi}(\Delta_{\hat{\Omega},\mathcal{U}}) = -\hat{\chi}(\Delta) \,,\]
 where the first equality follows from the formula $\sum_{J \subseteq [\ell]} (-1)^{|J|} =0$.

\item If $\mathcal{B}\ncong \conjunctfull{\Psi}$ and $c_\Psi(\mathcal{B})\neq 0$, then $\mathcal{B}$ is a substructure of $\mathcal{K}_t^k$ that is missing at least one of the $\mathcal{E}_i$. The claim holds since each $\mathcal{E}_i$ is a feedback edge set (i.e., the deletion of all tuples in $\mathcal{E}_i$ breaks every cycle), because deleting $\mathcal{E}_i$ corresponds to deleting one edge from each stretch. 
\item Follows immediately by construction.
\item The $\mathcal{B}_i$ are self-join-free and of arity $2$ since they are substructures of $\mathcal{K}_t^k$ (see Observation~\ref{fac:obvious_facts_are_obvious}). For acyclicity, we use the fact that for each $j$ we have $A_j\neq\mathcal{U}$ by definition of the power complex, which implies the claim by the same argument as 3.
    \end{enumerate}
    With all cases completed, the proof is concluded.
\end{proof}

To provide a concrete example,  we apply the construction in 
Lemma~\ref{lem:main_reduct_meta_help} to the complexes $\Delta_1$ and $\Delta_2$ in Figure~\ref{fig:Example_complexes}. For this example, we will choose $t=3$. Since both $\Delta_1$ and $\Delta_2$ have four facets, we will choose $k=4$. The structure $\mathcal{K}_t^k$ is depicted in Figure~\ref{fig:samplequeries}, together with six selected substructures corresponding to the $\mathcal{B}_j$ in the proof of the Lemma~\ref{lem:main_reduct_meta_help}.
\begin{figure}
\centering
\includegraphics[width=\textwidth]{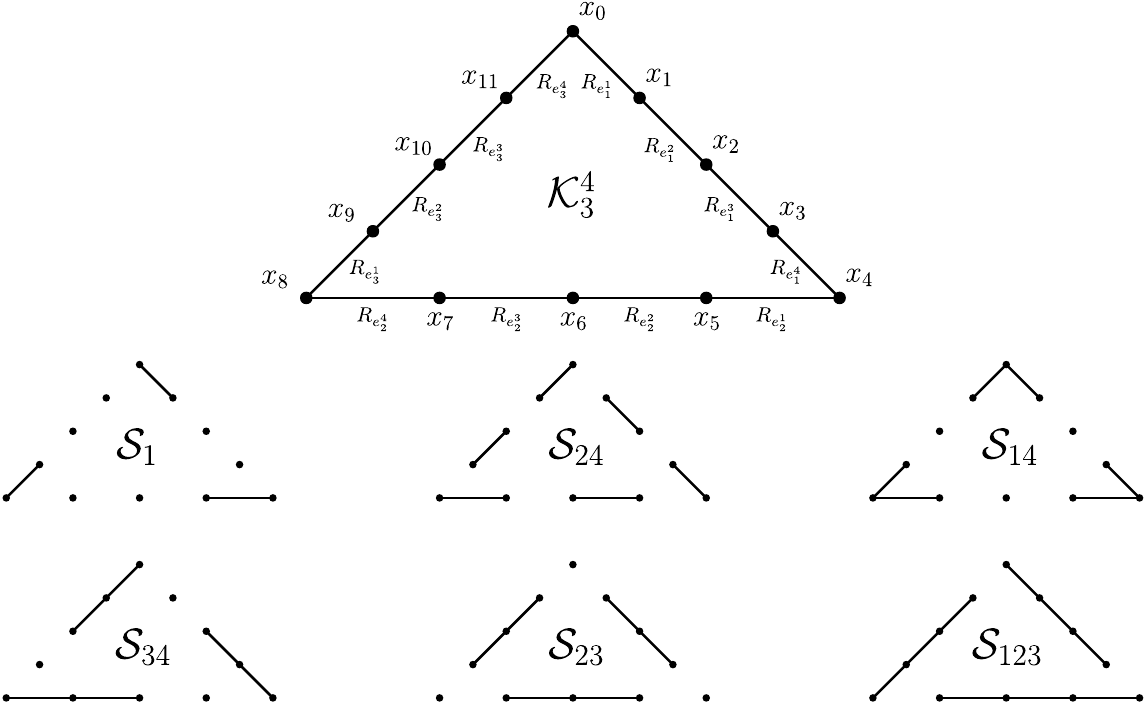}
\caption{\emph{(Top:)} The structure $\mathcal{K}_3^4$. \emph{(Bottom:)} Substructures $\mathcal{S}_A$ for some selected $A\subseteq [4]$. Observe that all of the $\mathcal{S}_A$ are \emph{acyclic}.}
    \label{fig:samplequeries}
\end{figure}

Let us start by applying $\hat{\mathbb{A}}_3$ to $\Delta_1$. Recall from Example~\ref{example1} that the power complex of $\Delta_1$ has ground set $\{A_1,A_2,A_3,A_4\}$ with $A_1=\{E_1\}$, $A_2=\{E_3,E_4\}$, $A_3=\{E_2,E_4\}$, and $A_4=\{E_2,E_3\}$. Thus $\hat{\mathbb{A}}_3$ returns the UCQ $\Psi_1=(\mathcal{S}_1,\mathcal{S}_{34}, \mathcal{S}_{24}, \mathcal{S}_{23})$. 
Similarly, applying $\hat{\mathbb{A}}_3$ to $\Delta_2$ yields the UCQ $\Psi_2=(\mathcal{S}_{24},\mathcal{S}_{34}, \mathcal{S}_{14}, \mathcal{S}_{123})$.

For the purpose of illustration, let us write $\Psi_1$ and $\Psi_2$ as formulas. To this end, given a non-empty subset $A$ of $\{1,2,3,4\}$, define the conjunctive query $\varphi_A$ as follows:
\[ \varphi_A = \bigwedge_{a\in A}  R_{e_1^a}(x_{a-1},x_{a}) \wedge R_{e_2^a}(x_{4+a-1},x_{4+a}) \wedge R_{e_3^a}(x_{8+a-1},x_{8+a}) \,,\]
where indices are taken modulo $12$. Observe that the $\mathcal{S}_A$ depicted in Figure~\ref{fig:samplequeries} are the structures associated to $\varphi_A$ for $A \in \{\{1\},\{2,4\},\{1,4\},\{3,4\},\{2,3\},\{1,2,3\}\}$. Then
\begin{align*}
    \Psi_1(x_0,\dots,x_{11}) &= \varphi_{1} \vee \varphi_{34} \vee \varphi_{24} \vee \varphi_{23}\,, \text{ and}\\
    \Psi_2(x_0,\dots,x_{11}) &= \varphi_{24} \vee \varphi_{34} \vee \varphi_{14} \vee \varphi_{123}
\end{align*}
Note that $\conjunctfull{\Psi_1}=\conjunctfull{\Psi_2}=\mathcal{K}^4_3$. Now recall that $\hat{\chi}(\Delta_1)\neq 0$ and $\hat{\chi(\Delta_2)}=0$. Thus, by Item 2.\ of Lemma~\ref{lem:main_reduct_meta_help}, it holds that $c_{\Psi_1}(\mathcal{K}^4_3) \neq 0$, which means that there is an acyclic CQ in the CQ expansion of $\Psi_1$. On the contrary, by Item 3.\ of Lemma~\ref{lem:main_reduct_meta_help}, for all $(\mathcal{A},X)$, we have that $c_{\Psi_2}(\mathcal{A},X)\neq 0$ implies that $(\mathcal{A},X)$ is acyclic. So, as a conclusion of our example, using complexity monotonicity (Corollary~\ref{cor:UCQ_monotone}) and Theorem~\ref{thm:CQ_dicho}, we obtain as an immediate consequence:

\begin{corollary}\label{cor:examples}
While it is not possible to count answers to $\Psi_1$ in linear time (in the input database), unless the Triangle Conjecture fails, for $\Psi_2$ such a linear-time algorithm exists (even though $\conjunctfull{\Psi_1}=\conjunctfull{\Psi_2}$).
\end{corollary}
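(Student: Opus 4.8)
The plan is to combine the structural facts about $\Psi_1$ and $\Psi_2$ established in Lemma~\ref{lem:main_reduct_meta_help} with complexity monotonicity (Corollary~\ref{cor:UCQ_monotone}) and the fine-grained CQ dichotomy (Theorem~\ref{thm:CQ_dicho}). First I would record two elementary observations about the common combined query $\conjunctfull{\Psi_1}=\conjunctfull{\Psi_2}=\mathcal{K}_3^4$: it has arity $2$, and its Gaifman graph is the $12$-cycle obtained by $4$-stretching the triangle $K_3$, so $\mathcal{K}_3^4$ is \emph{cyclic}. Since we are in the quantifier-free setting, every conjunctive query that appears — in particular $\mathcal{K}_3^4$ and every term of the CQ expansion of $\Psi_1$ and $\Psi_2$ via Lemma~\ref{lem:UCQ_to_hombasis} — is its own $\ccore$, is $\cminimal$, and $\cequiv$ coincides with isomorphism; moreover $\ans{\mathcal{A}}{\calD}=\homs{\mathcal{A}}{\calD}$ when all variables are free. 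Hence Corollary~\ref{cor:UCQ_monotone} applies term by term to the CQ expansions of $\Psi_1$ and $\Psi_2$.

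For the hardness of $\Psi_1$: by Item~2 of Lemma~\ref{lem:main_reduct_meta_help}, $c_{\Psi_1}(\mathcal{K}_3^4)=-\hat{\chi}(\Delta_1)=2\neq 0$, so the cyclic, arity-$2$ query $\mathcal{K}_3^4$ survives in the CQ expansion of $\Psi_1$ with a nonzero coefficient. By Theorem~\ref{thm:CQ_dicho}, the function $\calD\mapsto\homs{\mathcal{K}_3^4}{\calD}$ cannot be computed in time $O(|\calD|)$ unless the Triangle Conjecture fails. Applying the forward direction of Corollary~\ref{cor:UCQ_monotone} with $d=1$ in contrapositive form then shows that $\calD\mapsto\ans{\Psi_1}{\calD}$ is not computable in linear time unless the Triangle Conjecture fails.

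For the tractability of $\Psi_2$: by Item~2 of Lemma~\ref{lem:main_reduct_meta_help}, $c_{\Psi_2}(\mathcal{K}_3^4)=-\hat{\chi}(\Delta_2)=0$, so the only potentially cyclic term $\mathcal{K}_3^4$ does \emph{not} survive in the CQ expansion of $\Psi_2$. By Item~3 of the same lemma, every structure $\mathcal{B}\ncong\mathcal{K}_3^4$ with $c_{\Psi_2}(\mathcal{B})\neq 0$ is acyclic. Hence every term occurring with nonzero coefficient in the CQ expansion of $\Psi_2$ is acyclic, and by Theorem~\ref{thm:CQ_dicho} each such term has a linear-time computable answer count; the backward direction of Corollary~\ref{cor:UCQ_monotone} with $d=1$ then produces a linear-time algorithm for $\calD\mapsto\ans{\Psi_2}{\calD}$. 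There is no real obstacle in this argument; the only point requiring attention is to keep track of which term of the CQ expansion is the cyclic one and to observe that $\hat{\chi}(\Delta_2)=0$ is exactly what annihilates its coefficient, whereas $\hat{\chi}(\Delta_1)\neq 0$ keeps it alive — the rest is a direct invocation of the stated lemmas.
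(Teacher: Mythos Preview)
Your proposal is correct and follows essentially the same approach as the paper's own proof: both combine Corollary~\ref{cor:UCQ_monotone} with Theorem~\ref{thm:CQ_dicho}, use that quantifier-free queries are $\cminimal$, and invoke Items~2 and~3 of Lemma~\ref{lem:main_reduct_meta_help} to show that the cyclic term $\mathcal{K}_3^4$ survives in the CQ expansion of $\Psi_1$ (since $\hat\chi(\Delta_1)\neq 0$) but vanishes for $\Psi_2$ (since $\hat\chi(\Delta_2)=0$), leaving only acyclic terms. Your write-up is a bit more explicit about the numerical values and the shape of the Gaifman graph, but the argument is the same.
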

\begin{proof}
By Corollary~\ref{cor:UCQ_monotone}, it is possible to count answers to a UCQ $\Psi$ in linear time if and only if for each \cminimal~conjunctive query $(\mathcal{A},X)$ with $c_\Psi(\mathcal{A},X)\neq 0$, it is possible to count answers to $(\mathcal{A},X)$ in linear time. Since $\Psi_1$ and $\Psi_2$ are quantifier-free, all conjunctive queries $\conjuncts{\Psi_1}{J}$ and $\conjuncts{\Psi_2}{J}$ are also quantifier-free and thus $\cminimal$. For all $(\mathcal{A},X)$, $c_{\Psi_2}(\mathcal{A},X)\neq 0$ implies that $(\mathcal{A},X)$ is acyclic, so it follows  from Theorem~\ref{thm:CQ_dicho} that it is possible to count answers to $\Psi_2$ in linear time.
Moreover, given that $\mathcal{K}^4_3$ is not acyclic, Theorem~\ref{thm:CQ_dicho} also implies that counting answers to $\mathcal{K}^4_3$ cannot be done in linear time, unless the Triangle Conjecture fails. Since $c_{\Psi_1}(\mathcal{K}^4_3) \neq 0$, 
  counting answers to $\Psi_1$ cannot be done in linear time, unless the Triangle Conjecture fails.\end{proof}

We will now conclude the hardness proof for $\meta$.
\begin{lemma}\label{lem:main_reduct_meta}
    For each positive integer $t$, there is a polynomial-time algorithm $\mathbb{A}_t$ that, when given as input a complex $\Delta=(\Omega,\mathcal{I})$, either computes $\hat{\chi}(\Delta)$, or computes a union of quantifier-free conjunctive queries $\Psi=(\mathcal{B}_1,\dots,\mathcal{B}_\ell)$ satisfying the following constraints:
    \begin{enumerate}
        \item $\conjunctfull{\Psi} \cong \mathcal{K}_t^k$ for some $k\geq 1$.
        \item $c_\Psi(\conjunctfull{\Psi}) = - \hat{\chi}(\Delta)$. 
        \item For all relational structures $\mathcal{B}\ncong \conjunctfull{\Psi}$, $c_\Psi(\mathcal{B})\neq 0$ implies that $\mathcal{B}$ is acyclic.
        \item $\ell\leq|\Omega|$.
        \item For all $i\in [\ell]$ the conjunctive query $\mathcal{B}_i$ is acyclic and self-join-free, and has arity $2$.
    \end{enumerate}
\end{lemma}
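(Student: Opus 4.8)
The plan is to pre-process $\Delta$ into a non-trivial irreducible complex with the same reduced Euler characteristic and then feed it to the algorithm $\hat{\mathbb{A}}_t$ of Lemma~\ref{lem:main_reduct_meta_help}. On input $\Delta=(\Omega,\mathcal{I})$, given by $\Omega$ together with its list of facets, $\mathbb{A}_t$ would repeatedly search for distinct $x,y\in\Omega$ with $x$ dominating $y$; by Lemma~\ref{lem:dom_equivalence} this is exactly the condition that every facet containing $y$ also contains $x$, so it can be tested for all ordered pairs in time polynomial in $|\Delta|$. Whenever such a pair is found, $\mathbb{A}_t$ replaces $\Delta$ by $\Delta\setminus y$; the facets of $\Delta\setminus y$ are precisely the inclusion-maximal sets among $\{F : F\text{ a facet of }\Delta,\ y\notin F\}\cup\{F\setminus\{y\} : F\text{ a facet of }\Delta,\ y\in F\}$, which can be computed in polynomial time. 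By Lemma~\ref{lem:no_dom} each such step leaves $\hat{\chi}$ unchanged, and since each step deletes one element of the ground set, the loop terminates after at most $|\Omega|$ rounds with an irreducible complex $\Delta'=(\Omega',\mathcal{I}')$ satisfying $\hat{\chi}(\Delta')=\hat{\chi}(\Delta)$ and $|\Omega'|\le|\Omega|$; note that the ground set never becomes empty, since an element is deleted only when a distinct dominating element survives.

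After the loop I would split into two cases. If $\Delta'$ is trivial, its only faces are $\emptyset$ and a single singleton, so $\hat{\chi}(\Delta')=-((-1)^0+(-1)^1)=0$, and $\mathbb{A}_t$ outputs $0$, which equals $\hat{\chi}(\Delta)$. Otherwise $\Delta'$ is non-trivial and irreducible, and I claim $\Omega'\notin\mathcal{I}'$: if $\Omega'\in\mathcal{I}'$ then $\mathcal{I}'$, being a downset, equals $2^{\Omega'}$, so $\Delta'$ is the full simplex; as $\Delta'$ is non-trivial we have $|\Omega'|\ge 2$, but then any two distinct elements of $\Omega'$ dominate each other (they both lie in the unique facet $\Omega'$), contradicting irreducibility. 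Hence $\Delta'$ satisfies all hypotheses of Lemma~\ref{lem:main_reduct_meta_help}, and $\mathbb{A}_t$ returns $\Psi:=\hat{\mathbb{A}}_t(\Delta')$. Properties~1, 3 and~5 are inherited verbatim from Lemma~\ref{lem:main_reduct_meta_help}; property~2 follows from $c_\Psi(\conjunctfull{\Psi})=-\hat{\chi}(\Delta')=-\hat{\chi}(\Delta)$; and property~4 follows from $\ell\le|\Omega'|\le|\Omega|$. In particular, whenever $\hat{\chi}(\Delta)\ne 0$ the reduced complex $\Delta'$ cannot be trivial, so in that case $\mathbb{A}_t$ necessarily outputs a UCQ $\Psi$, as intended for the downstream hardness reduction.

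For the running time, detecting a dominated element, recomputing the facet list, and running $\hat{\mathbb{A}}_t$ are all polynomial, and there are at most $|\Omega|$ rounds, so $\mathbb{A}_t$ runs in time polynomial in $|\Delta|$ (with the implied constant depending on $t$). The conceptual work of the lemma is already contained in Lemma~\ref{lem:main_reduct_meta_help}; the only genuinely new ingredient here is this front-end reduction to the irreducible case. I expect the main (minor) obstacle to be the bookkeeping — keeping the facet representation correct and of polynomial size after each deletion, and cleanly ruling out the degenerate sub-cases (the trivial complex and the full simplex) — rather than anything deep.
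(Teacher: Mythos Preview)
Your proposal is correct and follows essentially the same route as the paper: reduce to an irreducible complex via repeated deletion of dominated elements (using Lemmas~\ref{lem:dom_equivalence} and~\ref{lem:no_dom}), handle the trivial case by outputting $0$, and otherwise invoke $\hat{\mathbb{A}}_t$ from Lemma~\ref{lem:main_reduct_meta_help}. Your derivation that $\Omega'\notin\mathcal{I}'$ is automatic once $\Delta'$ is irreducible and non-trivial is a slight tightening of the paper's argument, which treats ``$\Omega$ is a facet'' as a separate degenerate case and outputs $0$ there directly.
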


 \begin{proof}
Given $\Delta=(\Omega,\mathcal{I})$, we can successively apply Lemma~\ref{lem:no_dom} without changing the reduced Euler characteristic, until 
the resulting simplicial complex is irreducible.  This can be done in polynomial time:  By Lemma~\ref{lem:dom_equivalence} it suffices to check whether there are $y\neq x \in \Omega$ such that each facet containing $y$ also contains $x$. If no such pair exists, we are done. Otherwise we delete $y$ from $\Omega$ and from all facets and continue recursively. Clearly, the number of recursive steps is bounded by $|\Omega|$ so the run time is at most a polynomial in $|\Delta|$.

If this process makes the complex trivial, we output $0$ (i.e., the reduced Euler Characteristic of the trivial complex). We can furthermore assume that $\Omega$ is not a facet, i.e., that $\Omega \notin \mathcal{I}$, since in this case every subset of $\Omega$ is a face and the reduced Euler Characteristic is $0$. We can thus assume that $\Delta$ is non-trivial and irreducible, and that $\Omega$ is not a facet. Therefore, we can use the algorithm $\hat{\mathbb{A}}_t$ from Lemma~\ref{lem:main_reduct_meta_help}. This concludes the proof.
\end{proof}

We are now able to prove our lower bounds for $\meta$. 

\begin{lemma}\label{lem:meta_hard1}
If the Triangle Conjecture is true then $\meta$ is $\mathrm{NP}$-hard. If the Triangle Conjecture and ETH are both true  then $\meta$ cannot be solved in time $2^{o(\ell)}$ where $\ell$ is the number of conjunctive queries in its input.
Both results remain true
even if the input to $\meta$ is restricted to be over a binary signature.
\end{lemma}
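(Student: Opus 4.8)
I will reduce from the problem of deciding, given an abstract simplicial complex $\Delta$, whether its reduced Euler characteristic $\hat\chi(\Delta)$ is zero; this problem is $\mathrm{NP}$-hard by Roune and S{\'a}enz-de-Cabez{\'o}n~\cite{RouneS13}. Fix $t=3$, so that $\mathcal{K}_3^k$ is, for every $k\geq 1$, a cycle on $3k$ vertices (each edge carrying its own binary single-element relation); in particular $\mathcal{K}_3^k$ is cyclic and of arity $2$. Let $\mathbb{A}_3$ be the polynomial-time algorithm of Lemma~\ref{lem:main_reduct_meta}. Given $\Delta=(\Omega,\mathcal{I})$, the reduction runs $\mathbb{A}_3$. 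If $\mathbb{A}_3$ outputs $\hat\chi(\Delta)$ directly (which, inspecting the proof of Lemma~\ref{lem:main_reduct_meta}, only happens when $\hat\chi(\Delta)=0$), the reduction outputs a fixed single-atom acyclic self-join-free CQ over a binary signature, which is a yes-instance of $\meta$ by Theorem~\ref{thm:CQ_dicho}. Otherwise $\mathbb{A}_3$ outputs a union $\Psi=(\mathcal{B}_1,\dots,\mathcal{B}_\ell)$ of acyclic, self-join-free, quantifier-free CQs of arity $2$ satisfying properties 1--5 of Lemma~\ref{lem:main_reduct_meta}, and this $\Psi$ is the output.

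Next I would show that, assuming the Triangle Conjecture, $\Psi\in\meta$ if and only if $\hat\chi(\Delta)=0$. Since $\Psi$ is quantifier-free, every structure in its CQ expansion is quantifier-free, hence $\cminimal$, so Corollary~\ref{cor:UCQ_monotone} with $d=1$ gives that the answers to $\Psi$ are countable in linear time iff for every $\mathcal{A}$ with $c_\Psi(\mathcal{A})\neq 0$ the answers to $\mathcal{A}$ are countable in linear time. Under the Triangle Conjecture, Theorem~\ref{thm:CQ_dicho} (applicable since all these structures have arity $2$, by properties 1 and 5) says this holds iff every such $\mathcal{A}$ is acyclic. By property 3, every $\mathcal{A}\ncong\conjunctfull{\Psi}$ with $c_\Psi(\mathcal{A})\neq 0$ is already acyclic, whereas $\conjunctfull{\Psi}\cong\mathcal{K}_3^k$ is cyclic; so the only possible obstruction to linear-time counting is $c_\Psi(\conjunctfull{\Psi})\neq 0$, and by property 2 we have $c_\Psi(\conjunctfull{\Psi})=-\hat\chi(\Delta)$. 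Thus $\Psi\in\meta$ exactly when $\hat\chi(\Delta)=0$, establishing $\mathrm{NP}$-hardness; and since the produced instances are always unions of acyclic self-join-free CQs over a binary signature, the hardness survives this restriction.

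For the conditional $2^{o(\ell)}$ lower bound I would track the parameter $\ell$. Combining the parsimonious reduction of~\cite{RouneS13} with the Sparsification Lemma~\cite{ImpagliazzoPZ01}, for every $\varepsilon>0$ an $n$-variable $3$-$\textsc{SAT}$ formula can, in time $2^{\varepsilon n}\cdot n^{O(1)}$, be transformed into $2^{\varepsilon n}$ complexes, each with ground set of size $O_\varepsilon(n)$, such that the formula is satisfiable iff at least one of these complexes has nonzero reduced Euler characteristic. Feeding each of them through the reduction above produces $2^{\varepsilon n}$ instances of $\meta$, each with $\ell\leq|\Omega|=O_\varepsilon(n)$ by property 4, at least one of which is a no-instance iff the formula is satisfiable. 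A hypothetical algorithm for $\meta$ running in time $2^{o(\ell)}$ would therefore decide $3$-$\textsc{SAT}$ in time $2^{\varepsilon n}\cdot 2^{o(n)}$ for every $\varepsilon>0$; letting $\varepsilon\to 0$ contradicts ETH.

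The genuine work here is essentially already discharged by Lemma~\ref{lem:main_reduct_meta}. The two points that need care in the write-up are: (i) noting that the yes-side of $\meta$ corresponds to the \emph{vanishing} of $\hat\chi$ (so the reduction is from the vanishing problem of~\cite{RouneS13}, and the Triangle Conjecture is needed precisely to certify that the no-instances really are no-instances via the hardness direction of Theorem~\ref{thm:CQ_dicho}); and (ii) the parameter bookkeeping in the last paragraph — using sparsification to guarantee that the ground set, and hence $\ell$, is \emph{linear} in the number of variables of the $3$-$\textsc{SAT}$ formula, which is what makes a $2^{o(\ell)}$ algorithm collide with ETH rather than with a weaker assumption.
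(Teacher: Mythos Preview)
Your proposal is correct and follows essentially the same route as the paper: reduce from the vanishing of the reduced Euler characteristic via Lemma~\ref{lem:main_reduct_meta} with $t=3$, invoke Corollary~\ref{cor:UCQ_monotone} together with Theorem~\ref{thm:CQ_dicho} to tie linear-time solvability to $c_\Psi(\conjunctfull{\Psi})=-\hat\chi(\Delta)$, and use the Sparsification Lemma plus the linear ground-set bound (property~4) for the ETH part. The only cosmetic differences are that the paper phrases the reduction as starting from $3$-\textsc{SAT} explicitly (rather than from the $\hat\chi=0$ problem) and handles the degenerate case where $\mathbb{A}_3$ outputs $\hat\chi(\Delta)$ by simply checking the value, whereas you observe (by inspecting the proof of Lemma~\ref{lem:main_reduct_meta}) that this value is always $0$ and hard-wire a fixed yes-instance.
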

\begin{proof}

For the first result, we  assume the Triangle Conjecture and show that $\meta$ is NP-hard.
The input to $\meta$ is 
a formula $\Psi'$ which is
a union of quantifier-free, self-join-free, and acyclic conjunctive queries. The goal is to decide whether 
counting answers of~$\Psi'$ (in an input database) can be done in linear time.

We reduce from $3$-SAT.
Let $F$ be a $3$-SAT formula.
The first step of our reduction is to apply a reduction from~\cite{RouneS13}. 
Concretely, \cite{RouneS13} gives a reduction   that, given a $3$-SAT formula $F$ with $n$ variables and $m$ clauses, outputs in polynomial time a complex $\Delta$ such that $F$ is satisfiable if and only if $\hat{\chi}(\Delta)\neq 0$. Moreover, the ground set of $\Delta$ has size $O(n+m)$.

Let $t=3$ and let $\mathbb{A}_t$ be the algorithm from Lemma~\ref{lem:main_reduct_meta}. Consider running $\mathbb{A}_t$ with input~$\Delta$.
If $\mathbb{A}_t$ outputs $\hat{\chi}(\Delta)$ then we can check immediately whether $\hat{\chi}(\Delta)=0$, which determines whether or not $F$ is satisfiable.
Otherwise, $\mathbb{A}_t$ outputs a formula~$\Psi$ which is a
union of self-join-free, quantifier-free, and  acyclic conjunctive queries of arity $2$, and further has the property that $c_{\Psi}(\conjunctfull{\Psi}) = - \hat{\chi}(\Delta)$. 
Since $\mathbb{A}_t$ is a polynomial-time algorithm, the number of conjunctive queries $\ell$ in $\Psi$ is at most a polynomial in $n+m$.

We wish to show that 
determining whether counting answers of~$\Psi$ can be done in linear time
would reveal whether or not 
$c_{\Psi} (\conjunctfull{\Psi})=0$ (which would in turn reveal whether $F$ is satisfiable).

By Corollary~\ref{cor:UCQ_monotone}, it is possible to compute $\calD\mapsto \ans{\Psi}{\calD}$ in linear time if and only if, for each relational structure $\mathcal{A}$ with $c_\Psi(\mathcal{A})\neq 0$, the function $\calD\mapsto \homs{\mathcal{A}}{\calD}$ can be computed in linear time.

So the intermediate problem is to check whether, for each relational structure   $\mathcal{A}$ with $c_\Psi(\mathcal{A})\neq 0$, the function $\calD\mapsto\homs{\mathcal{A}}{\calD}$ can be computed in linear time.
We wish to show that solving  the intermediate problem (in polynomial time) would enable us to determine whether or not
$c_{\Psi} (\conjunctfull{\Psi})=0$ (also in polynomial time).

Item~1 of Lemma~\ref{lem:main_reduct_meta} implies that
there is a positive integer~$k$ such that
$\conjunctfull{\Psi}\cong \mathcal{K}_3^k$.
Thus, $\conjunctfull{\Psi}$ is not acyclic.
However,
Item~3 of Lemma~\ref{lem:main_reduct_meta} implies that for every relational structure 
$\mathcal{A}\ncong \conjunctfull{\Psi}$
in the intermediate problem,
the structure $\mathcal{A}$ is acyclic. 
Theorem~\ref{thm:CQ_dicho} shows (assuming the Triangle Conjecture) that, for each $\mathcal{A}$,
$\calD\mapsto \homs{\mathcal{A}}{\calD}$ can be computed in linear time if and only if $\mathcal{A}$ is acyclic.
We conclude that the answer to the intermediate problem is yes iff 
$c_\Psi(\conjunctfull{\Psi})= 0$, completing the proof that $\meta$ is NP-hard.

To obtain the second result, we 
assume both the Triangle Conjecture and ETH. In this case, we
apply the 
Sparsification Lemma~\cite{ImpagliazzoPZ01} to the initial $3$-SAT formula $F$ before invoking the reduction.
By the Sparsification Lemma, it is possible in time $2^{o(n)}$ to construct $2^{o(n)}$ formulas $F_1,F_2,\dots$ such that $F$ is satisfiable if and only if at least one of the $F_i$ is satisfiable. Additionally, each $F_i$ has $O(n)$ clauses. As before, for each such $F_i$ we obtain in polynomial time a corresponding complex~$\Delta_i$, whose ground set has size $O(n)$. For each $\Delta_i$,  the algorithm from Lemma~\ref{lem:main_reduct_meta} either outputs its reduced Euler characteristic 
$\hat{\chi}(\Delta_i)$ 
or outputs a  UCQ $\Psi_i$
which has the property that 
that $c_{\Psi}(\conjunctfull{\Psi_i}) = - \hat{\chi}(\Delta_i)$. 

If, for any~$i$, the algorithm outputs a value $\hat{\chi}(\Delta_i)\neq 0$
then $F_i$ is satisfiable, so $F$ is satisfiable.
Let $I$ be the set of indices~$i$ such that the algorithm outputs a UCQ $\Psi_i$.
The argument from the first result shows that $F$ is satisfiable if and only if there is an $i\in I$ such that counting answers to $\Psi_i$ can be done in linear time.

We will argue that a $2^{o(\ell)}$ algorithm for $\meta$ (where $\ell$ is the number of CQs in its input) would make it possible to determine in  $2^{o(n)}$ time whether 
there is an $i\in I$ such that counting answers to $\Psi_i$ can be done in linear time.
This means that a $2^{o(\ell)}$ algorithm for $\meta$ would make it possible to determine in  $2^{o(n)}$ time whether $F$ is satisfiable, contrary to ETH.

To do this, we just need to show that the number of CQs in $\Psi_i$, which we denote $\ell(\Psi_i)$, is $O(n)$.
This follows since the ground set of $\Delta_i$ has size $O(n)$ and by Item~4 of Lemma~\ref{lem:main_reduct_meta},   $\ell(\Psi_i)$ is at most the size of the ground set.
Since the size of $I$ is $2^{o(n)}$ 
the running time for determining whether $F$ is satisfiable is $2^{o(n)}$ (for the sparsification)
plus $|I| \,\mathrm{poly}(n) = 2^{o(n)}$ time (for computing the complexes $\Delta_i$) plus $ \sum_{i\in I} 2^{o(\ell(\Psi_i))} = 2^{o(n)}$ for the calls to $\meta$, contradicting ETH, as desired.
\end{proof}

The proof of the following lemma is analogous, with the only exception that we do not invoke Theorem~\ref{thm:CQ_dicho} but apply Lemma~\ref{lem:non_inif_ETH_clique} with $d=1$ to obtain a $t$ such that for each $k$ the function $\calD\mapsto\homs{\mathcal{K}_t^k}{\calD}$ cannot be evaluated in linear time.

\begin{lemma}\label{lem:meta_hard3}
    If non-uniform ETH and the Triangle Conjecture are true then $\meta$ is $\mathrm{NP}$-hard and, furthermore, 
    \[\meta \notin \bigcap_{\varepsilon>0} \mathrm{DTime}(2^{\varepsilon \cdot \ell})\,.\]
    This remains true even if the input to $\meta$ is restricted to be over a binary signature.
\end{lemma}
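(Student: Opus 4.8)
The plan is to mirror the proof of Lemma~\ref{lem:meta_hard1} essentially verbatim, replacing the one step where that proof uses Theorem~\ref{thm:CQ_dicho} (conditioned on the Triangle Conjecture) to certify that the cyclic binary query $\mathcal{K}_3^k$ cannot be counted in linear time. Instead, I would first apply Lemma~\ref{lem:non_inif_ETH_clique} with $d=1$: under non-uniform ETH this produces a \emph{single} integer $t$, fixed once and for all and independent of any later input, such that for every $k\geq 1$ the function $\calD\mapsto\homs{\mathcal{K}_t^k}{\calD}$ cannot be computed in linear time. I then fix this $t$ and use the polynomial-time algorithm $\mathbb{A}_t$ from Lemma~\ref{lem:main_reduct_meta}. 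Note that $\mathcal{K}_t^k$ is still of arity $2$ (Observation~\ref{fac:obvious_facts_are_obvious}) and cyclic for this $t\geq 3$, so every structural guarantee of $\mathbb{A}_t$ remains intact; only the \emph{provenance} of the lower bound on the distinguished term changes from the Triangle Conjecture to non-uniform ETH.

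For the NP-hardness I would reduce from $3$-SAT. Given a formula $F$ on $n$ variables with $m$ clauses, apply the parsimonious reduction of Roune and S\'{a}enz-de-Cabez\'{o}n~\cite{RouneS13} to obtain in polynomial time a complex $\Delta$ whose ground set has size $O(n+m)$ and with $\hat{\chi}(\Delta)\neq 0$ iff $F$ is satisfiable. Run $\mathbb{A}_t$ on $\Delta$; if it returns $\hat{\chi}(\Delta)$ directly, we decide satisfiability immediately. Otherwise it returns a union $\Psi=(\mathcal{B}_1,\dots,\mathcal{B}_\ell)$ of acyclic, self-join-free, binary CQs with $\conjunctfull{\Psi}\cong\mathcal{K}_t^k$ for some $k$, with $c_\Psi(\conjunctfull{\Psi})=-\hat{\chi}(\Delta)$, and with $c_\Psi(\mathcal{B})\neq 0\Rightarrow\mathcal{B}$ acyclic for every $\mathcal{B}\ncong\conjunctfull{\Psi}$. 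By Corollary~\ref{cor:UCQ_monotone}, $\calD\mapsto\ans{\Psi}{\calD}$ is linear-time computable iff $\calD\mapsto\homs{\mathcal{B}}{\calD}$ is linear-time computable for every $\mathcal{B}$ with $c_\Psi(\mathcal{B})\neq 0$; every acyclic such $\mathcal{B}$ is linear-time computable by the (unconditional) positive part of Theorem~\ref{thm:CQ_dicho}, while $\mathcal{K}_t^k$ is \emph{not}, by the choice of $t$. Hence $\Psi$ is a yes-instance of $\meta$ iff $c_\Psi(\conjunctfull{\Psi})=0$ iff $\hat{\chi}(\Delta)=0$ iff $F$ is unsatisfiable, so $\meta$ decides $3$-SAT in polynomial time. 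Since $\mathbb{A}_t$ outputs binary queries, this already yields NP-hardness over a binary signature.

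For the fine-grained lower bound $\meta\notin\bigcap_{\varepsilon>0}\mathrm{DTime}(2^{\varepsilon\ell})$ I would precede the above reduction by the Sparsification Lemma~\cite{ImpagliazzoPZ01}: in time $2^{o(n)}$ produce $2^{o(n)}$ formulas $F_1,F_2,\dots$, each with $O(n)$ clauses, such that $F$ is satisfiable iff some $F_i$ is. For each $F_i$ construct the complex $\Delta_i$ (ground set of size $O(n)$) and run $\mathbb{A}_t$: it either certifies $\hat{\chi}(\Delta_i)\neq 0$ (so $F$ is satisfiable), or outputs $\Psi_i$ with $\ell(\Psi_i)\leq$ (size of the ground set of $\Delta_i)=O(n)$ by Item~4 of Lemma~\ref{lem:main_reduct_meta}, and $\Psi_i$ is linear-time countable iff $F_i$ is unsatisfiable. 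Thus $F$ is satisfiable iff some $\Psi_i$ is a no-instance of $\meta$. Now suppose $\meta\in\bigcap_{\varepsilon>0}\mathrm{DTime}(2^{\varepsilon\ell})$ and fix a constant $c$ with $\ell(\Psi_i)\leq cn$ for all $i$. Given a target $\varepsilon'>0$, we use the $\meta$-algorithm with parameter $\varepsilon=\varepsilon'/(2c)$ on each $\Psi_i$, each call costing $2^{\varepsilon\ell(\Psi_i)}\leq 2^{(\varepsilon'/2)n}$; together with the $2^{o(n)}$ sparsification cost and $\mathrm{poly}(n)$ per-instance overhead over $2^{o(n)}$ instances, the total is $2^{(\varepsilon'/2+o(1))n}\leq 2^{\varepsilon' n}$ for large $n$. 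This places $3$-SAT in $\bigcap_{\varepsilon'>0}\mathrm{DTime}(2^{\varepsilon' n})$, contradicting non-uniform ETH.

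The step requiring the most care — though it is bookkeeping rather than a genuine obstacle — is verifying that the constant $t$ supplied by Lemma~\ref{lem:non_inif_ETH_clique} may be fixed before the reduction and reused uniformly across all $2^{o(n)}$ sparsified instances, and that the non-uniform flavour of both the hypothesis (non-uniform ETH) and the conclusion ($\meta\notin\bigcap_\varepsilon\mathrm{DTime}(2^{\varepsilon\ell})$) is preserved under composition with the uniform Sparsification Lemma and the polynomial-time reductions; everything else transfers directly from the proof of Lemma~\ref{lem:meta_hard1}.
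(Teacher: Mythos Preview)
Your proposal is correct and follows precisely the approach the paper indicates: the paper's proof is the single sentence ``analogous [to Lemma~\ref{lem:meta_hard1}], with the only exception that we do not invoke Theorem~\ref{thm:CQ_dicho} but apply Lemma~\ref{lem:non_inif_ETH_clique} with $d=1$ to obtain a $t$ such that for each $k$ the function $\calD\mapsto\homs{\mathcal{K}_t^k}{\calD}$ cannot be evaluated in linear time,'' and you have unpacked exactly this substitution, including the careful handling of the $\bigcap_{\varepsilon>0}$ quantifier interplay with the Sparsification Lemma.
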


Theorem~\ref{thm:main_meta} now follows immediately from Lemmas~\ref{lem:meta_algo},~\ref{lem:meta_hard1}, and~\ref{lem:meta_hard3}. 

Finally, we point out that our construction shows, in fact, something much stronger than just the intractability of deciding whether we can count answers to a UCQ in linear time: For any pair $(c,d)$ of positive integers 
satisfying $c \leq d$, it is hard to distinguish whether counting answers to a given UCQ can be done in time $O(n^c)$, or whether it takes time at least $\omega(n^d)$. Formally, we introduce the following gap problem:
\begin{definition}
Let $c$ and $d$ be positive integers with $c \leq d$. The problem $\meta[c,d]$ has as input a union of quantifier-free, self-join-free, and acyclic conjunctive queries $\Psi$. The goal is to decide whether the function $\mathcal{D} \mapsto \ans{\Psi}{\mathcal{D}}$ can be computed in time $O(|\mathcal{D}|^c)$, or whether it cannot be solved in time $O(|\mathcal{D}|^d)$; the behaviour may be undefined for inputs $\Psi$ for which the best exponent in the running time is in the interval $(c,d]$.
\end{definition}

\begin{theorem}
Assume that the Triangle Conjecture and non-uniform ETH hold. Then for each positive integer~$d$,   the problem $\meta[1,d]$ is $\mathrm{NP}$-hard and, furthermore, 
    \[\meta[1,d] \notin \bigcap_{\varepsilon>0} \mathrm{DTime}(2^{\varepsilon \cdot \ell})\,.\]
    This remains true even if the input to $\meta$ is restricted to be over a binary signature.
\end{theorem}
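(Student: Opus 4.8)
The plan is to run the reduction underlying Lemma~\ref{lem:meta_hard3}, but with the target exponent $d$ in place of $1$ at the two points where a concrete running-time threshold enters. Fix a positive integer $d$. First I would invoke Lemma~\ref{lem:non_inif_ETH_clique} (which needs only non-uniform ETH) to obtain a value $t=t(d)$ such that, for every $k\geq 1$, the function $\calD\mapsto\homs{\mathcal{K}_t^k}{\calD}$ cannot be computed in time $O(|\calD|^d)$, and let $\mathbb{A}_t$ be the polynomial-time algorithm from Lemma~\ref{lem:main_reduct_meta} for this $t$. Recall that the reduction of Roune and S\'{a}enz-de-Cabez\'{o}n~\cite{RouneS13} turns a $3$-SAT instance $F$ with $n$ variables and $m$ clauses into, in polynomial time, a complex $\Delta$ with ground set of size $O(n+m)$ such that $F$ is satisfiable if and only if $\hat{\chi}(\Delta)\neq 0$. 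Running $\mathbb{A}_t$ on $\Delta$ either returns $\hat{\chi}(\Delta)$ directly, or returns a union $\Psi=(\mathcal{B}_1,\dots,\mathcal{B}_\ell)$ of quantifier-free, self-join-free, acyclic conjunctive queries of arity $2$ with $\ell\leq|\Omega|$, $\conjunctfull{\Psi}\cong\mathcal{K}_t^k$, $c_\Psi(\conjunctfull{\Psi})=-\hat{\chi}(\Delta)$, and every $\mathcal{B}\ncong\conjunctfull{\Psi}$ with $c_\Psi(\mathcal{B})\neq 0$ acyclic. In the first case we output a fixed canonical yes- or no-instance of $\meta[1,d]$ according to whether $\hat{\chi}(\Delta)$ is zero; a valid no-instance is obtained by applying $\hat{\mathbb{A}}_t$ of Lemma~\ref{lem:main_reduct_meta_help} to a fixed non-trivial irreducible complex of nonzero reduced Euler characteristic, which by that lemma and the choice of $t$ is a union of acyclic self-join-free binary CQs that cannot be counted in time $O(|\calD|^d)$.

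The core of the argument is to show that the $\Psi$ produced in the second case is a yes-instance of $\meta[1,d]$ exactly when $\hat{\chi}(\Delta)=0$. Since $\Psi$ is quantifier-free, \#equivalence coincides with isomorphism and every CQ in the CQ expansion of $\Psi$ is $\cminimal$, so Corollary~\ref{cor:UCQ_monotone} applies term-by-term for any exponent. If $\hat{\chi}(\Delta)=0$ then $c_\Psi(\conjunctfull{\Psi})=0$, hence every structure with nonzero coefficient is acyclic; by the first part of Theorem~\ref{thm:CQ_dicho} and Corollary~\ref{cor:UCQ_monotone} with exponent $1$, the map $\calD\mapsto\ans{\Psi}{\calD}$ is computable in linear time, so $\Psi$ is a yes-instance. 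If $\hat{\chi}(\Delta)\neq 0$ then $\conjunctfull{\Psi}\cong\mathcal{K}_t^k$ has nonzero coefficient, so by the choice of $t$ and Corollary~\ref{cor:UCQ_monotone} with exponent $d$, the map $\calD\mapsto\ans{\Psi}{\calD}$ cannot be computed in time $O(|\calD|^d)$, so $\Psi$ is a no-instance. This gives a polynomial-time many-one reduction from $3$-SAT to $\meta[1,d]$, establishing $\mathrm{NP}$-hardness; the restriction to binary signatures is immediate since the $\mathcal{B}_i$ have arity $2$.

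For the fine-grained lower bound I would precompose with the Sparsification Lemma exactly as in the proof of Lemma~\ref{lem:meta_hard1}: in time $2^{o(n)}$ produce $2^{o(n)}$ formulas $F_i$, each with $O(n)$ clauses, hence complexes $\Delta_i$ with ground sets of size $O(n)$, hence (via $\mathbb{A}_t$) UCQs $\Psi_i$ with $\ell(\Psi_i)=O(n)$ by Item~4 of Lemma~\ref{lem:main_reduct_meta}; and, as in that proof, $F$ is satisfiable if and only if some $\Delta_i$ has nonzero reduced Euler characteristic, equivalently some $\Psi_i$ produced by $\mathbb{A}_t$ is a no-instance of $\meta[1,d]$. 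A hypothetical algorithm for $\meta[1,d]$ lying in $\bigcap_{\varepsilon>0}\mathrm{DTime}(2^{\varepsilon\ell})$ would then decide $3$-SAT within $\bigcap_{\varepsilon>0}\mathrm{DTime}(2^{\varepsilon n})$, contradicting non-uniform ETH.

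I expect the only delicate point to be the simultaneous bookkeeping of the two exponents: the yes-side must produce a genuinely linear-time algorithm while the no-side must rule out exponent~$d$. This works precisely because Lemma~\ref{lem:non_inif_ETH_clique} supplies a $t$ tuned to the target exponent~$d$, whereas the property of Lemma~\ref{lem:main_reduct_meta} that all terms other than $\conjunctfull{\Psi}$ are acyclic keeps the yes-side linear regardless of $d$; consequently the intermediate regime $(1,d]$ in the definition of $\meta[1,d]$ never has to be analysed directly, which is exactly why the ``undefined behaviour'' clause in that definition suffices. (As in Lemma~\ref{lem:meta_hard3}, the Triangle Conjecture is carried along for uniformity with the other results of this section; the argument itself only uses the unconditional direction of Theorem~\ref{thm:CQ_dicho} together with non-uniform ETH.)
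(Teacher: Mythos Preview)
Your proposal is correct and follows essentially the same approach as the paper: the paper's proof simply invokes Lemma~\ref{lem:non_inif_ETH_clique} to pick $t$ depending on $d$ and then says to proceed as in the proof of Lemma~\ref{lem:meta_hard3}, which is exactly what you spell out in detail. Your observation that the Triangle Conjecture is not actually used in the argument (only the unconditional direction of Theorem~\ref{thm:CQ_dicho} and non-uniform ETH) is also accurate.
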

\begin{proof}
By Lemma~\ref{lem:non_inif_ETH_clique} there is a positive integer~$t$ such that for all positive integers~$k$, the function $\calD\mapsto\homs{\mathcal{K}_t^k}{\calD}$ cannot be computed in time $O(|\mathcal{D}|^d)$.
Fix this $t$ and proceed similarly to the proof of Lemma~\ref{lem:meta_hard3}. 
\end{proof}

Corollary~\ref{cor:cor} is an immediate consequence, since any algorithm that solves $\meta[c,d]$ for $1 \leq c\leq d$ solves, without modification, $\meta[1,d]$.

\begin{corollary}\label{cor:cor}
Assume that the Triangle Conjecture and non-uniform ETH hold. Then for every pair $(c,d)$ of positive integers satisfying  $c \leq d$, the problem $\meta[c,d]$ is $\mathrm{NP}$-hard and, furthermore, 
    \[\meta[c,d] \notin \bigcap_{\varepsilon>0} \mathrm{DTime}(2^{\varepsilon \cdot \ell})\,.\]
    This remains true even if the input to $\meta$ is restricted to be over a binary signature.
\end{corollary}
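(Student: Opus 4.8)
The plan is to derive Corollary~\ref{cor:cor} directly from the immediately preceding theorem, which establishes exactly the same two statements for the special case $\meta[1,d]$. The only thing that needs checking is that $\meta[c,d]$ is \emph{at least as hard} as $\meta[1,d]$; concretely, that any algorithm that correctly decides $\meta[c,d]$ also correctly decides $\meta[1,d]$ on every input for which the latter is defined.

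To see this I would fix a UCQ $\Psi$ (a union of quantifier-free, self-join-free, acyclic CQs), and let $e(\Psi)$ denote the optimal exponent in the running time for computing $\calD\mapsto\ans{\Psi}{\calD}$. If $\meta[1,d]$ is defined on $\Psi$, then either $e(\Psi)\le 1$ or $e(\Psi)>d$. In the first case $\meta[1,d]$ must output ``yes'', and since $1\le c$ we also have $e(\Psi)\le c$, so $\meta[c,d]$ is defined on $\Psi$ and must likewise output ``yes''. In the second case both problems must output ``no''. Hence the correct outputs of $\meta[c,d]$ and $\meta[1,d]$ coincide wherever $\meta[1,d]$ is defined, so any decision procedure for $\meta[c,d]$ is, without modification, a decision procedure for $\meta[1,d]$. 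Consequently the $\mathrm{NP}$-hardness reduction for $\meta[1,d]$ supplied by the preceding theorem — which produces only instances on which $\meta[1,d]$ is defined, and moreover only instances over a binary signature — witnesses $\mathrm{NP}$-hardness of $\meta[c,d]$ as well; the same family of instances shows $\meta[c,d]\notin\bigcap_{\varepsilon>0}\mathrm{DTime}(2^{\varepsilon\cdot\ell})$ under the stated assumptions, and the binary-signature refinement is inherited for free.

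There is essentially no obstacle here beyond this bookkeeping observation: the technical content lives in the preceding theorem, which in turn follows the proof scheme of Lemma~\ref{lem:meta_hard3} but invokes Lemma~\ref{lem:non_inif_ETH_clique} with parameter $d$ (rather than $d=1$) to pick a clique size $t$ for which $\calD\mapsto\homs{\mathcal{K}_t^k}{\calD}$ cannot be computed in time $O(|\calD|^d)$, and then feeds the resulting structure through the reduction $\mathbb{A}_t$ of Lemma~\ref{lem:main_reduct_meta}. The one point I would be careful about — and which is already handled there — is that the chosen $t$ depends on $d$ but not on $k$, so the single surviving non-acyclic term $\conjunctfull{\Psi}\cong\mathcal{K}_t^k$ of the CQ expansion genuinely certifies a super-$n^d$ lower bound whenever its coefficient $-\hat{\chi}(\Delta)$ is nonzero, while Item~3 of Lemma~\ref{lem:main_reduct_meta} guarantees that every other surviving term is acyclic and hence linear-time countable.
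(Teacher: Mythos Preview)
Your proposal is correct and takes essentially the same approach as the paper, which simply notes that any algorithm solving $\meta[c,d]$ for $1\le c\le d$ solves $\meta[1,d]$ without modification. Your write-up just makes the promise-problem bookkeeping behind this sentence explicit.
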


\section{Connection to the Weisfeiler-Leman-Dimension}
Recall that we call a database a \emph{labelled graph} if its signature has arity at most $2$, and if it does not contain a self-loop, that is, a tuple of the form $(v,v)$. Moreover, (U)CQs on labelled graphs must also have signatures of arity at most $2$ and must not contain atoms of the form $R(v,v)$, where $R$ is any relation symbol of the signature.

Neuen~\cite{Neuen23} and Lanzinger and Barceló~\cite{LanzingerB23} determined the WL-dimension of computing finite linear combinations of homomorphism counts to be the \emph{hereditary treewidth}, defined momentarily. Since counting homomorphisms is equivalent to counting answers to quantifier-free conjunctive queries, and since the number of answers of a union of conjunctive queries can be expressed as a linear combination of conjunctive query answer counts (Lemma~\ref{lem:UCQ_to_hombasis}), we can state their results as follows.
\newcommand{\hdtw}{\mathsf{hdtw}}
\begin{definition}[Hereditary Treewidth of UCQs]
    Let $\Psi$ be a UCQ. The \emph{hereditary treewidth} of $\Psi$, denoted by $\hdtw(\Psi)$, is defined as follows:
    \[ \hdtw(\Psi)= \max\{\tw(A,X) \mid c_\Psi(A,X) \neq 0 \} ,\]
    that is, $\hdtw(\Psi)$ is the maximum treewidth of any conjunctive query that survives with a non-zero coefficient when $\Psi$ is expressed as a linear combination of conjunctive queries. 
\end{definition}

Then, applying the main result of Neuen, Lanzinger and Barceló to UCQs, we obtain:
\begin{theorem}[\cite{Neuen23,LanzingerB23}]
    Let $\Psi$ be a quantifier-free UCQ on labelled graphs. Then $\mathsf{dim}_{\mathrm{WL}}(\Psi) =\hdtw(\Psi)$.
\end{theorem}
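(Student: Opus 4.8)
The plan is to derive this as a direct application of Lemma~\ref{lem:UCQ_to_hombasis} together with the theorem of Neuen~\cite{Neuen23} and of Lanzinger and Barceló~\cite{LanzingerB23} on the WL-dimension of finite linear combinations of homomorphism counts. First I would put $\ans{\Psi}{\cdot}$ into the homomorphism basis and check that the resulting expression is ``reduced'' in the form required by that theorem. By Lemma~\ref{lem:UCQ_to_hombasis}, $\ans{\Psi}{\calD}=\sum_{(\mathcal{A},X)}c_\Psi(\mathcal{A},X)\cdot\ans{(\mathcal{A},X)}{\calD}$, the sum ranging over equivalence classes of $\cequiv$. Since $\Psi$ is quantifier-free, every $\conjuncts{\Psi}{J}$ is quantifier-free; hence $\cequiv$ coincides with isomorphism, each such query is $\cminimal$, $\ans{(\mathcal{A},X)}{\calD}=\homs{\mathcal{A}}{\calD}$ (as $X=U(\mathcal{A})$), and $\conjuncts{\Psi}{J}$, being a union of structures of arity at most $2$ that contain no tuple of the form $(v,v)$, is itself a labelled graph. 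Consequently $\ans{\Psi}{\cdot}$ is a finite $\Q$-linear combination of homomorphism counts from pairwise non-isomorphic labelled graphs --- the isomorphic terms having been collected into $c_\Psi$ --- and the set of labelled graphs appearing with nonzero coefficient is exactly the support of $c_\Psi$.

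The second step is pure bookkeeping: since the treewidth of a labelled graph is the treewidth of its Gaifman graph, the maximum treewidth of a labelled graph appearing with nonzero coefficient in the above representation is, by definition, $\hdtw(\Psi)$. Feeding this into the cited theorem --- which states that the WL-dimension of a finite linear combination of homomorphism counts from labelled graphs, written in reduced form, equals the maximum treewidth over the labelled graphs occurring with nonzero coefficient --- immediately gives $\mathsf{dim}_{\mathrm{WL}}(\Psi)=\hdtw(\Psi)$.

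To make the two inequalities explicit: the upper bound $\mathsf{dim}_{\mathrm{WL}}(\Psi)\le\hdtw(\Psi)$ is the easy direction, obtained from the homomorphism-counting characterisation of $k$-WL equivalence (see~\cite{DellGR18,Neuen23,LanzingerB23}) extended to labelled graphs: if $\calD_1\cong_k\calD_2$ with $k=\hdtw(\Psi)$, then $\homs{\mathcal{A}}{\calD_1}=\homs{\mathcal{A}}{\calD_2}$ for every labelled graph $\mathcal{A}$ of treewidth at most $k$, in particular for every $\mathcal{A}$ in the support of $c_\Psi$, so summing the linear combination yields $\ans{\Psi}{\calD_1}=\ans{\Psi}{\calD_2}$. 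The lower bound $\mathsf{dim}_{\mathrm{WL}}(\Psi)\ge\hdtw(\Psi)$ is the substantive part, and this is exactly where the work of Neuen and of Lanzinger and Barceló is needed: fixing a structure $\mathcal{A}^\ast$ in the support of $c_\Psi$ with $w:=\tw(\mathcal{A}^\ast)=\hdtw(\Psi)$, one must show that $\Psi$ fails to be $k$-WL invariant for every $k<w$, i.e.\ exhibit labelled graphs $\calD_1\cong_k\calD_2$ with $\ans{\Psi}{\calD_1}\neq\ans{\Psi}{\calD_2}$ --- a pair separated by no homomorphism count from a graph of treewidth at most $k$, yet separated by the linear combination as a whole. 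Establishing this non-cancellation phenomenon (via controlled CFI-style pair constructions together with the linear independence of homomorphism counts) is the main obstacle, and I would import it verbatim from~\cite{Neuen23,LanzingerB23} rather than reprove it.
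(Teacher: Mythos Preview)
Your proposal is correct and matches the paper's approach exactly: the paper does not give a self-contained proof of this theorem but simply states it as the specialisation to UCQs of the cited results, noting (in the sentence preceding the theorem) that Lemma~\ref{lem:UCQ_to_hombasis} expresses $\ans{\Psi}{\cdot}$ as a linear combination of homomorphism counts. Your write-up is in fact more explicit than the paper's, which leaves the verification that the linear combination is in reduced form and that all structures involved are labelled graphs entirely implicit.
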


This enables us to prove Theorem~\ref{thm:WL1}, which we restate for convenience.
\WLone*

\begin{proof}
    For the upper bound, we compute the coefficients \[ c_\Psi(\mathcal{A},X) = \sum_{\substack{J\subseteq [\ell]\\ \conjuncts{\Psi}{J} \cong (\mathcal{A},X)}} (-1)^{|J|+1},\]
    that is, for each subset $J\subseteq [\ell]$, compute $\conjuncts{\Psi}{J}$. Afterwards, collect the isomorphic terms and compute $c_\Psi(\conjuncts{\Psi}{J})$ for each $J\subseteq[\ell]$. Clearly, $c_\Psi(\mathcal{A},X)=0$ for every $(\mathcal{A},X)$ that is not isomorphic to any $\conjuncts{\Psi}{J}$. Clearly, this can be done in time $|\Psi|^{O(1)}\cdot O(2^\ell)$. 
    
    Next, for each $(\mathcal{A},X)$ with $c_\Psi(\mathcal{A},X)\neq 0$, we use the algorithm of Feige, Hajiaghayi, and Lee~\cite{FeigeHL08} to compute in polynomial time a $g$-approximation $S(\mathcal{A},X)$ of the treewidth of $(\mathcal{A},X)$, where $g(k)\in O(\sqrt{\log k})$. Finally, we output the maximum of $S(\mathcal{A},X)$ over all $(\mathcal{A},X)$ with $c_\Psi(\mathcal{A},X)\neq 0$

    For the lower bound, assume that there is a function $f:\mathbb{Z}_{>0} \to \mathbb{Z}_{>0}$ and an algorithm $\mathbb{A}$ that computes an $f$-approximation of $\mathsf{dim}_{\mathrm{WL}}(\Psi)$ in subexponential time in the number of conjunctive queries in the union. We will use $\mathbb{A}$ to construct a subexponential time algorithm for $3$-$\textsc{SAT}$, which refutes ETH. Our construction is similar to the proof of Lemma~\ref{lem:meta_hard1}. Fix any positive integer $t> f(1)+1$.

    Let $F$ be a $3$-CNF with $n$ variables, which we can again assume to be sparse by using the Sparsification Lemma~\cite{CalabroIP09} (the details are identical to its application in the proof of Lemma~\ref{lem:meta_hard1}). Next, using~\cite{RouneS13}, we obtain a complex $\Delta$, the reduced Euler characteristic of which is zero if and only if $F$ is not satisfiable. Finally, we apply Lemma~\ref{lem:main_reduct_meta} with our choice of $t$. The corresponding algorithm computes in polynomial time either the reduced Euler characteristic of $\Delta$, or otherwise outputs a UCQ $\Psi$ such that the number $\ell$ of CQs in the union is bounded by $O(n)$. Moreover, the hereditary treewidth of $\Psi$ is $1$ if $\hat{\chi}(\Delta)= 0$, i.e., if $F$ is not satisfiable; and its hereditary treewidth is at least $\tw(\mathcal{K}^k_t)=t-1 > f(1)$, otherwise.

    Thus, we run $\mathbb{A}$ on $\Psi$ and report that $F$ is satisfiable if and only if it outputs $S>f(1)$. Since $\mathbb{A}$ runs in time $2^{o(\ell)}$, the total running time is bounded by $2^{o(n)}$, which refutes ETH. $\mathrm{NP}$-hardness follows likewise.
\end{proof}

Finally, the proof of Theorem~\ref{thm:WL2} is identical with the only exception being that, since $k$ is fixed, we can substitute the approximation algorithm for treewidth of Feige, Hajiaghayi, and Lee~\cite{FeigeHL08} by the exact algorithm of Bodlaender~\cite{Bodlaender96}, which runs in polynomial time if $k$ is fixed.

\section{Acknowledgements}

We thank the referees for very useful comments.

\bibliographystyle{plain}
\bibliography{ucq}

\appendix

\section{Necessity of the Side Conditions in Theorem~\ref{thm:main_dicho_quantifiers}}\label{sec:appendix}
We show that Theorem~\ref{thm:main_dicho_quantifiers} is optimal in the sense that, if any of the conditions (I), (II), or (III) is dropped, the statement of the theorem becomes false (assuming that W[1]-hard problems are not fixed-parameter tractable).

\paragraph*{Dropping condition (I)}

\begin{lemma}
There is a recursively enumerable class $C$ of quantifier-free 
UCQs of bounded arity  such that $\conjunctclass{C}$ has unbounded treewidth but $\#\ucq(C)$ is fixed-parameter tractable.
The class $C$ satisfies (II) and (III).
\end{lemma}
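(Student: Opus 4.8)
The plan is to recycle the construction from Section~\ref{sec:meta}, which already exhibits UCQs whose combined query has large treewidth while every term surviving in the CQ expansion is acyclic. Concretely, I would fix the complex $\Delta_2$ from Figure~\ref{fig:Example_complexes}: it is non-trivial, it is irreducible, its ground set $\{1,2,3,4\}$ is not a facet, and $\hat{\chi}(\Delta_2)=0$. For each positive integer $t$ let $\Psi_t := \hat{\mathbb{A}}_t(\Delta_2)$ be the UCQ produced by the algorithm of Lemma~\ref{lem:main_reduct_meta_help} (whose hypotheses $\Delta_2$ satisfies), and set $C := \{\Psi_t \mid t \in \mathbb{Z}_{>0}\}$. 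Since $\hat{\mathbb{A}}_t$ is explicitly constructible from $t$ and runs in polynomial time, enumerating $t=1,2,\dots$ and outputting $\Psi_t$ shows that $C$ is recursively enumerable. By Item~5 of Lemma~\ref{lem:main_reduct_meta_help} each $\Psi_t$ is a union of quantifier-free, self-join-free conjunctive queries of arity $2$, so $C$ is a class of quantifier-free UCQs of bounded arity that satisfies (III), and it satisfies (II) trivially because its queries have no existentially quantified variables.

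Next I would check the treewidth assertions. By Item~1 of Lemma~\ref{lem:main_reduct_meta_help}, $\conjunctfull{\Psi_t} \cong \mathcal{K}_t^{k}$ where $k$ is the number of facets of $\Delta_2$ (here $k=4$), and $\tw(\mathcal{K}_t^k)=t-1$ (the same fact used in the proof of Theorem~\ref{thm:WL1}); hence $\conjunctclass{C}=\{\conjunctfull{\Psi_t}\mid t\geq 1\}$ has unbounded treewidth. For fixed-parameter tractability I would invoke the known dichotomy Theorem~\ref{thm:implicit_UCQ_dicho}, so it suffices to bound the treewidth of $\Gamma(C)$ and of $\contract{\Gamma(C)}$. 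Because every query in $C$, and hence every $\conjuncts{\Psi}{J}$, is quantifier-free, every member of $\Gamma(C)$ is a quantifier-free conjunctive query; in particular it is $\cminimal$ (so the $\cminimal$ requirement in the definition of $\Gamma(C)$ is automatic) and it equals its own contract. Thus only $\tw(\Gamma(C))$ needs to be bounded. Fix $t$ and suppose $c_{\Psi_t}(\mathcal{A},X)\neq 0$. If $(\mathcal{A},X)\cong\conjunctfull{\Psi_t}$, then Item~2 of Lemma~\ref{lem:main_reduct_meta_help} gives $c_{\Psi_t}(\mathcal{A},X)=-\hat{\chi}(\Delta_2)=0$, a contradiction; hence $(\mathcal{A},X)\ncong\conjunctfull{\Psi_t}$, and Item~3 forces $\mathcal{A}$ to be acyclic, i.e.\ $\tw(\mathcal{A})\leq 1$. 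Therefore every query in $\Gamma(C)$ has treewidth at most $1$, so $\Gamma(C)$ and $\contract{\Gamma(C)}$ have bounded treewidth and $\#\ucq(C)$ is fixed-parameter tractable by Theorem~\ref{thm:implicit_UCQ_dicho}. (In fact, by complexity monotonicity, Corollary~\ref{cor:UCQ_monotone}, together with Theorem~\ref{thm:CQ_dicho}, counting answers to each $\Psi_t$ is even possible in time linear in $|\calD|$ assuming the Triangle Conjecture.) As a by-product this shows $C$ is \emph{not} closed under deletions, since otherwise Theorem~\ref{thm:main_dicho_quantifier_free} would make $\#\ucq(C)$ be $\ccW{1}$-hard.

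I do not anticipate a serious obstacle here: the whole argument consists of lining up the five items of Lemma~\ref{lem:main_reduct_meta_help}. The only point that genuinely has to be verified with care is that the chosen complex is non-trivial, irreducible, has $\Omega\notin\mathcal{I}$, and has vanishing reduced Euler characteristic---these hypotheses are exactly what makes Lemma~\ref{lem:main_reduct_meta_help} applicable and, via Item~2, forces the unique high-treewidth term $\conjunctfull{\Psi_t}$ to drop out of the CQ expansion. For $\Delta_2$ all four properties are immediate (its reduced Euler characteristic was computed to be $0$ in Figure~\ref{fig:Example_complexes} and it was already remarked there to be irreducible), so the remaining verifications are routine.
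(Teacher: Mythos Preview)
Your proposal is correct and essentially identical to the paper's proof: both fix the complex $\Delta_2$ from Figure~\ref{fig:Example_complexes}, set $C=\{\hat{\mathbb{A}}_t(\Delta_2)\mid t\geq 1\}$, and read off all required properties directly from Items~1--5 of Lemma~\ref{lem:main_reduct_meta_help} together with Theorem~\ref{thm:implicit_UCQ_dicho}. Your additional remarks (linear-time solvability and the observation that $C$ is not deletion-closed) are extras not present in the paper but do no harm.
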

\begin{proof}
Let $\Delta$ be the second complex in Figure~\ref{fig:Example_complexes}, that is, the ground set is $\Omega=\{1,2,3,4\}$ and the facets are $\{1,2\}, \{2,3\},\{3,1\}$, and $\{4\}$. Note that $\Delta$ is irreducible (no element dominates another element), non-trivial, and $\Omega$ is not a facet. We can thus use Lemma~\ref{lem:main_reduct_meta_help} and let $\Psi_t$ to be the output of algorithm $\hat{\mathbb{A}}_t$ given $\Delta$. Note that $\Psi_t$ is quantifier-free --- in particular, this implies that all conjunctive queries within $\Psi_t$ are \cminimal. Since the algorithm $\hat{\mathbb{A}}_t$ can be explicitly constructed from~$t$ (see Lemma~\ref{lem:main_reduct_meta_help} and Algorithm~\ref{alg:cap}) 
the class $C=\{\Psi_t \mid t\geq 1\}$ is recursively enumerable. 
Furthermore, all of the relation symbols in queries in UCQs in~$C$ have arity~$2$, so $C$ has bounded arity.

By Item~1 of Lemma~\ref{lem:main_reduct_meta_help}, $\conjunctclass{C}$ has unbounded treewidth, since the treewidth of $\mathcal{K}_t^k$ is equal to $t-1$. Moreover, by Item~5 of~Lemma~\ref{lem:main_reduct_meta_help}, all $\Psi_t$ are unions of self-join-free conjunctive queries. We next show that $\#\ucq(C)$ is fixed-parameter tractable.

Recall that $\hat{\chi}(\Delta)= -(3 - 4 +1)= 0$.
Item~2 of~Lemma~\ref{lem:main_reduct_meta_help}
shows that $c_{\Psi_t}(\wedge(\Psi_t)) = 0$. 
Item~3 shows that for any relational structure $\mathcal{B}$ that
is not isomorphic to $\wedge(\Psi_t)$ 
with $c_{\Psi_t}(\mathcal{B})\neq 0$, $\mathcal{B}$ is acyclic. 
Recall that $\Gamma(C)$ is the class of those conjunctive queries that contribute to the CQ expansion of at least one UCQ in $C$.
So all CQs in $\Gamma(C)$ are acyclic, which means that  the treewidth of $\Gamma(C)$ is bounded by $1$.  

Since each query in $\Gamma(C)$ is quantifier-free, and is thus its own contract,  $\Gamma(C)=\contract{\Gamma(C)}$. Thus, by Theorem~\ref{thm:implicit_UCQ_dicho}, $\#\ucq(C)$ is fixed-parameter tractable.

We finish the proof by showing that $C$ satisfies (II) and (III).
Item (II) - that the number of existentially qunatified variables of queries in~$C$ is bounded - is trivial, because there are none.
We have already noted (III) -- that the UCQs in $C$ are unions of self-join-free CQs.

\end{proof}

\paragraph*{Dropping condition (II)}

\begin{lemma}
There is a recursively enumerable and deletion-closed class $C$ of unions of self-join-free conjunctive queries of bounded arity such that $\conjunctclass{C}$ has unbounded treewidth but $\#\ucq(C)$ is fixed-parameter tractable. 
\end{lemma}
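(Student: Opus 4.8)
The plan is to produce an explicit family $(\Psi_n)_{n\geq 1}$ of unions of self-join-free conjunctive queries of bounded arity, where $\Psi_n$ uses $\Theta(n)$ existentially quantified variables, and to let $C$ be the closure of $\{\Psi_n : n\geq 1\}$ under deletions. By construction $C$ is deletion-closed, and it is recursively enumerable, of bounded arity, and a class of unions of self-join-free CQs (so it satisfies (I) and (III) while badly violating (II)). Two things then have to be shown: that $\conjunctclass{C}$ has unbounded treewidth, and that $\#\ucq(C)$ is fixed-parameter tractable. For the latter I would invoke Theorem~\ref{thm:implicit_UCQ_dicho}: it suffices to prove that $\Gamma(C)$ and $\contract{\Gamma(C)}$ have bounded treewidth, i.e.\ that for every $\Psi\in C$ every $\cminimal$ query $(\mathcal{A},X)$ with $c_\Psi(\mathcal{A},X)\neq 0$ is acyclic and has an acyclic contract.

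For the treewidth lower bound, each CQ in $\Psi_n$ is a small self-join-free gadget attached to a common set $X$ of free variables, arranged so that $\conjunctfull{\Psi_n}=\bigcup_j \mathcal{A}_j$ contains a fixed high-treewidth graph as a topological minor: contracting the gadgets' quantified variables recovers, say, a clique on $\Omega(n)$ of the free variables, so $\tw(\conjunctfull{\Psi_n})=\Omega(n)$ by minor-monotonicity, and hence $\conjunctclass{C}$ has unbounded treewidth. The delicate point is that the gadgets must individually be acyclic with acyclic contracts: since a self-join-free CQ is its own $\ccore$ (Lemma~\ref{lem:self-join-free-counting-cores}), every single gadget lies in $\Gamma(C)$ (it occurs as a one-CQ deletion of some $\Psi_n$), so if it had high treewidth or a high-treewidth contract we would already be sunk. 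This is exactly the reason the high-treewidth structure of $\conjunctfull{\Psi_n}$ has to be "assembled'' across unboundedly many gadgets rather than sitting inside one of them, and hence the reason condition (II) is dropped.

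For the upper bound I would use Lemma~\ref{lem:UCQ_to_hombasis}: for $\Psi\in C$, $\ans{\Psi}{\calD}=\sum_{\emptyset\neq J}(-1)^{|J|+1}\ans{\conjuncts{\Psi}{J}}{\calD}$, so the queries surviving in $\Gamma(C)$ are the $\ccores$ of the conjunctions $\conjuncts{\Psi}{J}$ that do not cancel. The gadgets are designed so that in every such conjunction the ``many-variable'' part is generically redundant: each additional gadget copy admits an endomorphism fixing $X$ that folds it onto a bounded acyclic skeleton (using the characterisation of $\cminimal$ity in Observation~\ref{obs:cminimalequivalent}), so the $\ccore$ of any $\conjuncts{\Psi}{J}$ collapses down to an acyclic query whose contract is again acyclic. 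Consequently $\Gamma(C)$ and $\contract{\Gamma(C)}$ have treewidth bounded by an absolute constant, and Theorem~\ref{thm:implicit_UCQ_dicho} gives fixed-parameter tractability of $\#\ucq(C)$.

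The main obstacle is precisely deletion-closedness. It is not enough for the hard high-treewidth term to cancel in $\conjunctfull{\Psi_n}$ itself; because every sub-union $\Psi_n\vert_J$ belongs to $C$, the cancellation (or at least the collapse of the $\ccore$ to something acyclic with acyclic contract) must hold for $\conjunctfull{\Psi_n\vert_J}$ for all $J$ — this is exactly the phenomenon ruled out in the quantifier-free setting by Lemma~\ref{lem:main_dichotomy_quantifierfree}, and it is what breaks a naive reuse of the power-complex construction of Lemma~\ref{lem:main_reduct_meta_help} (there a minimal $J$ re-exposes $\mathcal{K}_t^k$ with a nonzero coefficient). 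Overcoming it requires building the gadgets so that the folding of the redundant quantified variables is monotone under deleting CQs from the union — removing a gadget never re-exposes high-treewidth structure in the $\ccore$ — which I would establish by an explicit inclusion–exclusion argument over $J$ showing that $c_{\Psi_n\vert_J}$ is supported only on acyclic queries with acyclic contracts for every nonempty $J\subseteq[\ell_n]$.
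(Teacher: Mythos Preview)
Your plan is correct and matches the paper's approach almost exactly: the paper builds $\Psi_k=\bigvee_{i<j}\varphi_k^{i,j}$ where each $\varphi_k^{i,j}$ is a self-join-free acyclic gadget over common free variables $x_1,\dots,x_k,x_\bot$ with a single quantified vertex $y_k^{i,j}$ adjacent to $x_i$ and $x_j$, so that $\conjunctfull{\Psi_k}$ contains a subdivided $k$-clique, and then shows that for every $J$ the $\ccore$ of $\conjuncts{\Psi_k}{J}$ collapses (by mapping the $y$'s to $x_\bot$) to a query of bounded treewidth with bounded-treewidth contract. The only minor inaccuracy in your sketch is that the $\ccore$s are not all acyclic---in the paper's construction one case yields treewidth~$2$---but ``bounded treewidth'' is all that is needed, and your folding mechanism and your identification of deletion-closedness as the central obstacle are exactly right.
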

\begin{proof}

The statement of the lemma guarantees that $C$ satisfies items (I) and (III) of Theorem~\ref{thm:main_dicho_quantifiers}.
     
Let $k\geq 3$ be a positive integer and let $\tau_k=(E_1,\dots,E_k)$ be a signature with $\mathsf{arity}(E_i)=2$ for all $i\in[k]$.
For any pair $i,j\in [k]$ with $i< j$, consider the conjunctive query
\[\varphi_k^{i,j}(x_1,\dots,x_k,x_\bot) = \exists y_k^{i,j} : E_i(x_i,y_k^{i,j}) \wedge E_j(x_j,y_k^{i,j}) \wedge  \bigwedge_{\ell \in [k]\setminus\{i,j\}} E_\ell(x_\ell,x_\bot)   . \]
Let $\Psi_k = \bigvee_{i< j \in[k]} \varphi_k^{i,j}$ and let $C$ be obtained from the class $\{\Psi_k \mid k\geq 3 \}$ by taking the closure under deletion of conjunctive queries. Clearly, $C$ is recursively enumerable.

Note that each conjunctive query $\varphi_k^{i,j}$ is self-join-free and that $\Psi_k$ contains $\binom{k}{2}$ existentially quantified variables.   Thus the number of existentially quantified variables of queries in $C$ is unbounded.   Moreover,   the treewidth of $\conjunctclass{C}$ is unbounded. To see this, observe that 
\[\conjunctfull{\Psi_k}(x_1,\dots,x_k) = \exists y_k^{1,2},y_k^{1,3},\dots,y_k^{k-1,k} \colon \bigwedge_{i< j} E_i(x_i,y_k^{i,j}) \wedge E_j(x_j,y_k^{i,j}) \wedge\bigwedge_{\ell \in [k]}  E_\ell(x_\ell,x_{\bot}) . \]
Therefore, the Gaifman graph of $\conjunctfull{\Psi_k}$ contains as a subgraph a subdivision of a $k$-clique and  thus has treewidth at least $k-1$.

It remains to show that $\#\ucq(C)$ is fixed-parameter tractable. To show this  we claim that the classes $\Gamma(C)$ and $\contract{\Gamma(C)}$ have treewidth at most $2$, and thus the problem $\#\ucq(C)$ is fixed-parameter tractable by Theorem~\ref{thm:implicit_UCQ_dicho}.  To  prove the claim, fix any $k\geq 3$ and any non-emtpy subset $J\subseteq\{(i,j) \in[k^2] \mid i< j\}$.  We will show that the \cminimal~representatives of $\conjunctfull{\Psi_k,J}$ and its contract are acyclic. To this end, assume first that $|J|=1$. Then $\conjunctfull{\Psi_k,J}$ is equal to one of the conjunctive queries $\varphi_k^{i,j}$, which is clearly acyclic.  
Since $\varphi_k^{i,j}$ is self-join-free and does not contain isolated variables, it  is \cminimal~by Lemma~\ref{lem:self-join-free-counting-cores}. Let $G$ be the Gaifman graph of $\varphi_k^{i,j}$, and recall that the contract of $\varphi_k^{i,j}$ is obtained from $G[X]$ by adding an edge between two free variables in $X$ if and only if there is a connected component in the quantified variables that is adjacent to both free variables. Since the only quantified variable in $\varphi_k^{i,j}$ is $y_k^{i,j}$, which is adjacent (in $G$) to $x_i$ and $x_j$, the contract of $\varphi_k^{i,j}$ is just the graph obtained from $G[X]$ by adding an edge between $x_i$ and $x_j$, which also yields an acyclic graph.
    
Next assume that $|J|\geq 2$. For an index $s \in[k]$, we say that $J$ \emph{covers} $s$ if each $(i,j)\in J$ satisfies $i=s$ or $j=s$.
We distinguish three cases:
\begin{itemize}
\item[(A)] There are distinct $s_1 < s_2$ such that $J$ covers $s_1$ and $s_2$. Then $J=\{(s_1,s_2)\}$, contradicting the assumption that $|J|\geq 2$.
\item[(B)] There is precisely one $s \in[k]$ such that $J$ covers $s$. Assume w.l.o.g.\ that $s=k$. Then 
since every $(i,j) \in J$ has $i<j$
\begin{align*} \conjunctfull{\Psi_k,J} &= \bigwedge_{\ell\in[k-1]} E_\ell(x_\ell,x_\bot) \wedge \bigwedge_{(i,j)\in J} \exists y_k^{i,j} : E_i(x_i,y_k^{i,j}) \wedge E_j(x_j,y_k^{i,j})  
\\
&= \bigwedge_{\ell\in[k-1]} E_\ell(x_\ell,x_\bot) \wedge \bigwedge_{(i,k)\in J} \exists y_k^{i,k} : E_i(x_i,y_k^{i,k}) \wedge E_k(x_k,y_k^{i,k})  .\end{align*} 
 
Observe that any answer of $\conjunctfull{\Psi_k,J}$ in a database $\mathcal{D}$ is also an answer of the following query, and vice versa:
\[\psi_k := \bigwedge_{\ell\in[k-1]} E_\ell(x_\ell,x_\bot)  \wedge \bigwedge_{i\in[k-1]} \exists y_k^{i,k} : E_i(x_i,y_k^{i,k}) \wedge E_k(x_k,y_k^{i,k}) 
,\]
since all $y_k^{i,\ell}$ with $i< k$ can be mapped to the same vertex as $x_\bot$.
Thus $\conjunctfull{\Psi_k,J}$ and $\psi_k$ are counting equivalent. Moreover, $\psi_k$ is self-join-free and does not contain isolated variables. Thus, by Lemma~\ref{lem:self-join-free-counting-cores}, it is \cminimal. Finally,  deleting $x_k$ from the Gaifman graph of $\psi_k$ yields an acyclic graph, and the same is true for the contract of $\psi_k$. Therefore, the treewidth of both $\psi_k$ and its contract are at most $2$.
\item[(C)] There is no $s\in[k]$ such that $J$ covers $s$. Then 
\[\conjunctfull{\Psi_k,J} = \bigwedge_{\ell\in[k]} E_\ell(x_\ell,x_\bot) \wedge \bigwedge_{(i,j)\in J} \exists y_k^{i,j} : E_i(x_i,y_k^{i,j}) \wedge E_j(x_j,y_k^{i,j})  .\]
Observe that any answer of $\conjunctfull{\Psi_k,J}$ in a database $\mathcal{D}$ is also an answer of the following query, and vice versa:
\[\psi_k := \bigwedge_{\ell\in[k]} E(x_\ell,x_\bot) \,,\]
since all $y_k^{i,j}$ can be mapped to the same vertex as $x_\bot$. 

Thus $\conjunctfull{\Psi_k,J}$ and $\psi_k$ are counting equivalent. Since $\psi_k$ does not contain quantified variables, it must be both its own \ccore~and its own contract. This concludes the proof of the claim since $\psi_k$ is acyclic.
\end{itemize}
\end{proof}

\paragraph*{Dropping condition (III)}

\begin{lemma}
There is a recursively enumerable and deletion-closed class $C$ of quantifier-free 
UCQs of bounded arity  such that $\conjunctclass{C}$ has unbounded treewidth but $\#\ucq(C)$ is fixed-parameter tractable.
\end{lemma}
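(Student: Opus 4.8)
The plan is to exhibit a deletion-closed class $C$ in which every union consists of a single conjunctive query carrying exactly one existentially quantified variable, chosen so that conditions (I) and (II) of Theorem~\ref{thm:main_dicho_quantifiers} hold while (III) fails because of self-joins, and so that although the contract of the combined query has large treewidth, the contract of its $\ccore$ does not. Over the binary signature $(R)$, for each $k\geq 1$ let
\[
\varphi_k(x_1,\dots,x_k,z)\;=\;\exists y\colon\;\bigwedge_{i=1}^{k}R(x_i,y)\;\wedge\;\bigwedge_{i=1}^{k}R(x_i,z),
\]
and let $C=\{\varphi_k\mid k\geq 1\}$, where $\varphi_k$ is regarded as a union with the single disjunct $\varphi_k$. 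Then $C$ is recursively enumerable and of arity $2$; it is closed under deletions, since each member is a union of one CQ and hence its only nonempty sub-union is the whole of it; and each query in $C$ has exactly one quantified variable, so $C$ satisfies (I) and (II). On the other hand $R$ occurs in $2k$ atoms of $\varphi_k$, so $C$ is not a class of unions of self-join-free conjunctive queries, i.e.\ (III) fails.

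First I would record the shape of the combined queries and their contracts. Since every member of $C$ is already a single CQ, $\conjunctclass{C}=C$; the Gaifman graph of $\varphi_k$ is the complete bipartite graph $K_{2,k}$ with parts $\{y,z\}$ and $\{x_1,\dots,x_k\}$, so every query in $\conjunctclass{C}$ has treewidth at most $2$. However, by Definition~\ref{def:contract} the contract $\contract{\varphi_k}$ is obtained from the star on $\{x_1,\dots,x_k,z\}$ centred at $z$ by adding, for the unique connected component $\{y\}$ of the quantified part, every edge between two free variables adjacent to it; as $y$ is adjacent to each of $x_1,\dots,x_k$, this makes $\{x_1,\dots,x_k\}$ a clique, so $\contract{\varphi_k}\cong K_{k+1}$ and $\contract{\conjunctclass{C}}$ has unbounded treewidth. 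In particular the fixed-parameter-tractable branch of Theorem~\ref{thm:main_dicho_quantifiers} does not apply; were condition (III) superfluous, one would be forced to conclude that $\#\ucq(C)$ is $\ccW{1}$-hard.

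Next I would show that, on the contrary, $\#\ucq(C)$ is fixed-parameter tractable, by computing $\Gamma(C)$. The crucial step is a folding: the map on $U(\mathcal{A}_{\varphi_k})$ that fixes every free variable and sends $y$ to $z$ is a homomorphism of $\mathcal{A}_{\varphi_k}$ to itself, since it carries $R(x_i,y)$ to $R(x_i,z)$, which is an atom of $\varphi_k$. By Observation~\ref{obs:cminimalequivalent} $\varphi_k$ is therefore not $\cminimal$, and deleting $y$ yields the quantifier-free query $\psi_k:=\bigwedge_{i=1}^{k}R(x_i,z)$. A direct check gives $\varphi_k\cequiv\psi_k$: for any database $\mathcal{D}$ and assignment $a$, if $a$ is an answer of $\psi_k$ then taking the witness $y\mapsto a(z)$ shows $a$ is an answer of $\varphi_k$, while conversely the atoms $R(x_i,z)$ force every answer of $\varphi_k$ to be an answer of $\psi_k$. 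As $\psi_k$ is quantifier-free it is $\cminimal$, so it is the $\ccore$ of $\varphi_k$. Because $\varphi_k$ has a single disjunct, the inclusion--exclusion expansion of Lemma~\ref{lem:UCQ_to_hombasis} has one term, giving $c_{\varphi_k}(\psi_k)=1$ and $c_{\varphi_k}(\mathcal{A},X)=0$ for every $(\mathcal{A},X)\not\cequiv\psi_k$; hence $\Gamma(C)=\{\psi_k\mid k\geq 1\}$ up to isomorphism. Each $\psi_k$ is a star, so $\tw(\psi_k)=1$, and, having no quantified variables, $\psi_k$ coincides with its own contract, so $\tw(\contract{\psi_k})=1$ as well. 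Thus $\Gamma(C)$ and $\contract{\Gamma(C)}$ have bounded treewidth, and Theorem~\ref{thm:implicit_UCQ_dicho} yields that $\#\ucq(C)$ is fixed-parameter tractable, which completes the argument that condition (III) cannot be dropped.

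The only genuinely non-routine point is the equivalence $\varphi_k\cequiv\psi_k$ --- that the single quantified variable can always be absorbed onto the image of $z$; the Gaifman graph being $K_{2,k}$, the contract being $K_{k+1}$, and the identification $\Gamma(C)=\{\psi_k\mid k\geq1\}$ (immediate once one observes that no inclusion--exclusion cancellation can occur for a single-disjunct union) are all short calculations. A point worth making explicit in the write-up is why the same trick fails under condition (III): for a self-join-free family, a query with a quantified variable adjacent to $k$ free variables is, by Lemma~\ref{lem:self-join-free-counting-cores}, already its own $\ccore$, so the high-treewidth contract would survive into $\Gamma(C)$ and $\#\ucq(C)$ would indeed be $\ccW{1}$-hard; it is precisely the self-joins that enable the folding that destroys this contract.
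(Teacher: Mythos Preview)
Your construction is essentially identical to the paper's (the paper uses $\psi_k(x_1,\dots,x_k,x_\bot)=\exists y\colon\bigwedge_i E(x_i,x_\bot)\wedge E(x_i,y)$, which is your $\varphi_k$ with $z$ renamed to $x_\bot$), and your analysis of the $\ccore$, the contract, and $\Gamma(C)$ matches the paper's argument. Note that, just as in the paper's own proof, what you actually establish is that $\contract{\conjunctclass{C}}$ has unbounded treewidth while $\conjunctclass{C}$ itself has treewidth at most~$2$, and the queries carry one quantified variable rather than none; this is the statement genuinely needed to show that condition~(III) cannot be dropped from Theorem~\ref{thm:main_dicho_quantifiers}, and the literal lemma wording appears to be a slip.
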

\begin{proof}

The statement of the lemma guarantees that $C$ satisfies items (I) and (II) of Theorem~\ref{thm:main_dicho_quantifiers}.
We show an even stronger claim by requiring $C$ to be a recursively enumerable class of quantifier-free CQs (instead of UCQs) of bounded 
arity such that   $\conjunctclass{C}$ has unbounded treewidth but $\#\ucq(C)$ is polynomial-time solvable. Note that each conjunctive query is a (trivial) union of conjunctive queries; moreover, this also means that $C$ is deletion-closed. 
    
    For each $k\geq 1$, we define a conjunctive query $\psi_k$ over the signature of graphs as follows:
    \[ \psi_k(x_1,\dots,x_k,x_\bot) = \exists y: \bigwedge_{i\in[k]} E(x_i,x_\bot) \wedge E(x_i,y)  \,.\]
    The query $\psi_k$ has only one quantified variable. Moreover, the contract of $\psi_k$ is a $k$-clique and thus has treewidth $k-1$. However, $\psi_k$ is clearly \cequivalent~to the query 
    \[ \psi'_k = \bigwedge_{i\in[k]} E(x_i,x_\bot) \,, \]
    which is its own contract (since there are no quantified variables), and which is of treewidth $1$. Thus, for $C$ being the class of all $\psi_k$, we find that $\contract{\conjunctclass{C}}$ has unbounded treewidth, but, according to Theorem~\ref{thm:cq_classification}, the problem $\#\ucq(C)$ is solvable in polynomial time.
\end{proof}

\end{document}